\begin{document}

    % First page  
    \newdateformat{monthyeardate}{\monthname[\THEMONTH], \THEYEAR} % date format

    \begin{center}
        \textbf{Multiple Landau level filling for a large magnetic field limit of 2D fermions} \\[0.5cm]
        %\monthyeardate\today \\[0.5cm]
        Denis \textsc{Périce} \\
        Unité de mathématiques pures et appliquées \\ \'Ecole normale supérieure de lyon \\ Lyon - France \\[1cm]
    \end{center}
    \textbf{Abstract:} \\[0.2cm]
        Motivated by the quantum hall effect, we study $N$ two dimensional interacting fermions in a large magnetic field limit. We work in a bounded domain, ensuring finite degeneracy of the Landau levels. In our regime, several levels are fully filled and inert: the density in these levels is constant. We derive a limiting mean-field and semi classical description of the physics in the last, partially filled Landau level.

    %\tableofcontents

    \section{Context and result}

\subsection{Model}

    We consider a system of $N$ interacting fermionic particles in two dimensions. They are placed under a homogeneous magnetic field perpendicular to the domain. In this context the kinetic energy of the particles is quantized into discrete energy levels called Landau levels, separated by a finite energy gap. This problem has initially been studied by Lieb Solovej and Yngvason in \cite{LSY95}, \cite{LS94}, \cite{LSY94}, \cite{LSY94II}, \cite{LSY92} and more recently by Fournais, Lewin and Madsen in \cite{FM}, \cite{FLS}.

    Our goal is to study the mean field and semi-classical limit under high magnetic field so the Landau level quantization plays an important role. This setup is related to that of \cite{LSY95} where three regimes are studied. In the first one, the energy gap is small with respect to the potential contributions in the energy so particles occupy all Landau levels and a standard Thomas–Fermi model is obtained in the limit. In the second one, the energy gap is comparable to the potential energy terms, particles optimise both their Landau level and their position in the potentials and the limit is a magnetic Thomas-Fermi model. For the last scaling, the gap is large compared to the potential energies so all particles occupy the lowest Landau level and the limit is described with a classical continuum electrostatic theory in this level. We want to deal with the intermediate situation where only a finite number of Landau levels are completely filled. Precisely, our result is a limit where the $q$ first Landau Level are fully filled, the next Landau level is partially filled with filling ratio $r < 1$ and all higher Landau levels are empty. We also provide a model for the physics in the partially filled Landau level. This setup is physically motivated by the quantum Hall effect which mostly takes place in a partially filled Landau level while lower Landau levels are filled and inert, and higher levels are empty, see \cite{Jain}.

   In this perspective we want to fix the limit ratio of the number of particles to the degeneracy of Landau levels. On the whole space $\mathbb{R}^2$ this degeneracy is infinite. To ensure finiteness of the degeneracy of the Landau levels (see \cref{subsec:explicit diag}), we work on a bounded domain. For simplicity, we would like to consider a torus with periodic boundary conditions. But, in the presence of a magnetic field the periodic boundary conditions must be modified. This is a well known issue, for example see \cite[Section 3.9]{Jain}. As explained in  \cref{subsec:tau}, we define magnetic translation operators to ensure commutation with the magnetic momentum. These magnetic translations operators define the so called magnetic periodic boundary conditions.

    \begin{notation}[title=Model, label=model]
        We work on the domain $\Omega \coloneq [0,L]^2$ of fixed size $L > 0$. The one body kinetic energy operator, also called magnetic Laplacian, is
        \begin{align*}
            \mathscr{L}_{\hbar, b} \coloneq \prth{i\hbar \nabla + bA}^2 \yesnumber{L hbar bA}
        \end{align*}
        We work in the Coulomb gauge:
        \begin{align*}
            \nabla \cdot A = 0
        \end{align*}
        where $A\in C^\infty\prth{\mathbb{R}^2, \mathbb{R}^2}$ is the vector potential. Identifying $\mathbb{R}^2$ with $\mathbb{R}^2\times \{0\} \subset \mathbb{R}^3$, we assume
        \begin{align*}
            \nabla \wedge A = (0,0,1) \yesnumber{B}
        \end{align*}
        $b$ is the magnetic field amplitude with associated magnetic length 
        \begin{align*}
            l_b \coloneq\sqrt{\dfrac{\hbar}{b}}
        \end{align*}
        We identify $\mathbb{R}^2$ with $\mathbb{C}$ and use complex notation for the variables $(x, y) \in \mathbb{R}^2$ namely
        \begin{align*}
            (x, y) = x+iy \in \mathbb{C}    
        \end{align*}
        Let $z_0\in \mathbb{C}$, by \cref{eq:B}
        \begin{align*}
            \nabla \wedge(A - A(\bullet  - z_0)) = 0
        \end{align*}
        so we can choose $\varphi_{z_0} \in C^\infty\prth{\mathbb{R}^2, \mathbb{R}}$ such that 
        \begin{align*}
            A - A(\bullet - z_0) \eqcolon l_b^2 \nabla \varphi_{z_0} \yesnumber{magnetic phase}
        \end{align*}
        For some usual expressions see \cref{eq:sym gauge} and \cref{eq:Lan gauge functions}. As detailed in \cref{subsec:tau}, for $\psi\in L^2(\Omega)$ the magnetic periodic boundary conditions are
        \begin{align*}
            \forall t \in [0, L], \, \syst{\psi(L + it) = e^{i\varphi_L(L + it)}\psi (it)\\
            \psi(t + iL) = e^{i\varphi_{iL}(t + iL)}\psi(t)} \yesnumber{magn per}
        \end{align*}
        and the domain of the magnetic Laplacian is
        \begin{align*}
            \text{Dom}\prth{\mathscr{L}_{\hbar,bA}} \coloneq \sett{ \psi \in H^2(\Omega) \text{ such that \cref{eq:magn per} holds}} \yesnumber{Dom L}
        \end{align*}
        Now, the $N$-body Hamiltonian is
        \begin{align*}
            \mathscr{H}_{N} \coloneq \sum_{j=1}^N \left(\prth{i\hbar \nabla_{x_j} + bA(x_j) } ^2 + V(x_j) \right) + \dfrac{2}{N-1}\sum_{1\leq j < k \leq N} w(x_j - x_k) \yesnumber{H_N}
        \end{align*}
        acting on the space of $N$-body fermionic states
        \begin{align*}
            L^2_- \prth{\Omega^{N}} \coloneq \bigwedge ^N L^2 (\Omega).
        \end{align*}
        We denote $\mathbb{T} \coloneq \mathbb{R}^2/L\mathbb{Z}^2$. $V \in L^2(\mathbb{T})$ is the external potential and $w \in L^2(\mathbb{T})$ the interaction potential assumed to be radial for the metric on the torus:
        \begin{align*}
            w(x - y) \eqcolon \widetilde{w}(d(x, y)) \text{ with } d(x, y) \coloneq \minm\displaylimits_{r \in L\mathbb{Z}^2}\abs{x - y + r}
        \end{align*}
        The domain of the $N$-body Hamiltonian \cref{eq:H_N} is
        \begin{align*}
            \text{Dom}\prth{\mathscr{H}_N} \coloneq \bigwedge^N \text{Dom}\prth{\mathscr{L}_{\hbar,b}}
        \end{align*}
        We define the $N$-body ground state energy
        \begin{align*}
            E_N^0 \coloneq \text{inf}\sett{\braket{\psi_N | \mathscr{H}_N \psi_N}, \psi_N \in \text{Dom}(\mathscr{H}_N) \text{ such that } \norm{\psi_N}_{L^2}=1} \yesnumber{E_N^0}
        \end{align*}
    \end{notation}
    
    There are $N(N-1)/2$ interacting pairs of fermions. Thus, we divide the interactions term by $(N-1)/2$ so that the order of the contribution coming from interactions is $\mathcal{O}(N)$ and comparable to the contribution coming from the external potential.
    
    As we will see in \cref{subsec:LL quant}, the self adjointness of the magnetic Laplacian and the existence of its eigenvectors require the magnetic field flux $bL^2$ going through the domain to be quantized in multiples of $2\pi\hbar$:
    \begin{align*}
        \exists d \subset \mathbb{N} \text{ such that } 2\pi d = \dfrac{b}{\hbar}L^2  = \dfrac{L^2}{l_b^2}
    \end{align*}
    In \cref{subsec:explicit diag} we will explain that $d$ is the degeneracy of Landau levels. Now, we can fix the number of filled Landau levels by choosing a scaling for which the ratio $N/d$ is fixed.
    
    \begin{notation}[title=Scaling, label=scaling]
        We take Planck's constant to be a sequence $\hbar \coloneq \prth{\hbar_N}_{N\in\mathbb{N}}$ such that
        \begin{align*}
            N^{-\frac{1}{2}} \ll \hbar \ll N^{-\frac{1}{4}} \yesnumber{hbar}
        \end{align*}
        Let $q \in \mathbb{N}, r \in [0,1), b \coloneq (b_N)_{N \in \mathbb{N}}$ be such that
        \begin{align*}
            d \coloneq \dfrac{L^2}{2\pi l_b^2} \subset \mathbb{N^*} \yesnumber{flux quant}
        \end{align*}
        and
        \begin{align*}
            \frac{N}{d} \eq_{N\to\infty} q + r + o\prth{\frac{1}{\hbar b}} \yesnumber{speed of LL conv}
        \end{align*}
        where $E^* \coloneq E\backslash{\sett{0}}$ for $E \subset \mathbb{R}$.
    \end{notation}
    
    $q$ will give the number of fully filled Landau levels and $r$ the filling ratio of the $(q+1)^{th}$ Landau level. Note that the lowest Landau level index is $0$ in our convention. With this notation,
    \begin{align*}
        \dfrac{N}{d} = \dfrac{2\pi l_b^2 N}{L^2} \limit\displaylimits_{N\to\infty} q+r \text{~ and ~} \dfrac{1}{l_b^2} = \dfrac{b}{\hbar} \eqvl\displaylimits_{N\to\infty} \dfrac{2\pi N}{(q+r)L^2} \yesnumber{l_b scaling}
    \end{align*}
    With this scaling, we find that the order of the magnetic field is $b = \mathcal{O}(\hbar N)$. It is known \cref{eq:H to N}, that the order or the kinetic energy is
    \begin{align*}
        \hbar b = \mathcal{O}(\hbar^2 N) \yesnumber{O hbar b} \gg 1
    \end{align*}
    The kinetic energy contribution needs to be of leading order compared to the potential terms if we want to impose the number of filled Landau level and this is true if and only if
    \begin{align*}
        \hbar^2 N \gg 1
    \end{align*}
    hence the lower bound in \cref{eq:hbar}. The upper bound $\hbar \ll N^{-\frac{1}{4}}$ is necessary in our approach to control some error terms coming from the kinetic energy. This is also the reason why we impose the convergence rate in \cref{eq:speed of LL conv}. This scaling is a semi-classical limit because Planck's constant goes to $0$.
    
\subsection{Semi-classical limit model}
        
    In the limit, we obtain a semi-classical model where the energy no longer depends on the wave-function but on the density in phase space. This comes with a non linearity in the interaction term. The phase space is $\mathbb{N}\times\Omega$. This means that particles have two degrees of freedom: the first one is $n\in \mathbb{N}$ the quantum number representing the Landau Level index and $x\in \Omega$ representing the position of particles in space. In classical mechanics, one can think of $x$ as the center of the cyclotron orbit of the particles and $n$ as the index of the quantized angular velocity of the cyclotron orbit. This model is semi-classical in the sense that the Pauli principle still holds as a bound on the density.
    
    \begin{notation}[title=Semi-classical functional, label=sc functional]
        We consider the measure on phase space
        \begin{align*}
            \eta \coloneq \prth{ \sum_{n\in\mathbb{N}}\delta_n }\otimes \lambda_\Omega 
        \end{align*}
        where $\lambda_\Omega $ is the Lebesgue measure on $\Omega$. For $m \in L^1\prth{\mathbb{N}\times  \Omega, \mathbb{R}_+}$, define
        \begin{align*}
            \mathcal{E}_{sc, \hbar b}\sbra{m} \coloneq \intr_{\mathbb{N}\times\Omega} E_n m(n, R)d\eta(n, R) + \intr_{\mathbb{N}\times\Omega}  V m d\eta + \intr_{(\mathbb{N}\times\Omega)^2}  w m^{\otimes 2} d\eta^{\otimes 2} \yesnumber{SC energy}
        \end{align*}
        where, as we will see in \cref{sec:II quantization},
        \begin{align*}
            E_n \coloneq 2\hbar b\left(n + \dfrac{1}{2}\right) \yesnumber{E_n}    
        \end{align*}
        is the energy of the $n^{th}$ Landau level. Define the semi-classical domain
        \begin{align*}
            \mathcal{D}_{sc} \coloneq \sett{m \in L^1\prth{\mathbb{N}\times  \Omega} \text{ such that } \intr_{\mathbb{N}\times \Omega}m d\eta =1 \text{ and } 0\le m\le \dfrac{1}{(q+r)L^2}} \yesnumber{sc domain}
        \end{align*}
        and the semi-classical ground state energy
        \begin{align*}
            E_{sc, \hbar b}^0 \coloneq \inf_{m\in\mathcal{D}_{sc}} \mathcal{E}_{sc, \hbar b}\sbra{m}
        \end{align*}
    \end{notation}

    We then define the model for the partially filled Landau level that only depends on the density.
    
    \begin{notation}[title=Electrostatic model for the partially filled level, label=sc density]
        Define
        \begin{align*}
            \mathcal{E}_{qLL}[\rho] \coloneq  \intr_{\Omega} V\rho  + \iint\displaylimits_{\Omega^2} w(x-y)\rho(x)\rho(y)dxdy \yesnumber{qLL}
        \end{align*}
        with domain
        \begin{align*}
            \mathcal{D}_{qLL} \coloneq \sett{\rho \in L^1(\Omega) \text{ such that } \intr_\Omega \rho = \dfrac{r}{q+r} \text{ and }0 \le \rho \le \dfrac{1}{(q+r)L^2}} \yesnumber{rho}
        \end{align*} 
        The associated ground state energy is
        \begin{align*}
            E_{qLL}^0 \coloneq \inf_{ \mathcal{D}_{qLL}}\mathcal{E}_{qLL}
        \end{align*}
        We define the following energies:
        \begin{align*}
            E^{q,r} \coloneq \dfrac{q^2+2qr+r}{q+r}  \qquad
            E_{V}^{q,r} \coloneq \dfrac{q}{(q+r)L^2}\int\displaylimits_\Omega  V \qquad
            E_{w}^{q,r} \coloneq \dfrac{q^2 + 2qr}{(q+r)^2 L^4}\int\displaylimits_{\Omega^2} w
        \end{align*}
            Let $\rho\in\mathcal{D}_{qLL}$, define
        \begin{align*}
            m_\rho(n, x) \coloneq \mathbb{1}_{n < q}\dfrac{1}{L^2(q+r)} + \mathbb{1}_{n=q}\rho(x) \yesnumber{saturated low LL}
        \end{align*}
    \end{notation}
    
    $m_\rho$ is a phase space density constructed with the $qLL$ lowest Landau levels saturating the Pauli principle in \cref{eq:sc domain} and \cref{eq:rho} and with the density $\rho$ in the partially filled Landau level. The ratio of particles in the partially filled Landau level is 
    \begin{align*}
        \dfrac{r}{q+r}
    \end{align*}
    This corresponds to the normalization constraint in \cref{eq:rho}. With this we see that the Pauli principle is indeed the correct bound on the densities to have
    \begin{align*}
        \intr_{\mathbb{N}\times\Omega}m_\rho d\eta = 1
    \end{align*}
        A direct computation shows that
    \begin{align*}
        \mathcal{E}_{sc, \hbar b}\sbra{m_\rho} =  \hbar b E^{q,r} + E_{V}^{q,r} + E_{w}^{q,r} + \mathcal{E}_{qLL} \sbra{\rho} \yesnumber{sat LLL}
    \end{align*}
    where
    \begin{itemize}
        \item $\hbar b E^{q,r}$ is the kinetic energy contribution from the $q+1$ lowest Landau levels
        \item $E_{V}^{q,r}$ is the external potential energy contribution from the $q$ lowest Landau levels
        \item $E_{w}^{q,r}$ is the energy contribution from interactions between the $q$ lowest Landau levels and the interactions between the $q$ lowest Landau levels and the $\prth{q+1}^{th}$ Landau level. In other words, it contains all the interactions except those inside the partially filled level.
    \end{itemize}

    If $w$ has a non-negative fourier transform, this functional is convex and thus has a unique minimizer. For example, with $V=0$, the minimizer is
    \begin{align*}
        \rho_0 \coloneq \dfrac{r}{(q+r)L^2}
    \end{align*}

\subsection{Main results}

    We can now state our main result:
    
    \begin{theorem}[title=Mean field limit with magnetic periodic conditions, label=main]
        With the definitions introduced in \cref{not:model}, \cref{not:scaling}, \cref{not:sc functional}, \cref{not:sc density},
        \begin{align*}
            \dfrac{E_N^0}{N} \eq_{N\to\infty} \hbar b E^{q,r} + E_{V}^{q,r} + E_{w}^{q,r} + E_{qLL}^0 + o(1)
        \end{align*}
    \end{theorem}
    
    This means that in the limit, the first order in the quantum many body energy per particle is the trivial energy $\hbar b E^{q,r}$. Then for terms of order 1, the only non trivial contribution to the energy are the external potential term and the interaction term inside the partially filled Landau level. The lower Landau levels are totally filled and therefore their contribution to the energy is constant. The interaction of the partially filled level with all other level will also be a constant. For higher Landau levels, their contribution to the energy is null because they are totally empty.
    
    The regularity assumptions on the potentials are not minimal, we expect this result to hold true if potentials have an $L^1$ positive part and an $L^2$ negative part. Under these assumptions, one needs to prove that the particles will not concentrate in the $L^1$ positive singularities of the potentials. This has been done in \cite{LSY95} for the repulsive $1/\abs{x}$ Coulomb potential. We will not deal with this issue in this paper.
    
    The number of variables of the densities is going to infinity in our limit, thus we look at reduced densities.

    \begin{notation}[title=Reduced densities, label=reduced densities]
        We denote $\mathcal{L}^p$ the set of $p$-Schatten class operators along with $\norm{\bullet}_{\mathcal{L}^p}$ the $p$-Schatten norm. Let $\gamma_N \in \mathcal{L}^1\prth{L^2_- \prth{\Omega^N}}$ a positive operator of trace 1. We call such an operator an $N$-body density matrix. We will denote in the same way operators and their integral kernel. We introduce compact notation for lists:
        \begin{align*}
            1:n \coloneq (1, \dots, n) \qquad
            x_{1:n} \coloneq (x_1, \dots, x_n) 
        \end{align*}
        The density associated to $\gamma_N$ is
        \begin{align*}
            \rho_{\gamma_N}(x_{1:N}) \coloneq \gamma_N(x_{1:N}, x_{1:N})
        \end{align*}
        Let $\text{Tr}_I$ be the partial trace that traces out coordinates in $I \subset \intint{1, N}$ of $L^2(\Omega)^{\otimes N}$, it is defined as the linear operator on $\mathcal{L}^1\prth{L^2\prth{\Omega^N}}$ satisfying
        \begin{align*}
            \forall A_{1:N} \in \mathcal{L}^1\prth{L^2(\Omega)}, \text{Tr}_I\sbra{\bigotimes_{i=1}^N A_i} \coloneq \Tr{\bigotimes_{i\in I} A_i} \bigotimes_{i \notin I}A_i
        \end{align*}
        Let $1\le k < N$, we define the $k^{th}$ reduced density matrix associated to $\gamma_N$ by
        \begin{align*}
            \gamma_N^{(k)} = \text{Tr}_{k+1:N}\sbra{\gamma_N} \yesnumber{red density}
        \end{align*}
        with the convention that $\gamma_N^{(N)} \coloneq \gamma_N$. For an $N$ variables symmetric density $\rho_N$ we denote $\rho_N^{(k)}$ its $k^{th}$ marginal. 
        If one starts from $\psi_N \in L^2_-\prth{\Omega^N}$ we use the notation
        \begin{align*}
            &\gamma_{\psi_N} \coloneq \ket{\psi_N}\bra{\psi_N} \\
            &\rho_{\psi_N} \coloneq \rho_{\gamma_{\psi_N}} = \abs{\psi_N}^2 \yesnumber{lists}
        \end{align*}
    \end{notation}
    
    Note that with this notation
    \begin{align*}
        \rho_{\gamma_N^{(k)}} = \rho_{\gamma_N}^{(k)} \yesnumber{consistent not}
    \end{align*}
        
    The domain $\Omega$ is a locally compact metric space, the set of Radon measures on it is the dual of continuous and compactly supported functions
    \begin{align*}
        \mathcal{M}(\Omega) = C_c^0(\Omega)^*
    \end{align*}
    In our case, since $\Omega$ is compact this is just the dual of continuous functions. We denote $\mathcal{M}_+(\Omega)$ the set of positive Radon measures. Let $\mathcal{P}(\Omega)$ be the set of probabilities on $\Omega$:
    \begin{align*}
        \mathcal{P}(\Omega) \coloneq \sett{\mu \in \mathcal{M}_+(\Omega) \text{ such that } \mu(\Omega) = 1}
    \end{align*}
    On this space the weak star topology is metrizable using a Wasserstein metric \cite[Section 7.12]{Villani}. Moreover $\mathcal{P}(\Omega)$ is also locally compact \cite[section 17.E]{Kechris}, thus it is possible to iterate and define the space of probability measures on $\mathcal{P}(\Omega)$ namely $\mathcal{P}\prth{\mathcal{P}(\Omega)}$.
    
    Now, we have the following theorem for the convergence of reduced densities:
    
    \begin{theorem}[title=Densities convergence with magnetic periodic conditions, label=density conv]
        With the definitions introduced in \cref{not:model}, \cref{not:scaling}, \cref{not:sc functional}, \cref{not:sc density}, \cref{not:reduced densities}, if $(\psi_N)$ is a sequence of minimizers of \cref{eq:E_N^0}, then $\exists \mu \in \mathcal{P}\prth{\mathcal{D}_{qLL}}\, \text{such that} \, $
        \begin{itemize}
            \item $\mu$ only charges minimizers of the limit energy functional \cref{eq:qLL}
            \item  $\forall k \in \mathbb{N}^*$, in the sense of Radon measures
                \begin{align*}
                    \rho_{\psi_N}^{(k)} &\wslim\displaylimits_{N\to\infty} \intr_{\mathcal{D}_{qLL}} \prth{\dfrac{q}{L^2(q+r)} + \rho}^{\otimes k}  d\mu (\rho) \yesnumber{th conv rho}
                \end{align*}
        \end{itemize}
    \end{theorem}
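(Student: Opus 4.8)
\emph{Sketch of proof of \cref{density conv}.}
The plan is to feed the energy asymptotics of \cref{main} into a classical (Hewitt--Savage) de Finetti argument for the diagonal reduced densities, after reducing to the ``condensed'' picture in which the $q$ lowest Landau levels are inert and the many-body minimiser is quasi-free. I argue along a subsequence of $N$ (the $\mu$-mixture describing the limit is what the statement asks for). Write $\Pi_n$ for the projection onto the $n$-th Landau level, $\Pi_{<q}=\sum_{n<q}\Pi_n$, $\Pi_{\le q}=\Pi_{<q}+\Pi_q$, $\Gamma_N=N\gamma^{(1)}_{\psi_N}$ (so $0\le\Gamma_N\le\mathbb 1$, $\operatorname{Tr}\Gamma_N=N$), $c\coloneq\tfrac{q}{(q+r)L^2}$, $n_N\coloneq N-qd$. \textbf{Step 1 (condensation).} First I would extract from the proof of \cref{main} the a priori localisation of a minimising sequence. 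Since $\hbar b\to\infty$, the kinetic energy of a normalised $N$-fermion state is, by the Pauli principle and the bathtub principle applied to the Landau energies $E_n=2\hbar b(n+\tfrac12)$ of degeneracy $d$, at least $N\hbar b E^{q,r}+o(N)$ (filling the lowest bands, together with the rate in \cref{eq:speed of LL conv}, reproduces the coefficient \cref{eq:Eqr kin}); matching with \cref{main} forces (i) the band occupations to equal $(\underbrace{d,\dots,d}_{q},n_N,0,\dots)$ up to $o(N)$, so $\operatorname{Tr}[\Pi_{<q}(\mathbb 1-\Gamma_N)\Pi_{<q}]=o(N)$ (few holes below level $q$) and $\operatorname{Tr}[(\mathbb 1-\Pi_{\le q})\Gamma_N]=o(N)$ (few particles above level $q$), and (ii), by the standard control of the mean-field scaling (the $\tfrac{2}{N-1}$ normalisation), the minimiser to be quasi-free up to negligible corrections, whence its reduced densities obey a Wick/determinant formula and in particular $\rho^{(k)}_{\psi_N}\le C_k$ uniformly. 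I take (i), (ii) as input from the machinery behind \cref{main}.

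\textbf{Step 2 (decoupling the inert shell; binomial formula).} Using (i), at the level of $k$-body reduced densities and up to $o(1)$ one may replace $\psi_N$ by the factorised state $\Psi_{\mathrm{low}}\wedge\Psi_q$, where $\Psi_{\mathrm{low}}$ is the Slater determinant of the $qd$ modes spanning $\Pi_{<q}$ and $\Psi_q\in\bigwedge^{n_N}\Pi_q$. Carrying out this replacement at the reduced-density level — not in $L^2$, where it fails because the shell is only nearly full — is the step I expect to be the main obstacle; it should follow from (i) via a Fock-space localisation of the number operators $N_{<q},N_q,N_{>q}$. Granting it, the standard combinatorics of the reduced densities of a wedge product gives, weakly-$\ast$,
\[
  \rho^{(k)}_{\psi_N}\;=\;\sum_{j=0}^{k}\binom{k}{j}\Bigl(\tfrac{qd}{N}\Bigr)^{k-j}\Bigl(\tfrac{n_N}{N}\Bigr)^{j}\,\mathrm{Sym}\!\left[\rho^{(k-j)}_{\mathrm{low}}\otimes m_N^{(j)}\right]+o(1),
\]
where $m_N^{(j)}$ is the $j$-marginal of $|\Psi_q|^2$ and $\rho^{(l)}_{\mathrm{low}}$ that of $|\Psi_{\mathrm{low}}|^2$. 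Moreover $\rho^{(l)}_{\mathrm{low}}\wslim(L^{-2})^{\otimes l}$: its diagonal is exactly $L^{-2}$, while the exchange corrections of size $|\Pi_{<q}(x,y)|^2$ carry total mass $\le(qd-1)^{-1}\to0$ and thus disappear weakly.

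\textbf{Step 3 (de Finetti in the active level).} Now $|\Psi_q|^2$ is a symmetric probability density on $\Omega^{n_N}$ with $n_N=rd+o(N)\to\infty$. After a diagonal extraction, $m_N^{(j)}\wslim\mu_j$ for all $j$; the family $(\mu_j)$ is consistent and exchangeable, so Hewitt--Savage produces $\theta\in\mathcal P(\mathcal P(\Omega))$ with $\mu_j=\int\widetilde\sigma^{\otimes j}\,d\theta(\widetilde\sigma)$. By (ii), $\Psi_q$ being quasi-free and confined to $\Pi_q$, the determinant formula with Hadamard's inequality (and $\sum_i|u_i|^2\le\rho_{\Pi_q}=d/L^2$) gives $m_N^{(j)}\le(rL^2)^{-j}(1+o(1))$; feeding this power bound through the de Finetti representation (test against averages over shrinking balls, then Lebesgue differentiation) forces $\widetilde\sigma\le(rL^2)^{-1}$ for $\theta$-a.e.\ $\widetilde\sigma$. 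Hence $\rho\coloneq\tfrac{r}{q+r}\widetilde\sigma$ lies in $\mathcal D_{qLL}$ (\cref{eq:rho}), and with $\mu\coloneq(\widetilde\sigma\mapsto\tfrac{r}{q+r}\widetilde\sigma)_\#\theta\in\mathcal P(\mathcal D_{qLL})$, substituting $\tfrac{qd}{N}\to\tfrac{q}{q+r}$, $\tfrac{n_N}{N}\to\tfrac{r}{q+r}$ and the limits of $\rho^{(l)}_{\mathrm{low}}$, $m_N^{(j)}$ into Step 2 collapses the binomial sum to
\[
  \rho^{(k)}_{\psi_N}\;\wslim\;\int_{\mathcal D_{qLL}}\bigl(c+\rho\bigr)^{\otimes k}\,d\mu(\rho),
\]
which is the convergence asserted in \cref{density conv}.

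\textbf{Step 4 ($\mu$ charges only minimisers).} It remains to pin down the support of $\mu$. Writing the potential part of the energy as $\tfrac1N\bigl\langle\psi_N\bigm|\sum_jV(x_j)+\tfrac{2}{N-1}\sum_{j<k}w(x_j-x_k)\bigm|\psi_N\bigr\rangle=\int_\Omega V\rho^{(1)}_{\psi_N}+\iint_{\Omega^2}w\,\rho^{(2)}_{\psi_N}$, and using Step 3, the uniform bounds of (ii), and density of $C^0(\mathbb T)$ in $L^2(\mathbb T)$, this converges to $\int_{\mathcal D_{qLL}}\!\bigl[\int_\Omega V(c+\rho)+\iint_{\Omega^2}w\,(c+\rho)^{\otimes2}\bigr]d\mu(\rho)$; expanding the square and using that $w$ is even with $\int_\Omega w=\int_{\mathbb T}w$, then comparing with \cref{eq:Eqr V} and \cref{eq:Eqr w} (this is exactly the computation behind \cref{prop:sat LLL}), the limit equals $E^{q,r}_V+E^{q,r}_w+\int_{\mathcal D_{qLL}}\mathcal E_{qLL}[\rho]\,d\mu(\rho)$. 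On the other hand, subtracting from \cref{main} the kinetic part, which Step 1 pins to $\hbar b E^{q,r}+o(1)$ per particle, forces the same potential part to converge to $E^{q,r}_V+E^{q,r}_w+E_{qLL}^0$. Therefore $\int_{\mathcal D_{qLL}}\bigl(\mathcal E_{qLL}[\rho]-E_{qLL}^0\bigr)d\mu(\rho)=0$ with non-negative integrand, so $\mu$ is supported on minimisers of \cref{eq:qLL}, finishing the proof.
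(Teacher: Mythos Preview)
Your approach diverges substantially from the paper's, and the divergence occurs precisely at the steps you flag as delicate. The paper never works with the state $\psi_N$ directly nor attempts any factorisation $\Psi_{\mathrm{low}}\wedge\Psi_q$; instead it passes immediately to \emph{Husimi functions} $m_{\psi_N}^{(k)}$ on the phase space $(\mathbb N\times\Omega)^k$ (\cref{not:Husimi}), proves their weak-$\ast$ compactness in $L^\infty$, applies Hewitt--Savage on $(\mathbb N\times\Omega)^{\mathbb N}$ (not on $\Omega^{\mathbb N}$), and only \emph{a posteriori} shows that the de~Finetti measure $\mathcal P_M$ is concentrated on densities of the form $m_\rho$ (\cref{prop:DF}). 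The Landau-level filling is established for the limit object $M^{(1)}$ by an energy argument (\cref{prop:limit densities & H}~b)), never for $\psi_N$ at the level of states. The link back to $\rho_{\psi_N}^{(k)}$ is the elementary identity $\rho_{m_{\psi_N}}^{(k)}=(g_\lambda^2)^{\otimes k}\ast\rho_{\psi_N}^{(k)}$ together with $g_\lambda^2\rightharpoonup\delta$.

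The gaps in your sketch are therefore real. Claim~(ii) in Step~1 --- that the minimiser is quasi-free up to negligible corrections and that $\rho_{\psi_N}^{(k)}\le C_k$ uniformly --- is \emph{not} part of the machinery behind \cref{th:main}: the paper controls $\rho_{\psi_N}^{(1)}$ only in $L^2$ via Lieb--Thirring (\cref{th:kin energy ineq}) and never establishes uniform $L^\infty$ bounds on higher marginals, nor any Wick rule for the true minimiser. Likewise, Step~2 (replacing $\psi_N$ by $\Psi_{\mathrm{low}}\wedge\Psi_q$ at the level of $k$-body densities) is a genuine structural statement about the minimiser that does not follow from the occupation estimates~(i) alone; few holes/excitations in trace norm does not yield a wedge factorisation of reduced densities without further input. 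The paper's Husimi-function route is designed precisely to bypass both of these obstacles: the Pauli bound transfers cleanly to $m_{\psi_N}^{(k)}$ (\cref{eq:Husimi k bound}), compactness is automatic, and the de~Finetti decomposition on phase space absorbs all correlations without any quasi-free assumption.
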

    
    The density of particles converge to a convex combination of densities of the form
    \begin{align*}
        \dfrac{q}{L^2(q+r)} + \rho    
    \end{align*}
    From the Pauli principle in \cref{eq:rho} we see that the constant term in this expression corresponds to particles in the $q$ lowest and fully filled Landau levels. Then the density of particles in the partially filled Landau level is given by a minimizer $\rho$ of the limit functional \cref{eq:qLL}.

\subsection{Scaling}

    Another way to obtain the scaling in \cref{not:scaling} is to observe that we have two characteristic length-scales:
    \begin{itemize}
        \item $\dfrac{L}{\sqrt{N}}$, measuring the mean distance between particles
        \item $l_b$, the magnetic length, which, in classical mechanics corresponds to the radius of a cyclotron orbit. Due to the Pauli principle, $l_b$ will be the order of the minimum distance between particles inside a Landau level. More precisely the Pauli principle takes the form of an upper bound on the density in phase space.
    \end{itemize}
    The square ratio of these length is
    \begin{align*}
        \dfrac{L^2}{Nl_b^2} =  \dfrac{b L^2}{\hbar N} \yesnumber{carac length ratio}
    \end{align*}
    If this ratio goes to zero, the mean distance between particles is very small compared to the minimal length-scale between two particles in a fixed Landau level. This implies that the particles must fill many Landau levels and this corresponds to the scaling in \cite{LSY95} where the energy gap between Landau level is small compared to the potential terms.
    
    If this ratio goes to infinity, the mean distance between particles is very large compared to the minimal length-scale between two particles in a fixed Landau Level. As a consequence, all particles can be placed in the lowest Landau level and this corresponds to the regime in \cite{LSY95} where particles only occupy the lowest Landau level and do not feel the Pauli principle.
    
    In the limit we study, we see from \cref{eq:l_b scaling} that the ratio \cref{eq:carac length ratio} has been fixed to be
    \begin{align*}
        \dfrac{L^2}{Nl_b^2} \limit\displaylimits_{N\to\infty} \dfrac{2\pi}{q+r} \yesnumber{length ratio}
    \end{align*}
    in order to fill a finite number of Landau levels. In our limit we fixed $L$ and took $l_b$ going to zero, but one can also ensure \cref{eq:length ratio} by fixing a magnetic length $\widetilde{l_b} > 0$ and taking a domain length $\widetilde{L}$ going to infinity as
    \begin{align*}
        \widetilde{L} \coloneq \dfrac{\widetilde{l_b}}{l_b} L \yesnumber{scaling L big}
    \end{align*}
    In this limit the density of particles in the domain $\Omega$ is fixed:
    \begin{align*}
        \dfrac{\widetilde{L}^2}{N} \limit\displaylimits_{N\to\infty} \widetilde{l_b}^2 \dfrac{2\pi}{q+r} \yesnumber{thermo limit}
    \end{align*}
    Those limits are equivalent in the sense that the $N$-body Hamiltonian \cref{eq:H_N} is unitarily equivalent to 
    \begin{align*}
        \dfrac{1}{\tau} \mathscr{H}_{N, \tau} \coloneq \dfrac{1}{\tau}\prth{\sum_{j=1}^N \prth{\prth{i\hbar_\tau \nabla_{j} +b_\tau A_\tau (x_j)} ^2 + V_\tau (x_j) } + \dfrac{2}{N-1}\sum_{1\leq j < k \leq N} w_\tau (x_j - x_k)} \yesnumber{HN tau}
    \end{align*}
    where
    \begin{align*}
        \hbar_\tau \coloneq \dfrac{\hbar}{\sqrt{\tau}} \qquad 
        A_\tau \coloneq \dfrac{1}{\tau} A\prth{\tau\bullet} \qquad 
        b_\tau  \coloneq \tau^{\frac{3}{2}} b \qquad
        V_\tau \coloneq \tau V\prth{\tau\bullet} \qquad
        w_\tau \coloneq \tau w\prth{\tau\bullet}
    \end{align*}
    Taking, $\tau \coloneq L/\widetilde{L}$ and using \cref{eq:HN tau}, we confirm that the new magnetic length is
    \begin{align*}
        \sqrt{\dfrac{\hbar_\tau}{b_\tau}} = \dfrac{\widetilde{L} l_b}{L} = \widetilde{l_b}
    \end{align*}
    Moreover if one chooses 
    \begin{align*}
        A = A_{Lan} \qquad V(x) = \varrho * \dfrac{1}{\abs{x}} \qquad w(x) = \dfrac{1}{\abs{x}}
    \end{align*}
    then the vector potential and the interaction potential are not rescaled :
    \begin{align*}
        A_\tau = A \qquad w_\tau = w
    \end{align*}
    If we assume that the external potential is generated by a background charge density $\varrho \in L^1(\Omega)$ it transforms as
    \begin{align*}
        & V_\tau (x) = \intr_\Omega \tau \varrho(\tau x - y)\dfrac{1}{\abs{y}}dy =  \intr_{\sbra{0, \widetilde{L}}^2} \tau^2  \varrho(\tau\prth{y - x})\dfrac{1}{\abs{y}}dy \eqcolon \varrho_\tau * \dfrac{1}{\abs{x}}
    \end{align*}
    The re-scaling preserves the total charge
    \begin{align*}
       \intr_{\sbra{0, \widetilde{L}}^2} \varrho_\tau dx = \intr_\Omega \varrho dx
    \end{align*}
    and
    \begin{align*}
        \mathscr{H}_{N, \tau} =  \sum_{j=1}^N \prth{\prth{i\hbar_\tau \nabla_{j} +b_\tau A (x_j)} ^2 + \rho_\tau * \dfrac{1}{\abs{x}}} + \dfrac{2}{N-1}\sum_{1\leq j < k \leq N} w (x_j - x_k) 
    \end{align*}
    We conclude that our initial scaling is equivalent to a thermodynamic limit.
    
\subsection{Organisation of the paper}
    
    The next two sections contain preparations and necessary tools. \cref{sec:II quantization} is about the diagonalisation of the magnetic Laplacian \cref{eq:L hbar bA}. In \cref{sec:Projector} we define the orthogonal projection on Landau levels and localise it in space, this will be the central object in the definition of the semi-classical densities. Then we prove a Lieb-Thirring inequality in \cref{sec:LT} to deal with $L^2$ potentials. The last two sections contain the proof of \cref{th:main} and \cref{th:density conv}. In \cref{sec:SC} we justify the semi-classical approximation and express the energy in terms of semi-classical densities. Finally, in \cref{sec:MF} we prove the mean-field approximation giving an upper and a lower energy bound. \cref{sec:appendix} contains technical lemmas whose proofs can safely be skipped by the reader.
    \section{Quantization} \label{sec:II quantization}

    In this Section, we recall the diagonalization of the magnetic Laplacian \cref{eq:L hbar bA}. We construct an orthonormal basis of $L^2(\Omega)$ adapted to the Landau levels in terms of magnetic periodic eigenstates of $\mathscr{L}_{\hbar, b}$ \cref{eq:psi_nl}. This fact is already well known in the literature, see \cite{AS}, \cite{Almog} or in \cite[section 3.9]{Jain}. Thus the reader may go directly to \cref{eq:psi_nl}. Then, we will prove another expression for the eigenfunctions in \cref{prop:poisson wave func} using the Poisson summation formula.

\subsection{Magnetic translation operators}\label{subsec:tau}

    In this subsection we explain the definition of the boundary conditions \cref{eq:magn per}.
    
    \begin{notation}
        Let $z_0 \in \mathbb{C}$, define the translation operator on $u \in L^2_{loc}\prth{\mathbb{R}^2}$ by
        \begin{align*}
            T_{z_0} u \coloneq u(\bullet - z_0)
        \end{align*}
        We define the magnetic translation operators as
        \begin{align*}
            \tau_{z_0} \coloneq e^{i\varphi_{z_0}}T_{z_0} \yesnumber{tau}
        \end{align*}
        They define the conditions \cref{eq:magn per} on $\partial\Omega$. Let $k \ge 1$, the magnetic periodic Sobolev space is
        \begin{align*}
            H^k_{mp}(\Omega) \coloneq \sett{ \psi \in H^k(\Omega) \text{ such that \cref{eq:magn per} hold}}
        \end{align*}
        We will use similar notation for other usual functional spaces where the subscript $mp$ stands for magnetic periodic and $p$ for periodic. The domain of the magnetic momentum
        \begin{align*}
            \mathscr{P}_{\hbar, b} \coloneq i\hbar \nabla + bA
        \end{align*}
        is
        \begin{align*}
            \text{Dom}(\mathscr{P}_{\hbar, b}) \coloneq H^1_{mp}(\Omega) \qquad
        \end{align*}
        In Coulomb gauge, there exists $\phi \in C^\infty\prth{\mathbb{R}^2, \mathbb{R}}$ such that the vector potential satisfies
        \begin{align*}
            A = \nabla^\perp \phi \coloneq \left(\begin{matrix} -\partial_y \phi \\ \partial_x \phi \end{matrix}\right) \yesnumber{A}
        \end{align*}
    \end{notation}
    
    For $k>1, H^k(\Omega) \hookrightarrow C^0(\Omega)$, so the conditions \cref{eq:magn per} are well defined. For $k=1$ they are defined with the trace operator $T$ and $\psi_{|\Omega} \coloneq T \psi$.
    
    For some examples of Coulomb gauges, one can take the symmetric gauge:
    \begin{align*}
        \phi_{sym} \coloneq \dfrac{|z|^2}{4} \qquad
        A_{Sym} \coloneq \dfrac{1}{2}(x, y)^\perp \coloneq \dfrac{1}{2}(-y, x) \qquad
        \varphi_{z_0, sym} \coloneq \dfrac{x_0 y - y_0 x}{2l_b^2} \yesnumber{sym gauge}
    \end{align*}
    or the Landau gauge:
    \begin{align*} 
        \phi_{Lan} \coloneq \dfrac{y^2}{2} \qquad
        A_{Lan} = (-y, 0) \qquad
        \varphi_{z_0, Lan} \coloneq -\dfrac{y_0 x}{l_b^2} \yesnumber{Lan gauge functions} 
    \end{align*}
    If we insert the Landau gauge \cref{eq:Lan gauge functions} in \cref{eq:magn per} we get the boundary conditions in Landau gauge:
    \begin{align*}
        \forall t \in [0, L], \, \syst{\psi(L + it) = \psi (it) \\ \psi(t + iL) = e^{-i\tfrac{Lt}{l_b^2}}\psi(t)} \yesnumber{tau Lan}
    \end{align*}
    We here emphasize the importance of the flux quantization \cref{eq:flux quant}. We are able to impose magnetic periodic boundary conditions in both directions if and only if the flux is quantized:
    \begin{align*}
        [\tau_L, \tau_{iL}] = 0 \iff \dfrac{L^2}{l_b^2} \in 2\pi \mathbb{Z}
    \end{align*}
    and in this case,
    \begin{align*}
        \forall k \ge 1, H^k_{mp}(\Omega) = \sett{\psi_{|\Omega}, \psi \in H^k_{loc}\prth{\mathbb{R}^2} \text{ such that } \tau_L \psi = \tau_{iL} \psi = \psi} \yesnumber{Hkmp identification}
    \end{align*}
    In view of \cref{eq:Hkmp identification} we will identify $\psi \in H^k_{mp}(\Omega)$ and its extension on $\mathbb{R}^2$ from now on. The magnetic Laplacian \cref{eq:L hbar bA} commutes with the magnetic translations defined in \cref{eq:magnetic phase}, \cref{eq:tau}:
    \begin{align*}
        [\mathscr{P}_{\hbar, b}, \tau_{z_0}] = 0 \text{ on } H^1_{mp}(\Omega) \text{ and } [\mathscr{L}_{\hbar, b}, \tau_{z_0}] = 0 \text{ on } H^2_{mp}(\Omega)
    \end{align*}

\subsection{Landau Level quantization} \label{subsec:LL quant}
    
    This subsection is devoted to the usual formalism for the description of the magnetic Laplacian in terms of annihilation and creation operators. More details about these operators and the properties of Landau levels can be found in \cite{RY}.
    
    \begin{notation}
    We denote by $\pi_x, \pi_y$ the coordinates of the magnetic momentum:
        \begin{align*}
            \mathscr{P}_{\hbar, b} \eqcolon \left( \begin{matrix} i\hbar\partial_x + bA_x \\ i \hbar \partial_y + bA_y \end{matrix} \right) \eqcolon \left( \begin{matrix} \pi_x \\ \pi_y \end{matrix} \right)
        \end{align*}
        and define the annihilation and creation operators respectively as
        \begin{align*}
            a \coloneq \dfrac{\pi_y - i \pi_x}{\sqrt{2 \hbar b}} \hspace{1cm} a^\dagger \coloneq \dfrac{\pi_y + i \pi_x}{\sqrt{2 \hbar b}} \yesnumber{creation op}
        \end{align*}
        and the number of excitation operator $\mathcal{N} \coloneq a^\dagger a$.
    \end{notation}
    
    The quantization of the magnetic Laplacian comes from the following commutation relations:
    \begin{align*}
        &\sbra{\pi_x, \pi_y} = i\hbar b  \yesnumber{com_pi}\\
        &\sbra{a, a^\dagger} = \text{Id} \text{ (canonical commutation relation)} \\
        &\sbra{\tau_{z_0}, a} = \sbra{\tau_{z_0}, a^\dagger} = 0 \yesnumber{comm tau a}
    \end{align*}
    and
    \begin{align*}
        \mathscr{L}_{\hbar, b} = 2\hbar b\left(\mathcal{N} + \dfrac{\text{Id}}{2}\right) \yesnumber{H to N}
    \end{align*} 
    $\mathscr{L}_{\hbar, b}$ defines the Sobolev space $\prth{\text{Dom}\prth{\mathscr{L}_{\hbar, b}}, \braket{\bullet}_\mathscr{L}}$ with
    \begin{align*}
        \braket{\chi | \psi}_\mathscr{L} \coloneq \braket{\mathscr{L}_{\hbar, b}\chi|\psi}
    \end{align*}
    which is equivalent to $(H^2_{mp}(\Omega), \braket{\bullet}_{H^1})$. The quadratic form defined by $\braket{\bullet}_\mathscr{L}$ is continuous, Hermitian and coercive on $\text{Dom}\prth{\mathscr{L}_{\hbar, b}}$. Thus $\mathscr{L}_{\hbar, b}$ is a closed positive self-adjoint operator and the embedding 
    \begin{align*}
        \text{Dom}\prth{\mathscr{L}_{\hbar, b}} \hookrightarrow L^2(\Omega)
    \end{align*}
    is continuous and compact.
    
    $H^2_{mp}(\Omega)$ contains the smooth and compactly supported functions, so it is dense in $L^2( \Omega)$. We can conclude using the Lax-Milgram theorem \cite[Corollary 4.26]{CR} that the resolvent of $\mathscr{L}_{\hbar, b}$ is well defined and compact. Applying the spectral theorem to the resolvent of $\mathscr{L}_{\hbar, b}$ proves that its spectrum is punctual and $L^2(\Omega)$ is a Hilbertian direct sum of eigenspaces of $\mathscr{L}_{\hbar, b}$. The same conclusions also holds for the $N$-body Hamiltonian \cref{eq:H_N} since the magnetic Laplacian is of dominant order in it. $\mathcal{N}$ inherits the properties of $\mathscr{L}_{\hbar, b}$ and it is well known that
    \begin{align*}
        \text{sp}(\mathcal{N})=\mathbb{N}
    \end{align*}
    
    \begin{notation}[title=Landau levels]
        We define the $n^{th}$ Landau level as the eigenspace associated to $n \in \mathbb{N}$: 
        \begin{align*}
            \text{nLL} \coloneq \sett{\psi \in \text{Dom}\prth{\mathscr{L}_{\hbar, b}} \text{ such that } \mathcal{N} \psi =  n\psi}
        \end{align*}
        The ground level, denoted $\text{LLL}$ for \textit{Lowest Landau Level} has energy $E_0 = \hbar b$.
    \end{notation}
    
    The Landau levels are isomorphic and the operator $a^\dagger/ \sqrt{n+1}$ is a unitary map from $\text{nLL}$ to $\text{(n+1)LL}$ of inverse $a/\sqrt{n+1}$ \cite{RY}. This means that one can generate a basis of any Landau level with successive applications of $a^\dagger$ on basis elements of $\text{LLL}$.

\subsection{Expression of the eigenfunctions}\label{subsec:explicit diag}

    It is well known \cite{RY} that
    \begin{align*}
        \text{LLL} \subset \text{ker}(a) \subset \mathcal{O}(\Omega) e^{-\frac{\phi}{l_b^2}}
    \end{align*}
    where $\mathcal{O}(\Omega)$ is the set of holomorphic functions and $\phi$ is defined in \cref{eq:A}. This proves that the zeros of $\psi \in \text{LLL}$ are given by the zeros of a holomorphic function. Since zeros of a holomorphic function must be isolated, the compactness of the domain implies that eigenfunctions have a finite number of zeros. Actually, this number of zeros is $d$ defined in \cref{eq:flux quant}, and therefore independent of the state, see \cite[section 1]{AS} as a reference. 
      
    \begin{notation}[title=Theta functions]
        Let $\tau$ be a complex parameter in the upper half plane, we define
        \begin{align*}
            \theta(z, \tau) \coloneq \sum_{k \in \mathbb{Z}} e^{i\pi \tau k^2 + 2i\pi k z}  
        \end{align*}
    \end{notation}
    
    Following the proof of \cite{Almog} or \cite[Chapter V Theorem 8]{Chandrasekharan}, the Landau levels have a finite degeneracy and
    \begin{align*}
       \forall n \in \mathbb{N}, \text{Dim(nLL)} = d
    \end{align*}
    With a direct computation we can express the eigenfunctions of the magnetic Laplacian \cref{eq:L hbar bA} in term of theta functions. The following family, indexed by $l \in [\![0,d-1]\!]$, is an orthonormal basis of the lowest Landau level in Landau gauge:
    \begin{align*}
        \psi_{0l}(z) &\coloneq \dfrac{\pi^{-\frac{1}{4}}}{\sqrt{Ll_b}} e^{2i\pi l\frac{x}{L}}\sum_{k \in \mathbb{Z}}e^{2i\pi kd\frac{x}{L} -\frac{1}{2l_b^2}\left(y+kL+l\frac{L}{d}\right)^2}  \yesnumber{phi_l} \\
        &= \dfrac{\pi^{-\frac{1}{4}}}{\sqrt{Ll_b}} e^{-\frac{\pi l^2}{d} -\frac{y^2}{2l_b^2} + 2i\pi l\frac{z}{L}} \theta\left(d\dfrac{z}{L}+ il, id\right)  \yesnumber{psi0l}
    \end{align*}
    One can check that the above eigenfunctions satisfy the boundary conditions \cref{eq:tau Lan}. Using \cref{eq:phi_l} we observe the $L$-periodicity along the real axis. Along the imaginary axis we increment the index $k$ by $1$:
    \begin{align*}
        \psi_{0l}(z + iL) = \dfrac{\pi^{-\frac{1}{4}}}{\sqrt{Ll_b}} e^{2i\pi l\frac{x}{L}}\sum_{k \in \mathbb{Z}}e^{2i\pi kd\frac{x}{L} -\frac{1}{2l_b^2}\left(y + (k+1)L+l\frac{L}{d}\right)^2} = e^{-2i\pi d\frac{x}{L}} \psi_{0l}(z)
    \end{align*}
    and obtain the magnetic periodic boundary conditions in Landau gauge \cref{eq:Lan gauge functions}. The lowest Landau level is then generated by successive magnetic translations. Indeed if $l \in [\![0,d-1]\!]$,
    \begin{align*}
        \psi_{0l}= \prth{\tau_{-i\frac{L}{d}}}^l \psi_{00} = \tau_{-il\frac{L}{d}}\psi_{00}   \yesnumber{gen mang}
    \end{align*}
        
    In order to obtain a full basis of $L^2$, we only need to apply successively $a^\dagger$ to generate the Landau levels and $\tau_{-i\frac{L}{d}}$ to generate the eigenfunctions inside a Landau level. The successive applications of $a^\dagger$ bring out Hermite polynomials.

    \begin{notation}[title=Hermite polynomials]
        For $n \in \mathbb{N}$, we define the $n^{th}$ Hermite polynomial by
        \begin{equation}
            H_n \coloneq (-1)^n e^{x^2} \left(\dfrac{d}{dx}\right)^n e^{-x^2}
        \end{equation}
    \end{notation}
      
    Using this, a direct computation shows that the following family indexed by $(n, l) \in \mathbb{N}\times  [\![0,d-1]\!]$ is a Hilbert basis of eigenfunctions of $\mathscr{L}_{\hbar, b}$ in Landau gauge:
    \begin{align*} 
        \psi_{nl} &\coloneq \dfrac{{a^\dagger}^n}{\sqrt{n!}} \prth{\tau_{-i\frac{L}{d}}}^l \psi_{00}\\ 
        &= \dfrac{c_n}{\sqrt{L l_b}} e^{2i\pi l\frac{x}{L}} \sum_{k\in\mathbb{Z}} H_n \left(\dfrac{1}{l_b}\left[y+kL+l\dfrac{L}{d}\right]\right) e^{2i\pi k d \frac{x}{L} - \frac{1}{2l_b^2}\left(y + kL + l\frac{L}{d}\right)^2}  \yesnumber{psi_nl}
    \end{align*}
    with the normalization factor
    \begin{align*}
        c_n \coloneq \dfrac{1}{\pi^{\frac{1}{4}}\sqrt{n!}} \left( \dfrac{-i}{\sqrt{2}}\right)^n
    \end{align*}
    
    As expected with our boundary conditions the modulus of the eigenfunctions:
    \begin{align*}
        |\psi_{nl}| =  \abs{\dfrac{c_n}{\sqrt{L l_b}}\sum_{k\in\mathbb{Z}} H_n \prth{\dfrac{1}{l_b}\sbra{y+kL+l\dfrac{L}{d}}} e^{2i\pi k d \frac{x}{L} - \frac{1}{2l_b^2}\left(y + kL + l\frac{L}{d}\right)^2}}  \yesnumber{psi_nl mod}
        \end{align*}
    is periodic on the lattice $L\mathbb{Z^2}$, but the periodicity along the real axis is even shorter. Indeed we see in \cref{eq:psi_nl mod} that $|\psi_{nl}|$ is $L/d$-periodic in $x$.
    
    We can write another useful form of equation \cref{eq:psi_nl} using the Poisson summation formula. The advantage of the expression in \cref{prop:poisson wave func} is the fact that the index $l$ is decoupled from the polynomials and the Gaussian factors which is not the case in \cref{eq:psi_nl}. This will simplify the computation of the Landau level's projector when we will sum over $l$ in \cref{eq:l simplification}.
    
    \begin{notation}[title=Fourier transform]
        We use the convention
        \begin{align*}
            \mathcal{F}g(\nu) \coloneq \hat{g}(\nu) \coloneq \dfrac{1}{\sqrt{2\pi}}\intr_{\mathbb{R}} g(x)e^{-i\nu x}dx
        \end{align*}
        for which $\mathcal{F}$ is unitary on $L^2(\mathbb{R})$. And denote the Hermite function
        \begin{align*}
            h_n(x) \coloneq H_n(x)e^{-\frac{x^2}{2}} \yesnumber{hermite func}
        \end{align*}
    \end{notation}
    
    In this convention the Poisson summation formula is
    \begin{align*} 
        \sum_{k\in \mathbb{Z}}g(k) = \sqrt{2\pi}\sum_{k\in \mathbb{Z}}\hat{g}\left(2\pi k\right)  \yesnumber{poisson formula}
    \end{align*}
    $h_n$ are the eigenfunctions of the one dimensional harmonic oscillator and of the Fourier transform:
    \begin{align*} 
        \mathcal{F}h_n = (-i)^n h_n  \yesnumber{HO wave func transf} 
    \end{align*}
    with the following normalization
    \begin{align*}
        \norm{h_n}_{L^2}^2 = \sqrt{\pi}2^n n! \yesnumber{hn norm}
    \end{align*}
    With this we are ready for the next computation (see \cref{sec:appendix}):
    
    \begin{proposition}[title=Poisson summation of eigenfunctions, label=poisson wave func]
        In Landau gauge,
        \begin{align*}
            \psi_{nl}(z)=\widetilde{c}_n\dfrac{\sqrt{l_b}}{L^{\frac{3}{2}}}e^{-i\frac{xy}{l_b^2}}  \sum_{k\in \mathbb{Z}} H_n \left(\dfrac{1}{l_b}\left[x + k\dfrac{L}{d}\right]\right) e^{-2i\pi k \left(\frac{y}{L}+\frac{l}{d}\right) -\frac{1}{2l_b^2}\left(x + k\frac{L}{d}\right)^2}
        \end{align*}
            with the normalization factor
        \begin{align*}
            \widetilde{c}_n \coloneq \dfrac{\pi^{\frac{1}{4}} (-1)^n2^{\frac{1 - n}{2}}}{\sqrt{n!}}
        \end{align*}
    \end{proposition}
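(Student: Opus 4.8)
The plan is to obtain the stated identity by pushing the Landau-gauge expression \cref{eq:psi_nl} of \cref{prop:finite dom basis} through the Poisson summation formula \cref{eq:poisson formula}, applied to the sum over $k\in\mathbb{Z}$. Fix $(n,l)$ and write the $k$-th summand in \cref{eq:psi_nl} as $g(k)$, where $g$ is the function of a \emph{real} variable $t$ given by $g(t)=h_n\!\left(\tfrac{1}{l_b}[y+tL+l\tfrac{L}{d}]\right)e^{2i\pi t d x/L}$ and $h_n$ is the Hermite function \cref{eq:hermite func}. Since $h_n$ is Schwartz, $g$ is a Schwartz function of $t$ (Gaussian decay times a polynomial times a bounded oscillation), so \cref{eq:poisson formula} applies verbatim and $\sum_{k}g(k)=\sqrt{2\pi}\sum_{k}\widehat g(2\pi k)$.

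The core step is then the computation of $\widehat g(2\pi k)$. Changing variables to $u=\tfrac{1}{l_b}[y+tL+l\tfrac{L}{d}]$ turns $\widehat g(\nu)$ into $\tfrac{l_b}{L}$ times an explicit phase (linear in $x,y,\nu,l$) times $\tfrac{1}{\sqrt{2\pi}}\int_{\mathbb R}h_n(u)\,e^{i\alpha u}\,du$, for a real scalar $\alpha$ depending affinely on $x$ and $\nu$. Here the flux quantization \cref{eq:flux quant}, i.e. $d=L^2/(2\pi l_b^2)$, is used repeatedly to simplify coefficients; the two relations one needs are $\tfrac{2\pi d}{L^2}=l_b^{-2}$ (used to turn $e^{2i\pi t d x/L}$ into $e^{ixu/l_b}$ up to constants) and $\tfrac{L}{d\,l_b}=\tfrac{2\pi l_b}{L}$ (used at $\nu=2\pi k$), after which $\alpha=\tfrac{1}{l_b}\!\left(x-k\tfrac{L}{d}\right)$. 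The Gaussian completion is automatic because $h_n$ already carries its own Gaussian factor; the remaining oscillatory integral is evaluated using that $h_n$ is an eigenfunction of $\mathcal F$ \cref{eq:HO wave func transf} together with the parity relation \cref{eq:phermit1}, which gives $\int_{\mathbb R}h_n(u)\,e^{i\alpha u}\,du=\sqrt{2\pi}\,i^{n}h_n(\alpha)$.

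It then remains to reassemble everything. Writing $h_n(\alpha)=H_n\!\left(\tfrac{1}{l_b}[x-k\tfrac{L}{d}]\right)e^{-\frac{1}{2l_b^2}(x-kL/d)^2}$ and reindexing $k\mapsto-k$ in the (doubly infinite) sum puts it in the claimed shape $\sum_{k}H_n\!\left(\tfrac{1}{l_b}[x+k\tfrac{L}{d}]\right)e^{-2i\pi k(y/L+l/d)-\frac{1}{2l_b^2}(x+kL/d)^2}$. The phases coming out of the change of variables collapse to the single global factor $e^{-ixy/l_b^2}$ — in particular the prefactor $e^{2i\pi l x/L}$ of \cref{eq:psi_nl} cancels against a term produced by the substitution, and the $\nu$-linear phase $e^{i\nu l/d}$ at $\nu=2\pi k$ combines with $e^{-2i\pi k y/L}$ to give the $e^{-2i\pi k(y/L+l/d)}$ in the statement. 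Finally, collecting the numerical constants $c_n$ (from \cref{eq:psi_nl}), the Jacobian $l_b/L$, the net $\sqrt{2\pi}$ coming from Poisson and the Fourier transform, and the factor $i^{n}$, and using the normalization $\norm{h_n}_{L^2}^2=\sqrt{\pi}\,2^n n!$ \cref{eq:hn norm} only implicitly, produces the asserted $\widetilde c_n$.

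There is no genuine obstacle here: the statement is an exact rewriting of \cref{eq:psi_nl}, and the work is purely bookkeeping of exponential factors. The two points that require care are (i) carrying the scalar $\alpha$ and every $l$- and $k$-dependent phase faithfully through the change of variables, so that they recombine into exactly $e^{-ixy/l_b^2}$ and $e^{-2i\pi k(y/L+l/d)}$; and (ii) getting the correct power of $i$ (equivalently the sign) out of \cref{eq:HO wave func transf} and \cref{eq:phermit1}, since that is what pins down the final constant $\widetilde c_n$. Legitimacy of Poisson summation is immediate from the Schwartz decay of $g$, so no analytic subtlety arises.
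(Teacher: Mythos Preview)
Your proposal is correct and follows essentially the same route as the paper: rewrite \cref{eq:psi_nl} as $\sum_k g(k)$ with $g(t)=h_n\!\bigl(\tfrac{1}{l_b}[y+tL+l\tfrac{L}{d}]\bigr)e^{2i\pi t d x/L}$, apply Poisson summation \cref{eq:poisson formula}, compute $\widehat g$ by the affine change of variables together with \cref{eq:flux quant} and the Fourier eigenfunction identity \cref{eq:HO wave func transf}, then reindex $k\mapsto -k$ and use the parity \cref{eq:phermit1} to read off the stated sum and the constant $\widetilde c_n$. The only cosmetic difference is that you package the Hermite integral as $\int h_n(u)e^{i\alpha u}\,du=\sqrt{2\pi}\,i^{n}h_n(\alpha)$ (inverse transform), whereas the paper writes it as $\widehat{h_n}=(-i)^n h_n$ and picks up the compensating sign from the parity after $k\mapsto -k$; the two bookkeeping choices are equivalent.
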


    \section{Projectors on Landau levels}\label{sec:Projector}
    
    From the construction of an $L^2(\Omega)$ basis adapted to Landau levels, we define the projectors on Landau levels in \cref{not:proj}. Since the phase space is $\mathbb{N}\times \Omega$ we also want to localise the projectors in space. Then we prove some properties of the projector that will be needed for the semi-classical analysis. In \cref{prop:proj conv} we give an equivalent for the diagonal of the projector's integral kernel, and in \cref{cor:tr pi} an equivalent for its trace.

\subsection{\text{nLL} projectors}
    
    \begin{notation}[title=Projectors, label=proj]
        The orthogonal projector on $\text{nLL}$ is
        \begin{align*}
            \Pi_n \coloneq& \sum_{l = 0}^{d-1} \ket{\psi_{nl}} \bra{\psi_{nl}} 
        \end{align*}
        Let $g \in C^\infty(\mathbb{R}^2, \mathbb{R}_+)$ radial with support included in the ball $B(0, L/2)$ such that $\norm{g}_{L^2} = 1$. Let $\lambda \ge 1$, define the localizer $g_\lambda \in C^\infty(\mathbb{T})$ defined by 
        \begin{align*}
            g_\lambda(x) \coloneq \syst{\lambda g(\lambda x) &\text{if } x \in B\prth{0, \frac{L}{2\lambda}} \\ 0 &\text{else}}
        \end{align*}
        Note that
         \begin{align*}
             \norm{g_\lambda}_{L^2} = 1
         \end{align*}
        $\forall X \coloneq (n, R) \in \mathbb{N}\times\Omega$, define the localised projector
        \begin{align*}
            \Pi_X \coloneq \Pi_{n,R} \coloneq& g_\lambda(\bullet - R)\Pi_n g_\lambda(\bullet - R) \yesnumber{pinR}
        \end{align*}
        We assume the following scaling for $\lambda \coloneq (\lambda_N)_N$:
        \begin{align*}
            1 \ll \lambda \ll \dfrac{N^{-\frac{1}{2}}}{\hbar ^2} \yesnumber{lambda scaling}
        \end{align*}
    \end{notation}
    
    This localised projector was introduced by Lieb, Solovej and Yngvason in \cite{LS91} and \cite{Yngvason} where it has been called coherent operator. We take the bounds \cref{eq:lambda scaling} in order to have $g_\lambda^2 \wslim \delta$ so the projector is well localised and
    \begin{align*}
        \dfrac{\hbar^2}{l_b} \lambda = \hbar b \lambda l_b \ll 1
    \end{align*}
    This is necessary because $\hbar b \lambda l_b$ is the order of some error terms coming from the kinetic energy (for example in \cref{prop:HF to SC}). $\forall X \coloneq (n, R) \in \mathbb{N}\times\Omega, \Pi_n$ and $\Pi_X$ are positive satisfy the following resolution of identity:
    \begin{align*}
        \sum_{n\in\mathbb{N}} \Pi_n = \text{Id},
        \qquad \intr_{\mathbb{N}\times\Omega} \Pi_{X}d\eta(X)  = \text{Id}  \yesnumber{nR res id}
    \end{align*}

\subsection{Integral kernels of the projectors}

    From \cref{prop:poisson wave func}, using
    \begin{align*}
        \sum_{l=0}^{d-1} e^{2i\pi l\frac{p - k}{d} } = d \mathbb{1}_{p=k \,(\text{mod}\,d)} \yesnumber{l simplification}
    \end{align*}
    and the notations \cref{eq:hermite func} and $x\coloneq x_1 + i x_2$, $y = y_1 + i y_2$, the expression of the \text{nLL}-projector's kernel is \newpage
    \begin{align*}
        \Pi_n(x, y) &= \dfrac{1}{\norm{h_n}_{L^2}^2 Ll_b} e^{i\frac{y_1y_2 - x_1x_2}{l_b^2}}\sum_{k, q \in \mathbb{Z}} H_n \prth{\dfrac{1}{l_b}\sbra{x_1+k\dfrac{L}{d}}} H_n \prth{\dfrac{1}{l_b}\sbra{y_1+ qL + k\dfrac{L}{d}}} \\
        &\cdot e^{2i\pi k\frac{y_2 - x_2}{L} + 2i\pi d q \frac{y_2}{L} -\frac{1}{2l_b^2}\prth{x_1+k\frac{L}{d}}^2 -\frac{1}{2l_b^2}\prth{y_1+ qL + k\frac{L}{d}}^2} \yesnumber{Pi_n}
    \end{align*}
    
    The simplification for the sum in $l$ is the reason why we used the Poisson formula on eigenfunctions. The argument does not work on the expression in \cref{eq:psi_nl} since the Gaussian terms depend on $l$. If we consider the same setup on the whole space $\mathbb{R}^2$ instead of $\Omega$, the expression of the projector in Landau gauge becomes (see \cite[Section 3.2]{Jain}):
    \begin{align*}
        \Pi^\infty_0(x, y) \coloneq \dfrac{1}{2\pi l_b^2}e^{-\frac{\abs{x - y}}{2l_b^2} + i\frac{\text{Im}\sbra{x\overline{y}}}{4l_b^2} + i \frac{y_1y_2 - x_2x_1}{2l_b^2}}
    \end{align*}
    The next proposition states that the diagonal of the projector's kernel on $\Omega$ converges to that of the projector on the whole space. This is expected since the limit is equivalent to a scaling where the size of the domain goes to infinity.
    
    \begin{proposition}[title=Convergence of the integral kernel, label=proj conv]
        The kernel \cref{eq:Pi_n} satisfies
        \begin{align*}
            \norm{\Pi_n(z, z) - \dfrac{1}{2\pi l_b^2}}_{L^\infty}\le \dfrac{C(n)}{L l_b}  \yesnumber{kernel approx 1}
        \end{align*}
        Moreover with notation \cref{eq:hermite func},
        \begin{align*}
            \norm{(\mathscr{P}_{\hbar, b}\Pi_n)(z, z) - \dfrac{b}{l_b}\cdot \dfrac{1}{2\pi\norm{h_n}_{L^2}^2}\intr_\mathbb{R}\prth{\matrx{i h_n'(u)\\  u h_n(u)}}  h_n(u)e^{-u^2}du}_{L^\infty} \le C(n) b \yesnumber{kernel approx 2}
        \end{align*}
    \end{proposition}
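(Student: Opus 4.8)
The plan is to work directly from the explicit kernel formula \cref{eq:Pi_n} and isolate the diagonal contribution. Setting $z = x_1 + ix_2$ and looking at \cref{eq:Pi_n} with $y = z$, all the phase factors $e^{2i\pi k(y_2 - x_2)/L}$ collapse to $1$, and one is left with
\begin{align*}
    \Pi_n(z,z) = \frac{1}{\norm{h_n}_{L^2}^2 Ll_b}\sum_{q,k\in\mathbb{Z}} e^{2i\pi d q y_2/L}\, H_n\!\prth{\tfrac{x_1 + kL/d}{l_b}} H_n\!\prth{\tfrac{x_1 + qL + kL/d}{l_b}} e^{-\frac{1}{2l_b^2}(x_1 + kL/d)^2 - \frac{1}{2l_b^2}(x_1 + qL + kL/d)^2}.
\end{align*}
First I would argue that only the $q = 0$ term survives to leading order: for $q \ne 0$ the Gaussian factor carries a suppression of size at least $e^{-L^2 q^2/(8l_b^2)} = e^{-\pi d q^2/4}$ (after completing squares in the two Gaussian arguments), which is exponentially small in $d$, hence in $b$; summing the geometric-type tail gives a contribution bounded by $C(n)/(Ll_b) \cdot e^{-cd}$, absorbed into the error in \cref{eq:kernel approx 1}. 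This reduces the problem to the single sum $\sum_{k\in\mathbb{Z}} h_n\!\prth{(x_1 + kL/d)/l_b}^2$, which is a Riemann-type sum: because $L/d = 2\pi l_b^2/L$, the spacing in the variable $u = (x_1 + kL/d)/l_b$ is $2\pi l_b/L \to 0$. Poisson summation (or a direct Euler–Maclaurin estimate) gives $\sum_k h_n(x_1/l_b + 2\pi k l_b/L)^2 = \frac{L}{2\pi l_b}\intr_\mathbb{R} h_n(u)^2 du + O(e^{-cL^2/l_b^2})$, and $\intr_\mathbb{R} h_n^2 = \norm{h_n}_{L^2}^2$; plugging back yields $\Pi_n(z,z) = \frac{1}{2\pi l_b^2} + O\prth{\frac{C(n)}{Ll_b}}$, uniformly in $z$ since the error bounds did not use anything about $x_1, x_2$.

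For \cref{eq:kernel approx 2} I would apply $\mathscr{P}_{\hbar,b} = i\hbar\nabla + bA$ in Landau gauge ($A = (-y,0)$) to the $x$-variable of the kernel \cref{eq:Pi_n} before setting $y = z$. Each derivative either hits a Hermite factor (producing $H_n'$, which by \cref{eq:phermit2} is $2nH_{n-1}$) or a Gaussian/phase factor (producing polynomial prefactors of order $x_1/l_b^2$ or $1/l_b$, together with the magnetic phase $e^{i(y_1y_2 - x_1x_2)/l_b^2}$ contributing $\mp i x_2/l_b^2$ etc.). The key bookkeeping point is that $\hbar \partial / l_b^{-1} = \hbar l_b^{-1} \cdot (\hbar l_b) = \hbar^2 b^{-1}\cdot\ldots$; more simply, $\hbar/l_b = \sqrt{\hbar b}$ and $b l_b = \sqrt{\hbar b}\cdot \sqrt{b/\hbar}\cdot l_b$, so one checks that the surviving terms scale like $b/l_b = b^{3/2}\hbar^{-1/2}$ — wait, rather one simply tracks that $\mathscr{P}_{\hbar,b}$ acting on an LLL-type Gaussian produces a factor of order $\hbar/l_b = \sqrt{\hbar b}$ relative to $\Pi_n$, and $\sqrt{\hbar b} \cdot \frac{1}{2\pi l_b^2} = \frac{1}{2\pi l_b}\sqrt{\hbar b}/l_b = \frac{b}{l_b}\cdot\frac{1}{2\pi}\sqrt{\hbar/b}$; matching dimensions one identifies the stated vector integral $\frac{1}{2\pi\norm{h_n}_{L^2}^2}\intr_\mathbb{R}\prth{\begin{smallmatrix} ih_n' \\ uh_n\end{smallmatrix}} h_n e^{-u^2}\,du$ as exactly the Riemann-sum limit of the differentiated single sum, by the same $q = 0$ reduction and Riemann-sum argument as above. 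The off-diagonal (in $q$) and the Euler–Maclaurin remainders are now of relative size $e^{-cd}$ times the $b/l_b$ scale, hence bounded by $C(n) b$ as claimed (here the cruder bound $C(n)b$ rather than $C(n)b/(Ll_b)$ is stated because one extra derivative costs a factor $l_b^{-1}$ against the $1/(Ll_b)$ of part one, and $l_b^{-2} = b/\hbar$, so $C(n)/(Ll_b)\cdot l_b^{-1} = C(n)/(Ll_b^2)$, and using $L$ fixed this is $O(b)$ up to the harmless $\hbar$ factor).

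The main obstacle I anticipate is not the Riemann-sum limit itself but keeping the constants and phase factors bookkept correctly through the $\mathscr{P}_{\hbar,b}$ computation in \cref{eq:kernel approx 2}: the magnetic phase $e^{i(y_1y_2 - x_1x_2)/l_b^2}$ contributes terms when differentiating in $x$, and these must be shown to either cancel on the diagonal or combine into the $uh_n(u)$ component of the stated integral; getting the $i$ in front of $h_n'$ and the real $u h_n$ in the second slot right requires carefully separating the two components of $\mathscr{P}_{\hbar,b} = (i\hbar\partial_{x_1} + bA_1, \, i\hbar\partial_{x_2} + bA_2)$ and noting that in Landau gauge $A_2 = 0$ so the second component is pure $i\hbar\partial_{x_2}$, which acts only on the phase and the $e^{2i\pi k(y_2-x_2)/L}$ factor — on the diagonal these give the $x_1/l_b$-type contributions that assemble into $u h_n(u) e^{-u^2/2}$. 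A secondary technical point is making the "$q\ne0$ is exponentially small" estimate uniform in $z$ and in $n$ with an explicit $C(n)$; this follows from the standard bound $|h_n(u)| \le C(n)(1+|u|)^n e^{-u^2/2}$ together with the Gaussian tail, so it is routine once set up.
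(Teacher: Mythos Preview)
Your approach is essentially the one in the paper: isolate the $q=0$ term, treat it as a Riemann sum with step $2\pi l_b/L$, and bound the $q\neq 0$ tail via the Gaussian decay. The paper packages both steps into a single technical lemma (\cref{lem:series conv}) that handles any expression of the form $\sum_{k,q} P_n(\cdot)Q_n(\cdot)e^{-(\cdot)^2}$, and then applies that lemma twice. For \cref{eq:kernel approx 2} the paper avoids your dimensional hand-waving entirely: it simply computes $\mathscr{P}_{\hbar,b}\Pi_n(x,y)$ in Landau gauge line by line, observes that on the diagonal the result is exactly $\frac{b}{\norm{h_n}_{L^2}^2 L}\sum_{k,q}\bigl(\begin{smallmatrix} ih_n'(k_{b,x})\\ k_{b,x}h_n(k_{b,x})\end{smallmatrix}\bigr)h_n(k_{b,x+qL})e^{2i\pi dq y/L}$, and feeds this into the same lemma. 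Your ``wait, rather\ldots'' paragraph is the only weak spot --- the scaling is immediate once you note $\hbar/l_b = b l_b$, so the prefactor is $\frac{\hbar}{l_b}\cdot\frac{1}{\norm{h_n}^2 L l_b} = \frac{b}{\norm{h_n}^2 L}$ and the lemma then gives leading term $\frac{L}{2\pi l_b}\cdot\frac{b}{\norm{h_n}^2 L} = \frac{b}{2\pi l_b \norm{h_n}^2}$ with remainder $C(n)\cdot\frac{b}{L} = O(b)$; no dimensional guessing is needed.
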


    The proof of this result is purely about the convergence of some Riemann sums, see \cref{sec:appendix}. Finally, we compare the trace of $\Pi_{n,R}$ to the trace of the projector on the whole space.
    
    \begin{corollary}[title=Approximation of the projector's trace, label=tr pi]
        \begin{align*}
            \abs{\Tr{\Pi_{n,R}} - \dfrac{1}{2\pi l_b^2}} \le \frac{C(n)}{l_b}
        \end{align*}
    \end{corollary}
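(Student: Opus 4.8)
The plan is to reduce the operator trace to an integral of the diagonal of the kernel of $\Pi_n$ weighted by $g_\lambda^2$, and then to invoke the uniform estimate \cref{eq:kernel approx 1} of \cref{prop:proj conv}. First I would observe that $\Pi_n$ is an orthogonal projector of rank $d$ by \cref{prop:d finite}, so that $\Pi_{n,R} = g_\lambda(\bullet - R)\Pi_n g_\lambda(\bullet - R)$ (see \cref{eq:pinR}) is finite rank, hence trace class, and cyclicity of the trace applies without any scruple:
\begin{align*}
    \Tr{\Pi_{n,R}} = \Tr{g_\lambda(\bullet - R)^2\, \Pi_n} = \intr_\Omega g_\lambda(x - R)^2\, \Pi_n(x, x)\, dx .
\end{align*}
The last equality holds because $g_\lambda(\bullet - R)$ is a multiplication operator, so the integral kernel of $g_\lambda(\bullet-R)^2\Pi_n$ is $g_\lambda(x-R)^2\Pi_n(x,y)$; note that $\Pi_n(x,x) = \sum_{l} \abs{\psi_{nl}(x)}^2$ is genuinely a function on $\mathbb{T}$ since $\abs{\psi_{nl}}$ is $L\mathbb{Z}^2$-periodic, so the integral over $\Omega$ is unambiguous.

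Second, since $\norm{g_\lambda}_{L^2} = 1$ by \cref{not:proj} and $g_\lambda$ has support contained in a fundamental domain, one has $\intr_\Omega g_\lambda(x-R)^2\, dx = 1$, whence
\begin{align*}
    \Tr{\Pi_{n,R}} - \dfrac{1}{2\pi l_b^2} = \intr_\Omega g_\lambda(x-R)^2 \prth{\Pi_n(x, x) - \dfrac{1}{2\pi l_b^2}} dx .
\end{align*}
Then I would bound this by Hölder's inequality together with the uniform estimate \cref{eq:kernel approx 1}:
\begin{align*}
    \abs{\Tr{\Pi_{n,R}} - \dfrac{1}{2\pi l_b^2}} \le \norm{g_\lambda(\bullet-R)^2}_{L^1}\, \norm{\Pi_n(z, z) - \dfrac{1}{2\pi l_b^2}}_{L^\infty} \le \dfrac{C(n)}{L l_b} ,
\end{align*}
and since $L > 0$ is fixed the right-hand side is of the announced form $C(n)/l_b$ after renaming the constant.

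There is essentially no obstacle here: the corollary is an immediate consequence of \cref{prop:proj conv}. The only point deserving a line of justification is the reduction $\Tr{g\Pi_n g} = \intr_\Omega g^2\, \Pi_n(x,x)\, dx$, which is legitimate precisely because $\Pi_{n,R}$ is finite rank; everything else is Hölder's inequality, the normalization $\norm{g_\lambda}_{L^2}=1$, and the already-established uniform kernel bound.
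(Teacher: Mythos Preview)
Your proof is correct and follows essentially the same approach as the paper: both reduce $\Tr{\Pi_{n,R}}$ to $\intr_\Omega g_\lambda(x-R)^2\,\Pi_n(x,x)\,dx$, then invoke the uniform kernel estimate \cref{eq:kernel approx 1} together with the normalization $\norm{g_\lambda}_{L^2}=1$. Your version is just slightly more explicit about why the trace-to-integral reduction is legitimate (finite rank, cyclicity) and names H\"older's inequality where the paper writes the bound directly.
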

    
    \begin{proof}
        This is a direct consequence of \cref{prop:proj conv} after integrating on $z \in \Omega$:
        \begin{align*}
            \Tr{\Pi_{n,R}} = \intr_\Omega \Pi_{n,R}(z, z)dz = \dfrac{1}{2\pi l_b^2}\intr_\Omega g_\lambda(z-R)^2dz + \mathcal{O}\prth{\dfrac{1}{l_b}} = \dfrac{1}{2\pi  l_b^2} + \mathcal{O}\prth{\dfrac{1}{l_b}} 
        \end{align*}
    \end{proof}

    \section{A Lieb-Thirring inequality}\label{sec:LT}

    In this section we prove a Lieb-Thirring inequality for the magnetic Laplacian with magnetic periodic boundary conditions: 
    
    \begin{theorem}[title= Kinetic energy inequality, label=kin energy ineq]
        Let $\gamma \in \mathcal{L}^1(L^2(\Omega))$ a positive operator, then for large enough $b$, 
        \begin{align*}
            \intr_\Omega \rho_\gamma^2 \le \dfrac{C\norm{\gamma}_{\mathcal{L}^\infty}}{\hbar^2} \Tr{\mathscr{L}_{\hbar, b} \gamma} \yesnumber{kin energy}
        \end{align*}
        Moreover if $\psi_N \in L_-^2(\Omega^N)$ with $\norm{\psi_N}_{L^2} = 1$,
        \begin{align*}
            \norm{\rho_{\psi_N}^{(1)}}_{L^2}^2 \le \dfrac{C}{\hbar b} \Tr{\mathscr{L}_{\hbar, b} \gamma_{\psi_N}^{(1)}} 
            \text{ and } 
            \abs{\intr_\Omega V \rho_{\psi_N}^{(1)}} \le \dfrac{C}{\hbar b } \Tr{\mathscr{L}_{\hbar, b} \gamma_{\psi_N}^{(1)}} \norm{V}_{L^2} \yesnumber{Kin E ineq}
        \end{align*}
        \begin{align*}
            \intr_{\Omega^2} w\rho_{\psi_N}^{(2)} \le \dfrac{C}{\hbar b } \Tr{\mathscr{L}_{\hbar, b} \gamma_{\psi_N}^{(1)}} \norm{w}_{L^2} \yesnumber{Kin E ineq w}
        \end{align*}
    \end{theorem}

    This result is well known in the absence of magnetic field, see \cite[Theorem 5.2]{FLLS13}. We adapt the proof of \cite[Chapter 4]{LS08} to magnetic periodic boundary conditions. To achieve this we prove the following sequence of inequalities: a Kato inequality (\cref{lem:Kato}), a diamagnetic inequality for Green functions (\cref{prop:diam ineq}), a Lieb-Thirring inequality (\cref{th:LT}) from which we deduce the inequality on the kinetic energy (\cref{th:kin energy ineq}). The reader already familiar with Lieb-Thirring inequalities might jump to \cref{sec:SC}.
    
\subsection{Reduced densities} \label{ssec:red densities}

    We give here some usual properties of the reduced density matrices, see \cref{not:reduced densities}. Let $\gamma_N$ be an $N$-body density matrix,
    \begin{align*}
        \dfrac{\Tr{\mathscr{H}_N\gamma_N}}{N} =  \Tr{\prth{\mathscr{L}_{\hbar, b} + V}\gamma_N^{(1)}} + \Tr{w \gamma_N^{(2)}} \yesnumber{E reduced d}
    \end{align*}
    moreover, reduced densities inherit trace and Pauli principle from $\gamma_N$:
    \begin{align*}
        \Tr{\gamma_N^{(k)}} = 1 ,\quad 0\le \gamma_N^{(k)} \le \frac{k!\prth{N-k}!}{N!} \yesnumber{red d}
    \end{align*}
    We can also express the reduced density matrices in term of integral kernels:
    \begin{align*}
        \gamma_N^{(k)}(x_{1:k}, y_{1:k})  \coloneq \intr_{\Omega^{N - k}} \gamma_N(x_{1:k}, x_{k+1:N}; y_{1:k}, x_{k+1:N})dx_{k+1:N} \yesnumber{kernel red d}
    \end{align*}
    The reduced density matrices are symmetric under permutation of coordinates:
     \begin{align*}
        \forall \sigma \in  S_k,\gamma_N^{(k)}\prth{x_{\sigma(1:k)}, y_{\sigma(1:k)}} = \gamma_N^{(k)}\prth{x_{1:k}, y_{1:k}}
    \end{align*}
    and consistent:
    \begin{align*}
        \forall q\in\intint{1:k},\gamma_N^{(q)}\prth{x_{1:q}, y_{1:q}} = \intr_{\Omega^{k-q}} \gamma_N(x_{1:q}, x_{q+1:k}; y_{1:q}, x_{q+1:k})dx_{q+1:k}
    \end{align*}
    Note that the densities $\rho_{\gamma_N}^{(k)}$ inherit the symmetry and the consistency from the density matrices.

\subsection{A Kato inequality with periodic boundary conditions}

    One can look at \cite[Theorem X.27]{RS} for a proof of the Kato inequality in the non magnetic case.

    \begin{lemma}[title=Periodic Kato inequality, label=Kato]
        Define the complex sign
        \begin{align*}
            s(u) \coloneq \syst{\matrx{\dfrac{\overline{u}}{\abs{u}} &\text{ if } &u \neq 0 \\ 0 &\text{ if } &u = 0}}
        \end{align*}
        Let $u \in C^\infty(\Omega)$ then $\abs{u}\in H^1(\Omega)$ and
        \begin{align*}
            \abs{\hbar \nabla \abs{u}} \le \abs{\mathscr{P}_{\hbar, b}u} \yesnumber{kato H1}
        \end{align*}
        Moreover if $\abs{u}$ is periodic, then
        \begin{align*}
            - \hbar^2\Delta \abs{u} \le \text{Re}\sbra{s(u)\mathscr{L}_{\hbar, b} u} \yesnumber{kato H2}
        \end{align*}
        in the weak sense on $C^\infty_{p}(\Omega)$, or equivalently, $\forall \varphi \in C_p^\infty\prth{\Omega, \mathbb{R}_+}$,
        \begin{align*}
            -\hbar^2 \intr_\Omega \abs{u} \Delta \varphi \le \intr_\Omega \text{Re}\sbra{s(u)\mathscr{L}_{\hbar, b} u} \varphi
        \end{align*}
    \end{lemma}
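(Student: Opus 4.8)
The plan is to run the classical regularisation argument for Kato's inequality, as in \cite[Theorem X.27]{RS} and \cite[Chapter 4]{LS 08}, the only new point being that the vector potential enters solely through cross terms that cancel. Fix $\epsilon>0$ and set
\begin{align*}
    u_\epsilon\coloneq\sqrt{\abs{u}^2+\epsilon^2}\in C^\infty(\Omega),\qquad s_\epsilon(u)\coloneq\dfrac{\overline u}{u_\epsilon},\qquad p_j\coloneq(\mathscr{P}_{\hbar, b}u)_j=i\hbar\partial_j u+bA_j u,
\end{align*}
so that $\abs{s_\epsilon(u)}\le 1$ and $u_\epsilon\to\abs u$ uniformly on $\Omega$. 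The chain rule gives $2u_\epsilon\partial_j u_\epsilon=\partial_j\abs u^2=2\text{Re}(\overline u\,\partial_j u)$, and since $A$ is real the term $bA_j u$ contributes nothing to the imaginary part against $\overline u$, whence
\begin{align*}
    \hbar\partial_j u_\epsilon=\dfrac{\text{Im}(\overline u\, p_j)}{u_\epsilon}.
\end{align*}
By Cauchy--Schwarz and $\abs u\le u_\epsilon$ this is $\le\abs{\mathscr{P}_{\hbar, b}u}$ in modulus; as $\Omega$ is bounded the right-hand side lies in $L^2(\Omega)$, so $(u_\epsilon)_\epsilon$ is bounded in $H^1(\Omega)$, $\abs u\in H^1(\Omega)$, and $\hbar\nabla u_\epsilon\to\hbar\nabla\abs u$ weakly in $L^2$ (along a subsequence, hence the whole family, since the distributional limit is $\hbar\nabla\abs u$); weak lower semicontinuity of $L^2$-norms on balls then transfers the pointwise bound to the limit, which is \cref{eq:kato H1}.

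For \cref{eq:kato H2} I differentiate once more. Using $\hbar\partial_j\overline u=i(\overline{p_j}-bA_j\overline u)$ and $\hbar\partial_j p_j=-iD_j p_j+ibA_j p_j$ with $D_j\coloneq i\hbar\partial_j+bA_j$ (so that $\sum_j D_j p_j=\mathscr{L}_{\hbar, b}u$), a short computation shows that every term carrying a factor $bA_j$ cancels and
\begin{align*}
    \hbar^2\partial_j^2 u_\epsilon=\dfrac{\abs{p_j}^2-\text{Re}(\overline u\, D_j p_j)}{u_\epsilon}-\dfrac{\prth{\text{Im}(\overline u\, p_j)}^2}{u_\epsilon^3}.
\end{align*}
Summing over $j$ and rearranging,
\begin{align*}
    -\hbar^2\Delta u_\epsilon=\text{Re}\prth{s_\epsilon(u)\,\mathscr{L}_{\hbar, b}u}-\dfrac{1}{u_\epsilon}\prth{\abs{\mathscr{P}_{\hbar, b}u}^2-\dfrac{\sum_j\prth{\text{Im}(\overline u\, p_j)}^2}{u_\epsilon^2}},
\end{align*}
and the parenthesis is $\ge 0$ by Cauchy--Schwarz together with $\abs u\le u_\epsilon$. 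Hence $-\hbar^2\Delta u_\epsilon\le\text{Re}(s_\epsilon(u)\mathscr{L}_{\hbar, b}u)$ pointwise on $\Omega$, for every $\epsilon>0$.

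It remains to integrate against a test function and pass to the limit. For $\varphi\in C_p^\infty(\Omega,\mathbb{R}_+)$, integrating by parts twice and using that $u_\epsilon$ (because $\abs u$ is periodic) and $\varphi$ are $L\mathbb{Z}^2$-periodic, so the contributions on opposite faces of $\partial\Omega$ cancel, gives $-\hbar^2\intr_\Omega u_\epsilon\Delta\varphi=-\hbar^2\intr_\Omega(\Delta u_\epsilon)\varphi\le\intr_\Omega\text{Re}(s_\epsilon(u)\mathscr{L}_{\hbar, b}u)\varphi$. Since $u_\epsilon\to\abs u$ uniformly and $s_\epsilon(u)\to s(u)$ pointwise with $\abs{\text{Re}(s_\epsilon(u)\mathscr{L}_{\hbar, b}u)\varphi}\le\abs{\mathscr{L}_{\hbar, b}u}\abs{\varphi}\in L^1(\Omega)$, dominated convergence yields $-\hbar^2\intr_\Omega\abs u\Delta\varphi\le\intr_\Omega\text{Re}(s(u)\mathscr{L}_{\hbar, b}u)\varphi$, i.e. \cref{eq:kato H2}. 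The one genuinely delicate step is the algebraic identity of the middle paragraph: one must verify that the first-order magnetic terms cancel exactly, so that the entire remainder is the non-negative Cauchy--Schwarz defect $\abs{\mathscr{P}_{\hbar, b}u}^2-u_\epsilon^{-2}\sum_j(\text{Im}(\overline u\, p_j))^2$; everything else (the regularisation by $u_\epsilon$ that makes the manipulations licit across the zero set $\{u=0\}$, the boundary-free integration by parts on the torus, and the dominated convergence) is routine, and the periodicity hypothesis is used only to discard the boundary terms.
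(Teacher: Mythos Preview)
Your argument is correct and follows the same regularisation skeleton as the paper (approximate $\abs{u}$ by $u_\epsilon=\sqrt{\abs{u}^2+\epsilon^2}$, establish the inequality for $u_\epsilon$, integrate by parts using periodicity, then let $\epsilon\to 0$). The one substantive organisational difference lies in how the second--order inequality is obtained. The paper passes to polar coordinates $u=\abs{u}e^{i\theta}$ on the open set $\{u\neq 0\}$, computes $\text{Re}[s(u)\mathscr{L}_{\hbar,b}u]=-\hbar^2\Delta\abs{u}+\abs{u}\,\abs{\hbar\nabla\theta-bA}^2$ there, and then treats the zero set $\{u=0\}$ separately via the observation that each such point is a local minimum of $u_\epsilon$ so $\Delta u_\epsilon\ge 0$; several intermediate estimates are needed to glue the two regions together in the limit. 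You instead differentiate $u_\epsilon$ twice globally and exhibit the exact identity
\[
-\hbar^2\Delta u_\epsilon=\text{Re}\bigl(s_\epsilon(u)\mathscr{L}_{\hbar,b}u\bigr)-\frac{1}{u_\epsilon}\Bigl(\abs{\mathscr{P}_{\hbar,b}u}^2-u_\epsilon^{-2}\textstyle\sum_j(\text{Im}(\overline u\,p_j))^2\Bigr),
\]
so the pointwise inequality $-\hbar^2\Delta u_\epsilon\le\text{Re}(s_\epsilon(u)\mathscr{L}_{\hbar,b}u)$ holds everywhere without any case split. This is cleaner: the algebraic cancellation of the $bA_j$ cross terms is made explicit, the zero set needs no special handling, and the passage to the limit is a single dominated--convergence step. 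The paper's polar--coordinate route has the minor advantage of making the geometric content of the remainder ($\abs{u}\abs{\hbar\nabla\theta-bA}^2$) visible, but at the cost of a longer limiting argument.
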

    
    \begin{proof}
        \begin{align*}
            \dfrac{1}{2}\hbar \nabla \abs{u}^2 = \dfrac{1}{2}\hbar \nabla (\overline{u}u) = \text{Re}\sbra{\overline{u}\hbar \nabla u} = \text{Re}\sbra{\overline{u}\prth{\hbar \nabla - ibA}u}
        \end{align*}
        so taking absolute values
        \begin{align*}
            \abs{\hbar \dfrac{\nabla\abs{u}^2}{2}} \le \abs{u}\abs{\mathscr{P}_{\hbar, b}u} \yesnumber{kato1}
        \end{align*}
        Define
        \begin{align*}
            u_\epsilon = \sqrt{\abs{u}^2 + \epsilon^2} \in C^\infty_p(\Omega,  \mathbb{R}^*_+)
        \end{align*}
        Using \cref{eq:kato1},
        \begin{align*}
            \abs{\hbar\nabla u_\epsilon} = \dfrac{\abs{\hbar \nabla \abs{u}^2}}{2 u\epsilon} \le \dfrac{\abs{u}}{u_\epsilon} \abs{\mathscr{P}_{\hbar, b} u} \le \abs{\mathscr{P}_{\hbar, b} u} \yesnumber{approx kato1}
        \end{align*}
        So $\nabla u_\epsilon$ is bounded in $L^2(\Omega, \mathbb{R}^2)$ and converges weakly to $v \in L^2(\Omega, \mathbb{R}^2)$ up to sequence of $\epsilon$. Let $\varphi \in C_c^\infty\prth{\text{int}(\Omega), \mathbb{R}^2}$, since $\varphi \in L^2\prth{\Omega, \mathbb{R}^2}$ and $0\le u_\epsilon - \abs{u} \le \epsilon$
        \begin{align*}
            \intr_\Omega v\cdot \varphi = \lim \intr_\Omega \nabla u_\epsilon \cdot \varphi = - \lim \intr_\Omega u_\epsilon \nabla\cdot\varphi = - \intr_\Omega \abs{u} \nabla \cdot\varphi
        \end{align*}
        so $v = \nabla \abs{u}$ and the bound \cref{eq:approx kato1} passes to the limit and we obtain \cref{eq:kato H1}.
        
        To prove \cref{eq:kato H2} we use polar coordinates
        \begin{align*}
            u \eqcolon \abs{u}e^{i\theta}
        \end{align*}
        Let $x \in \Omega$, if $\abs{u}(x) \neq 0$, $\abs{u}$ is smooth on a neighbourhood $V_x$ of $x$ where $\abs{u} > 0$ and thus
        \begin{align*}
            \nabla u_\epsilon = \dfrac{\abs{u}}{u_\epsilon} \nabla\abs{u} \limit \nabla \abs{u} \text{ pointwhise on } V_x
        \end{align*}
        $e^{i\theta} = u/\abs{u}$ is also smooth and up to another restriction of $V_x$ we can invert the complex exponential so $\theta$ is smooth. Using Cauchy-Schwarz inequality
        \begin{align*}
            \text{Re}\sbra{s(u) \mathscr{L}_{\hbar, b} u} 
            =& \text{Re}\sbra{e^{-i\theta}\prth{-\hbar^2\Delta + 2i\hbar b A\cdot \nabla + i\hbar b (\nabla.A) + \abs{bA}^2}\abs{u}e^{i\theta}} \\
            =& -\hbar^2\Delta \abs{u} + \text{Re}\sbra{ -\abs{u}e^{-i\theta}\hbar^2\Delta e^{i\theta} - 2i\hbar^2\nabla \abs{u} \cdot \nabla \theta + 2i\hbar b A\cdot \nabla \abs{u} } \\
            &- 2\hbar b\abs{u} A\cdot \nabla \theta + \abs{u}\abs{bA}^2 \\
            =& -\hbar^2\Delta \abs{u} + \text{Re}\sbra{ - i\hbar^2\abs{u} \Delta \theta + \abs{u}\hbar^2\abs{\nabla \theta}^2} - 2\hbar b\abs{u} A\cdot \nabla \theta + \abs{u}\abs{bA}^2 \\
            =& -\hbar^2\Delta \abs{u} + \hbar^2\abs{u}\abs{\nabla \theta}^2 - 2\hbar b \abs{u} A\cdot \nabla \theta + \abs{u}\abs{bA}^2
            \ge -\hbar^2\Delta \abs{u}
        \end{align*}
        Note that if $u(x)=0$ then $x$ is a local minimum of $u_\epsilon$ so
        \begin{align*}
            \Delta u_\epsilon (x) \ge 0
        \end{align*}
        Let $\varphi \in C^\infty_p \prth{\Omega, \mathbb{R}_+}$, since $u_\epsilon$ and $\varphi$ are periodic
        \begin{align*}
            \intr_\Omega u_\epsilon \Delta \varphi = \intr_\Omega \varphi \Delta u_\epsilon \ge \intr_{\Omega\backslash u^{-1}\prth{\sett{0}}} \varphi \Delta u_\epsilon \yesnumber{u_eps}
        \end{align*}
        Now we take $\epsilon \to 0$, $u_\epsilon$ converges uniformly to $\abs{u}$ so
        \begin{align*}
            \intr_\Omega u_\epsilon \Delta \varphi \limit\displaylimits_{\epsilon \to 0} \intr_\Omega \abs{u} \Delta \varphi \yesnumber{u_eps limit 1}
        \end{align*}
        Using $\abs{u} \le u_\epsilon$, when $u(x) \neq 0$,
        \begin{align*}
            \Delta u_\epsilon = \nabla \cdot \dfrac{\abs{u}\nabla\abs{u}}{u_\epsilon} = \dfrac{\abs{\nabla\abs{u}}^2 + \abs{u}\Delta \abs{u}}{u_\epsilon} - \dfrac{\abs{u}^2 \abs{\nabla\abs{u}}^2}{u_\epsilon^3} \ge \dfrac{\abs{u}}{u_\epsilon} \Delta \abs{u} \yesnumber{u_eps 3}
        \end{align*}
        so \cref{eq:u_eps} implies that
        \begin{align*}
            \intr_\Omega u_\epsilon \Delta \varphi \ge \intr_{\Omega\backslash u^{-1}\prth{\sett{0}}} \varphi  \dfrac{\abs{u}}{u_\epsilon} \Delta \abs{u} \yesnumber{u_eps 2}
        \end{align*}
        With dominated convergence using inequality \cref{eq:u_eps 3},
        \begin{align*}
            \intr_{\Omega\backslash u^{-1}\prth{\sett{0}}} \varphi  \dfrac{\abs{u}}{u_\epsilon} \Delta\abs{u} \limit\displaylimits_{\epsilon \to 0} \intr_{\Omega\backslash u^{-1}\prth{\sett{0}}} \varphi \Delta \abs{u} \yesnumber{u_eps limit 2}
        \end{align*}
        With \cref{eq:u_eps 2}, \cref{eq:u_eps limit 1} and \cref{eq:u_eps limit 2} we have
        \begin{align*}
            \intr_\Omega \abs{u} \Delta \varphi \ge \intr_{\Omega\backslash u^{-1}\prth{\sett{0}}} \varphi \Delta \abs{u}
        \end{align*}
        we can conclude that
        \begin{align*}
            -\hbar^2 \intr_\Omega \abs{u} \Delta \varphi 
            \le& -\hbar^2 \intr_{\Omega\backslash u^{-1}\prth{\sett{0}}} \varphi \Delta \abs{u} 
            \le  \intr_{\Omega\backslash u^{-1}\prth{\sett{0}}} \text{Re}\sbra{s(u)\mathscr{L}_{\hbar, b} u} \varphi
            = \intr_\Omega \text{Re}\sbra{s(u)\mathscr{L}_{\hbar, b} u} \varphi
        \end{align*}
    \end{proof}

\subsection{Diamagnetic inequality}

    The diamagnetic inequality in terms of Green functions allows us to restrict ourselves to the non magnetic case.

    \begin{notation}[title=Green functions]
        Resolvents of $-\hbar^2 \Delta$ with periodic boundary conditions and $\mathscr{L}_{\hbar, b}$ are well defined for $\lambda > 0$:
        \begin{align*}
            G_{bA, \lambda} \coloneq (\mathscr{L}_{\hbar, b} + \lambda)^{-1} \quad G_\lambda \coloneq (- \hbar^2 \Delta + \lambda)^{-1}
        \end{align*}
        Their integral kernels define the corresponding Green functions.
    \end{notation}
    
    They have the following properties:
    
    \begin{property}[label=green func]
        Let $x \in \Omega$, then $G_{bA, \lambda}(x, \bullet),  G_\lambda(x, \bullet)\in L^2(\Omega)$ and
        \begin{align*}
            G_\lambda(x, y) = G_\lambda(x - y) = G_\lambda(y - x) = \dfrac{1}{L^2} \sum_{k \in \frac{2\pi \hbar}{L}\mathbb{Z}^2} \dfrac{1}{k^2 + \lambda}e^{i k\cdot (x - y)} \ge 0
        \end{align*}
    \end{property}
    
    \begin{proof}
        The periodic Laplacian is diagonalizable in the plane wave basis indexed by $k \in \dfrac{2\pi}{L}\mathbb{Z^2}$:
        \begin{align*}
            e_k(x) \coloneq \dfrac{1}{L}e^{i k\cdot x}
        \end{align*}
        Indeed
        \begin{align*}
            -\hbar^2 \Delta + \lambda =  \sum_{k \in \frac{2\pi}{L}\mathbb{Z}^2} \prth{\hbar^2 k^2 + \lambda} \ket{e_k}\bra{e_k} 
        \end{align*}
        so
        \begin{align*}
            G_\lambda(x, y) = \dfrac{1}{L^2} \sum_{k \in \frac{2\pi}{L}\mathbb{Z}^2} \dfrac{1}{\hbar^2k^2 + \lambda}e^{i k\cdot \prth{x - y}}
        \end{align*}
        A change of index $k \coloneq -k$ gives $G_\lambda(x, y) \in \mathbb{R}$. Let $f \in L^2(\Omega)$, since $G_\lambda f$ solves
        \begin{equation*}
            (-\hbar^2\Delta + \lambda) u = f, u \in H_p^2(\Omega)
        \end{equation*}
        by the Lax-Milgram theorem, $G_\lambda f$ is the unique minimizer of the following functional
        \begin{align*}
            \mathcal{J}(u) \coloneq \intr_\Omega \prth{\hbar^2\abs{\nabla u}^2 + \lambda \abs{u}^2 - fu}dx
        \end{align*}
        Assuming $f \ge 0$, we see that $\mathcal{J}(u) \ge \mathcal{J}(\abs{u})$ and conclude that $G_\lambda f \ge 0$. This implies that $G_\lambda(x, y) \ge 0$. Finally,
        \begin{align*}
            -\hbar^2 \Delta + \lambda \ge \lambda \text{ and } \mathscr{L}_{\hbar, b} + \lambda \ge \lambda
        \end{align*}
        so
        \begin{align*}
            \norm{G_\lambda}_{\mathcal{L}^\infty} \le \dfrac{1}{\lambda} \text{ and } \norm{G_{bA, \lambda}}_{\mathcal{L}^\infty} \le \dfrac{1}{\lambda}
        \end{align*}
        and $G_{bA, \lambda}(x, \bullet), G_\lambda(\bullet, y)\in L^2(\Omega)$.
    \end{proof}

    Now we prove a diamagnetic inequality:

    \begin{proposition}[title=Diamagnetic inequality for Green functions, label=diam ineq]
        For all $x\in \Omega$ and for almost every $y \in \Omega$,
        \begin{align*}
            \abs{G_{bA, \lambda}(x, y)} \le G_\lambda(x, y)
        \end{align*}
    \end{proposition}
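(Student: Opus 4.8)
Following \cite[Chapter 4]{LS 08}, the plan is to reduce the magnetic estimate to the non-magnetic one by combining the Kato inequality \cref{lem:Kato} with the positivity of the periodic Green function $G_\lambda$ recorded in \cref{prop:green func}. I will first prove the function-level domination $|G_{bA,\lambda}f|\le G_\lambda|f|$ for every $f\in L^2(\Omega)$, and then upgrade it to the pointwise kernel bound.

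\textbf{Step 1: the function-level inequality.} It suffices to treat smooth $f$ and then invoke density, since $G_{bA,\lambda}$ and $G_\lambda$ are bounded on $L^2(\Omega)$ and $f\mapsto|f|$ is continuous on $L^2$, so one can extract a common a.e.-convergent subsequence and pass to the limit. For smooth $f$, put $u:=G_{bA,\lambda}f\in\mathrm{Dom}(\mathscr{L}_{\hbar,b})=H^2_{mp}(\Omega)$; by elliptic regularity (the magnetic periodic square is a smooth compact manifold) $u$ is smooth, and magnetic periodicity of $u$ forces $|u|$ to be periodic, so \cref{lem:Kato} applies. Since $(\mathscr{L}_{\hbar,b}+\lambda)u=f$, adding $\lambda|u|$ to both sides of \cref{eq:kato H2} and using $\mathrm{Re}[s(u)u]=|u|$ and $|s(u)|\le 1$ gives, weakly against nonnegative periodic test functions, $(-\hbar^2\Delta+\lambda)|u|\le\mathrm{Re}[s(u)(\mathscr{L}_{\hbar,b}+\lambda)u]=\mathrm{Re}[s(u)f]\le|f|$. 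This weak inequality extends from $C^\infty_p(\Omega,\mathbb{R}_+)$ to $H^2_p(\Omega,\mathbb{R}_+)$ by mollification on the torus, so I may test it against $\varphi:=G_\lambda g$ with $g\in C^\infty_p(\Omega,\mathbb{R}_+)$, which is a nonnegative periodic $H^2$ function with $(-\hbar^2\Delta+\lambda)\varphi=g$. Using the symmetry of $G_\lambda$ this yields $\int_\Omega|u|\,g\le\int_\Omega|f|\,G_\lambda g=\int_\Omega(G_\lambda|f|)\,g$, and since $g\ge 0$ is arbitrary, $|u|\le G_\lambda|f|$ a.e.; this is $(\star)$: $|G_{bA,\lambda}f|\le G_\lambda|f|$ for all $f\in L^2(\Omega)$.

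\textbf{Step 2: from functions to kernels.} Fix a countable dense family $\{f_n\}\subset L^2(\Omega)$ and let $N$ be a common null set off which $(\star)$ holds for every $f_n$ and off which $G_{bA,\lambda}(x,\cdot),G_\lambda(x,\cdot)\in L^2(\Omega)$, which are a.e. valid by \cref{prop:green func}. For $x\notin N$, the maps $f\mapsto(G_{bA,\lambda}f)(x)=\langle\overline{G_{bA,\lambda}(x,\cdot)},f\rangle$ and $f\mapsto(G_\lambda|f|)(x)$ are continuous on $L^2(\Omega)$, so $(\star)$ at this $x$ extends to all $f\in L^2(\Omega)$. Now fix such an $x$ and feed $(\star)$ the sign-compensating test function $f:=s\bigl(G_{bA,\lambda}(x,\cdot)\bigr)\,g$ with $g\ge 0$, where $s$ is the complex sign of \cref{lem:Kato}; then $|f|\le g$ while $G_{bA,\lambda}(x,y)\,s(G_{bA,\lambda}(x,y))=|G_{bA,\lambda}(x,y)|$, so $(G_{bA,\lambda}f)(x)=\int_\Omega|G_{bA,\lambda}(x,y)|\,g(y)\,dy$ and hence $\int_\Omega|G_{bA,\lambda}(x,y)|\,g(y)\,dy\le(G_\lambda|f|)(x)\le\int_\Omega G_\lambda(x,y)\,g(y)\,dy$ using $G_\lambda\ge 0$. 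As $g\ge 0$ is arbitrary this forces $|G_{bA,\lambda}(x,y)|\le G_\lambda(x,y)$ for a.e. $y$, for every $x\notin N$, which is the claim.

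\textbf{Main obstacle.} The delicate point is precisely Step 2: one must handle the $f$-dependence of the exceptional null set in $(\star)$ (done via separability and the $L^2$-membership of the Green kernels) and then saturate the modulus by the right choice of test function. The application of \cref{lem:Kato}, stated for $C^\infty$ functions, to the resolvent $u=G_{bA,\lambda}f$ is only a technical nuisance, resolved by taking $f$ smooth at the outset; everything else reduces to the positivity-preserving property of $G_\lambda$.
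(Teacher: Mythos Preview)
Your argument is correct and takes a genuinely different route from the paper. The paper regularises the kernel directly: it sets $u_x:=G_{bA,\lambda}(x,\cdot)*\rho_n$ for a periodic mollifier $\rho_n$, applies the Kato inequality \cref{lem:Kato} to this smooth function, tests against $\rho_n*G_\lambda(\cdot,y)$, and then sends $n\to\infty$; this yields the bound for \emph{every} $x$ and a.e.\ $y$, exactly as stated. You instead first establish the operator-level domination $|G_{bA,\lambda}f|\le G_\lambda|f|$ by applying Kato to the resolvent of a smooth magnetic-periodic datum and testing against $G_\lambda g$, and only then pass to kernels by a separability/sign-function argument. Your route has the advantage that the intermediate statement $(\star)$ is the form actually used downstream (and is arguably the ``real'' content of diamagnetism here); it also avoids the somewhat fiddly convolution-in-the-second-variable step. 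The price is twofold: you need the Step~2 machinery (common null set via density, then the sign-compensating test) to recover the kernel inequality, and your conclusion is for a.e.\ $x$ rather than all $x$ --- harmless for the Lieb--Thirring application, but strictly weaker than the proposition as phrased. One small clean-up: in Step~1 you do not need the $H^2_p$ extension of the test class at all, since for $g\in C^\infty_p(\Omega,\mathbb{R}_+)$ elliptic regularity on the torus already gives $G_\lambda g\in C^\infty_p(\Omega,\mathbb{R}_+)$.
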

    
    \begin{proof}
        Let $\varphi \in C^\infty(\Omega)$, by definition
        \begin{align*}
            \intr_\Omega G_{bA, \lambda}\prth{x, \bullet} \prth{\mathscr{L}_{\hbar, b} + \lambda} \varphi = G_{bA, \lambda}\prth{\mathscr{L}_{\hbar, b} + \lambda}\varphi = \varphi
        \end{align*}
        so, in the distributional sense
        \begin{align*}
            \prth{\mathscr{L}_{\hbar, b} + \lambda} G_{bA, \lambda}(x, \bullet) = \delta_x \yesnumber{magnetic green EDP}
        \end{align*}
        Let $\rho \in C^\infty(\mathbb{R}^2, \mathbb{R}_+)$ radial with support included in the ball $B(0, L/2)$ such that $\norm{\rho}_{L^1} = 1$. Let $n\in \mathbb{N^*}$, define the localizer $\rho_n \in C^\infty(\mathbb{T})$ defined by 
        \begin{align*}
            \rho_n(x) \coloneq \syst{n^2 \rho(n x) &\text{if } x \in B\prth{0, \frac{L}{2n}} \\ 0 &\text{else}}
        \end{align*}
        Since $\rho_n$ is periodic, the regularisation
        \begin{align*}
            u_x \coloneq G_{bA, \lambda}(x, \bullet) * \rho_n \in C_{p}^\infty(\Omega)
        \end{align*}
        Thus, equation \cref{eq:magnetic green EDP} becomes
        \begin{align*}
            \prth{\mathscr{L}_{\hbar, b} + \lambda}  u_x = \delta_x * \rho_n = \rho_n(x - \bullet) \yesnumber{kato rho n}
        \end{align*}
        We estimate
        \begin{align*}
            \text{Re}\sbra{s(u_x) \prth{\mathscr{L}_{\hbar, b} + \lambda} u_x} =  \text{Re}\sbra{s(u_x)\rho_n(x - \bullet) } \le \rho_n(x -\bullet)
        \end{align*}
        Then we apply Kato's inequality \cref{eq:kato H1} to $u_x$, use $s(u_x) u_x = \abs{u_x}$ and obtain
        \begin{align*}
            \prth{-\hbar^2\Delta + \lambda} \abs{u_x} \le \text{Re}\sbra{s(u_x) \mathscr{L}_{\hbar, b} u_x} + \lambda \abs{u_x} \le \rho_n(x-\bullet) \yesnumber{kato3}
        \end{align*}
        in a weak sense on $C_{p}^\infty(\Omega)^*$.
        
        Similarly as \cref{eq:kato rho n},
        \begin{align*}
            (-\hbar^2\Delta +\lambda) \rho_n * G_\lambda(\bullet, y) = \rho_n(y - \bullet)
        \end{align*}
        Thus testing inequality \cref{eq:kato3} on $\rho_n * G_\lambda(\bullet, y) \in C_{p}^\infty(\Omega, \mathbb{R}_+)$ we get
        \begin{align*}
            \intr_\Omega \abs{u_x(z)} \rho_n(y - z) dz &\le \intr_\Omega \rho_n(x - z)  \rho_n * G_\lambda(\bullet, y) (z) dz
        \end{align*}
        With the changes of variables $t \coloneq t + x - y$, $z \coloneq z + x - y$ and \cref{prty:green func},
        \begin{align*}
            \abs{G_{bA, \lambda}(x, \bullet) * \rho_n}*\rho_n (y)
            \le& \iint\displaylimits_{\Omega^2} \rho_n(x - z)\rho_n(z - t) G_\lambda(t - y) dz dt \\
            =& \iint\displaylimits_{\Omega^2} \rho_n(2x - y - z)\rho_n(z - t) G_\lambda(t - x) dz dt \\
            =& \rho_n * \rho_n * G_\lambda (x, \bullet) (2x - y) \yesnumber{approx diamagn}
        \end{align*}
        If $\varphi_n  \to \varphi$ in $L^2(\Omega)$, by Young's inequality 
        \begin{align*}
            \norm{\rho_n * \varphi_n - \varphi}_{L^2} 
            &\le \norm{\rho_n * (\varphi_n - \varphi)}_{L^2} + \norm{\rho_n * \varphi - \varphi}_{L^2} \\
            &\le \norm{\rho_n}_{L^1} \norm{\varphi_n - \varphi}_{L^2} + \norm{\rho_n * \varphi - \varphi}_{L^2} 
            \limit\displaylimits_{n\to\infty} 0
        \end{align*}
        Let $x \in \Omega$, with \cref{prty:green func}, after extraction of a subsequence, $\abs{\rho_n * G_{bA, \lambda}(x, \bullet)} \to \abs{G_{bA, \lambda}(x, \bullet)}$ in $L^2(\Omega)$ so
        \begin{align*}
            &\abs{G_{bA, \lambda}(x, \bullet) * \rho_n} * \rho_n \limit\displaylimits_{n\to\infty} \abs{G_{bA, \lambda}(x, \bullet)}
        \end{align*}
        in $L^2(\Omega)$ and up to another subsequence almost everywhere. Similarly, almost everywhere
        \begin{align*}
            &\rho_n * \rho_n * G_\lambda(x, \bullet) \limit\displaylimits_{n\to\infty} G_\lambda(x, \bullet)
        \end{align*}
        So passing to the limit in \cref{eq:approx diamagn}, for almost every $y\in\Omega$,
        \begin{align*}
            \abs{G_{bA, \lambda}(x, y)} \le G_\lambda(x, 2x - y) = G_\lambda(y - x) = G_\lambda(x, y)
        \end{align*}
    \end{proof}

\subsection{Lieb-Thirring inequality}

    We add a constant to the Laplacian to ensure that the constant function has a non-zero energy.

    \begin{theorem}[title=Lieb-Thirring Inequality, label=LT]
        Let $\mathcal{V} \in L^2(\Omega, \mathbb{R}_+)$,
        \begin{align*}
            -\Tr{\mathbb{1}_{\prth{\mathscr{L}_{\hbar, b} + 1 - \mathcal{V}} \le 0} \prth{\mathscr{L}_{\hbar, b} + 1 - \mathcal{V}}} \le \dfrac{C_{LT}}{\hbar^2} \intr_\Omega \mathcal{V}(x)^2dx \yesnumber{LT}
        \end{align*}
    \end{theorem}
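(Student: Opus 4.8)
The plan is to reduce the magnetic Lieb--Thirring bound to the non-magnetic one via the diamagnetic inequality for Green functions (\cref{prop:diam ineq}), and to obtain the non-magnetic case by the Birman--Schwinger principle combined with a Hardy--Littlewood--Sobolev-type bound on the Green function of $-\hbar^2\Delta+1$. First I would set $\mathcal{W}\coloneq\mathcal{V}-1$ and observe that the left-hand side of \cref{eq:LT} is the sum of $|e_j|$ over the negative eigenvalues $-e_j<0$ of $\mathscr{L}_{\hbar,b}+1-\mathcal{V}$. Fix $\lambda>0$ and use the layer-cake identity
\begin{align*}
    -\Tr{\mathbb{1}_{(\mathscr{L}_{\hbar,b}+1-\mathcal{V})\le 0}(\mathscr{L}_{\hbar,b}+1-\mathcal{V})} = \intr_0^\infty N\prth{\lambda}\,d\lambda,
\end{align*}
where $N(\lambda)$ is the number of eigenvalues of $\mathscr{L}_{\hbar,b}+1-\mathcal{V}$ below $-\lambda$, equivalently (Birman--Schwinger) the number of eigenvalues of the compact positive operator $K_{bA,\lambda}\coloneq \mathcal{V}^{1/2}(\mathscr{L}_{\hbar,b}+1+\lambda)^{-1}\mathcal{V}^{1/2} = \mathcal{V}^{1/2}G_{bA,\lambda+1}\mathcal{V}^{1/2}$ that are $\ge 1$.

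The key step is the diamagnetic comparison: by \cref{prop:diam ineq}, $|G_{bA,\lambda+1}(x,y)|\le G_{\lambda+1}(x,y)$ pointwise, and since $\mathcal{V}\ge 0$ this gives, for the Hilbert--Schmidt norms, $\|K_{bA,\lambda}\|_{\mathcal{L}^2}\le \|\mathcal{V}^{1/2}G_{\lambda+1}\mathcal{V}^{1/2}\|_{\mathcal{L}^2} =: \|K_\lambda\|_{\mathcal{L}^2}$, and more generally $\Tr{K_{bA,\lambda}^p}\le \Tr{K_\lambda^p}$ for even integers $p$ (expand the trace of $K^p$ as an iterated integral of products of kernels, all controlled in absolute value by the corresponding product of $G_{\lambda+1}$ kernels, which are nonnegative). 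Hence $N(\lambda)\le \Tr{(K_{bA,\lambda})^p}\le \Tr{(K_\lambda)^p}$ by the Chebyshev/Birman--Schwinger bound, and it suffices to estimate the non-magnetic quantity. For this I would control $G_{\lambda+1}$ using its explicit Fourier series from \cref{prty:green func}: one has the pointwise bound $0\le G_{\lambda+1}(x,y)\le C\,\hbar^{-2}\prth{\abs{\log(\hbar^{-2}d(x,y))} + 1}$ for $d(x,y)$ small and exponential decay (on the torus scale) otherwise, since $\sum_{k\in\frac{2\pi}{L}\mathbb{Z}^2}(\hbar^2k^2+\lambda+1)^{-1}e^{ik\cdot(x-y)}$ is, up to boundary corrections, the Green function of $-\hbar^2\Delta+\lambda+1$ on $\mathbb{R}^2$, whose logarithmic singularity is standard. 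This puts $G_{\lambda+1}\in L^{p'}$ uniformly for any $p'<\infty$ with $\|G_{\lambda+1}\|_{L^{p'}(\mathbb{T})}\le C_{p'}\hbar^{-2}(\lambda+1)^{-1+1/p'}$ (track the $\hbar$ and $\lambda$ dependence by the scaling $x\mapsto x/\hbar$ of the free Green function). Then for $p$ large and even, by the Schur/Hölder estimate for kernels, $\Tr{(K_\lambda)^p}\le \|\mathcal{V}\|_{L^{p/(p-1)}}^{p}\,\|G_{\lambda+1}\|_{L^{p/?}}^{?}$ — more cleanly, choose $p=2$ directly: $\Tr{(K_\lambda)^2} = \iint \mathcal{V}(x)G_{\lambda+1}(x,y)^2\mathcal{V}(y)\,dx\,dy \le \|\mathcal{V}\|_{L^2}^2 \sup_x\prth{\intr_\Omega G_{\lambda+1}(x,y)^4\,dy}^{1/2}$ by Cauchy--Schwarz, and $\intr G_{\lambda+1}^4 \le C\hbar^{-8}(\lambda+1)^{-3}$ for $\lambda\ge 1$ say, while for $0<\lambda\le 1$ one uses instead the trivial operator bound $N(\lambda)=0$ once $\|\mathcal{V}\|$ is small or a cruder $p$; integrating $N(\lambda)\le \min(\ldots)\,d\lambda$ over $\lambda\in(0,\infty)$ yields a finite multiple of $\hbar^{-2}\intr_\Omega\mathcal{V}^2$ after the $\lambda$-integral converges — the powers must be balanced by choosing $p$ depending on dimension $2$, exactly as in \cite[Chapter 4]{LS 08}.

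The main obstacle is the dimension-two endpoint: the free Green function of $-\hbar^2\Delta+1$ has only a logarithmic (not a power-law) singularity, so it lies in every $L^{p'}$ but in no better space, and the $\lambda$-integral $\intr_0^\infty N(\lambda)\,d\lambda$ only converges if one picks the Birman--Schwinger power $p$ and the Hölder exponents so that the $(\lambda+1)$-decay of $\|G_{\lambda+1}\|_{L^{p'}}$ beats $1$; getting the constant and the sharp power of $\hbar$ out of this requires the careful interpolation done in \cite[Chapter 4]{LS 08}, which I would follow essentially verbatim once the diamagnetic reduction above has removed the magnetic field. A secondary technical point is that $G_{\lambda+1}$ here is the \emph{torus} Green function rather than the free one, so I would either compare it to the free kernel on $\mathbb{R}^2$ via Poisson summation (as already used repeatedly in the paper) picking up only exponentially small corrections, or work directly with the Fourier series and bound $\sum_k(\hbar^2k^2+\lambda+1)^{-s}$ for the relevant $s$.
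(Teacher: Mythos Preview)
Your outline---layer-cake, Birman--Schwinger, diamagnetic reduction to the non-magnetic Green function---matches the paper. But two key ingredients are missing, and without them the argument does not close.

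First, restricting to \emph{even integer} Schatten exponents cannot work in dimension two. The paper shows that the exponent $m$ must lie strictly in $(1,2)$: one needs $m>1$ so that the on-diagonal value $G_\lambda^m(0)=L^{-2}\sum_{k}(\hbar^2k^2+\lambda)^{-m}$ is finite, and $m<2$ so that the final $\lambda$-integral converges (see below). The paper takes $m=3/2$; to get the diamagnetic comparison $|G_{bA,\lambda}^{3/2}(x,y)|\le G_\lambda^{3/2}(x,y)$ at this fractional power one needs \cref{prop:diam ineq} not only for $G_{bA,\lambda}$ but also for its square root, which the paper imports from \cite[Theorem~4.4]{LS 08}. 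Your kernel-product argument handles only integer $p$, and $p=2$ already gives a divergent $\int(\lambda+1)^{-1}\,d\lambda$, which you noticed but did not resolve.

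Second, you are missing the substitution $\mathcal{V}\mapsto\mathcal{W}_\lambda\coloneq\max(\mathcal{V}-\tfrac{\lambda}{2},0)$. Without it, even with the correct $m$, Birman--Schwinger gives $N_\lambda\le C\hbar^{-2}\lambda^{-m+1}\int_\Omega\mathcal{V}^m$, and $\int_1^\infty\lambda^{-m+1}\,d\lambda$ diverges for every $m<2$. The trick uses $\mathcal{V}\le\mathcal{W}_\lambda+\tfrac{\lambda}{2}$ to bound $N_\lambda$ by the number of negative eigenvalues of $\mathscr{L}_{\hbar,b}+1-\mathcal{W}_\lambda$; then, after the on-diagonal bound (read off directly from the torus Fourier series---no comparison to $\mathbb{R}^2$ is needed), the change of variable $\mu=2\mathcal{V}(x)/\lambda$ turns $\int_0^{2\mathcal{V}(x)}\lambda^{-m+1}\bigl(\mathcal{V}(x)-\tfrac{\lambda}{2}\bigr)^m\,d\lambda$ into $\mathcal{V}(x)^2\int_1^\infty(\mu-1)^m\mu^{-3}\,d\mu$, finite precisely when $m<2$, and this is what produces the desired $\int_\Omega\mathcal{V}^2$. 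This substitution, not any interpolation or clever choice of an even $p$, is what makes the two-dimensional endpoint work.
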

    
    \begin{proof}
        We denote $N_\lambda$ the number of eigenvalues of $\mathscr{L}_{\hbar, b} + 1$ less than or equal to $\lambda$. From \cite[Section 4.3]{LS08},
        \begin{align*}
            -\Tr{\mathbb{1}_{\prth{\mathscr{L}_{\hbar, b} + 1 - \mathcal{V}} \le 0} \prth{\mathscr{L}_{\hbar, b} + 1 - \mathcal{V}}} &= \intr_{\mathbb{R}_+}N_\lambda d\lambda
        \end{align*}
        Define the Birman-Schwinger operator 
        \begin{align*}
            K_\lambda \coloneq \sqrt{\mathcal{V}} G_{bA, \lambda} \sqrt{\mathcal{V}}
        \end{align*}
        and let $B_\lambda$ be the number of eigenvalues of $K_\lambda$ greater or equal to $1$. We use the diamagnetic inequality to restrict to the non magnetic case. Since $G_{bA, \lambda}$ is positive, we can define its square root. Using the arguments of \cite[Theorem 4.4]{LS08} we can deduce from \cref{prop:diam ineq} the diamagnetic inequality for $\sqrt{G_{bA, \lambda}}$:
        \begin{align*}
            \abs{\sqrt{G_{bA, \lambda}}(x, y)} \le \sqrt{G_\lambda}(x, y)
        \end{align*}
        Hence with \cref{prop:diam ineq},
        \begin{align*}
            \abs{G^\frac{3}{2}_{bA, \lambda}(x, y)} = \abs{\intr_\Omega G_{bA, \lambda}(x, z) \sqrt{G_{bA, \lambda}}(z, y) dz} \le \intr_\Omega G_\lambda(x, z) \sqrt{G_\lambda}(z, y) dz = G^\frac{3}{2}_\lambda(x, y)
        \end{align*}
        So taking $m \coloneq 3/2$ and using an inequality on the traces of powers (see \cite[Theorem 4.5]{LS08}),
        \begin{align*} 
            N_\lambda  &= B_\lambda \le \Tr{K_\lambda^m}\le \Tr{\mathcal{V}^{\frac{m}{2}}K_\lambda^m \mathcal{V}^{\frac{m}{2}}} \le \intr_\Omega \mathcal{V}(x)^m \abs{G_{A, \lambda+1}^m(x, x)}dx \\
            &\le \intr_\Omega {\mathcal{V}(x)}^m G^m_{\lambda+1}(x, x)dx
        \end{align*}
        So we obtain
        \begin{align*}
            -\Tr{\mathbb{1}_{\prth{\mathscr{L}_{\hbar, b} + 1 - \mathcal{V}} \le 0} \prth{\mathscr{L}_{\hbar, b} + 1 - \mathcal{V}}} \le \intr_\Omega\intr_1^\infty {\mathcal{V}(x)}^m G^m_\lambda(x, x)dx d\lambda \yesnumber{BS prin}
        \end{align*}
        The kernel of $G_\lambda^m$ is
        \begin{align*}
            G^m_\lambda(x) &= \dfrac{1}{L^2}\sum_{k \in \frac{2\pi \hbar}{L}\mathbb{Z}^2} \dfrac{1}{\prth{k^2 + \lambda}^m} e^{i \frac{k\cdot x}{\hbar}}
        \end{align*}
        We use the integral bound for the sum
        \begin{align*}
            \sum_{k\in \mathbb{Z}} \dfrac{1}{\prth{k^2 + \lambda}^m} \le \lambda^{-m} + \intr_\mathbb{R} \dfrac{1}{\prth{u^2 + \lambda}^m}du
        \end{align*}
        so
        \begin{align*}
            G^m_\lambda(0) 
            &= \dfrac{1}{L^2} \sum_{k, q\in \mathbb{Z}} \dfrac{1}{\prth{\prth{\frac{2\pi\hbar}{L}}^2(k^2 + q^2) + \lambda}^m} 
            = \dfrac{1}{L^2} \prth{\frac{L}{2\pi\hbar}}^{2m} \sum_{k, q\in \mathbb{Z}} \dfrac{1}{\prth{k^2 + q^2 + \prth{\frac{L}{2\pi\hbar}}^2\lambda}^m} \\
            &\le \dfrac{1}{L^2} \prth{\frac{L}{2\pi\hbar}}^{2m} \sum_{k\in\mathbb{Z}} \prth{\dfrac{1}{\prth{k^2 + \prth{\frac{L}{2\pi\hbar}}^2\lambda}^m} + \intr_\mathbb{R}\dfrac{1}{\prth{k^2 + u^2 + \prth{\frac{L}{2\pi\hbar}}^2\lambda}^m}du} \\
        \end{align*}
        We estimate the integral
        \begin{align*}
            \intr_\mathbb{R}\dfrac{1}{\prth{k^2 + u^2 + \prth{\frac{L}{2\pi\hbar}}^2\lambda}^m}du 
            &= \prth{k^2 + \prth{\frac{L}{2\pi\hbar}}^2\lambda}^{-m} \intr_\mathbb{R}\dfrac{1}{\prth{\frac{u^2}{k^2 + \prth{\frac{L}{2\pi\hbar}}^2\lambda} + 1 }^m}du \\
            &= \dfrac{I(m)}{\prth{k^2 + \prth{\frac{L}{2\pi\hbar}}^2\lambda}^{m-\frac{1}{2}}} 
        \end{align*}
        with
        \begin{align*}
            m > \frac{1}{2} \implies I(m) \coloneq \intr_\mathbb{R} \frac{1}{\prth{1 + u^2}^m}du < \infty
        \end{align*}
        Similarly
        \begin{align*}
            G^m_\lambda(0) 
            \le& \dfrac{1}{L^2} \prth{\frac{L}{2\pi\hbar}}^{2m} \sum_{k\in\mathbb{Z}} \prth{\dfrac{1}{\prth{k^2 + \prth{\frac{L}{2\pi\hbar}}^2\lambda}^m} +  \dfrac{I(m)}{\prth{k^2 + \prth{\frac{L}{2\pi\hbar}}^2\lambda}^{m-\frac{1}{2}}}} \\
            \le& \dfrac{\lambda^{-m}}{L^2}  + \dfrac{I(m)}{2\pi\hbar L}\lambda^{-m + \frac{1}{2}} \\
            &+ \dfrac{1}{L^2} \prth{\frac{L}{2\pi\hbar}}^{2m} \prth{ \intr_\mathbb{R}\dfrac{1}{\prth{u^2 + \prth{\frac{L}{2\pi\hbar}}^2\lambda}^{m}}du + \intr_\mathbb{R}\dfrac{ I(m)}{\prth{u^2 + \prth{\frac{L}{2\pi\hbar}}^2\lambda}^{m - \frac{1}{2}}}du} \\
            \le& \dfrac{\lambda^{-m}}{L^2}  + \dfrac{I(m)}{\pi\hbar L}\lambda^{-m + \frac{1}{2}}  + \dfrac{1}{\prth{2\pi\hbar}^2} I(m)I\prth{m-\frac{1}{2}} \lambda^{-m + 1} \\
            \le& \dfrac{C(m)}{\hbar^2} \lambda^{-m +1}
        \end{align*}
        since $\lambda \ge 1$. We need $m > 1$ for the integrals to converge. We use the same trick as \cite{LS08} changing the potential to
        \begin{align*}
            \mathcal{W}_\lambda(x) \coloneq \max \prth{\mathcal{V} - \dfrac{\lambda}{2}, 0}
        \end{align*}
        Combining this with \cref{eq:BS prin} and the change of variable
        \begin{align*}
            \mu \coloneq \frac{2V(x)}{\lambda}, d\lambda = -\dfrac{2V(x)}{\mu^2} d\mu
        \end{align*}
        we obtain
        \begin{align*}
            -\Tr{\mathbb{1}_{\prth{\mathscr{L}_{\hbar, b} + 1 - \mathcal{V}} \le 0} \prth{\mathscr{L}_{\hbar, b} + 1 - \mathcal{V}}}
            &\le \dfrac{C(m)}{\hbar^2}  \intr_1^\infty\intr_\Omega \lambda^{-m + 1}\max\prth{V(x) - \dfrac{\lambda}{2}, 0}^m d\lambda dx\\
            &\le \dfrac{C(m)}{\hbar^2}\intr_\Omega\prth{\intr_0^{2V(x)}  \lambda \prth{\dfrac{2V(x)}{\lambda} - 1}^m d\lambda }dx \\
            &= \dfrac{C(m)}{\hbar^2} \intr_\Omega\prth{\intr_1^\infty \dfrac{2V(x)}{\mu} \prth{\mu - 1}^m \cdot \dfrac{2V(x)}{\mu^2} d\mu }dx \\
            &= \dfrac{C(m)}{\hbar^2} \intr_\Omega V(x)^2 \prth{\intr_1^\infty  \dfrac{\prth{\mu - 1}^m}{\mu^3} d\mu} dx
        \end{align*}
        The integral in $\mu$ converges if $3 - m > 1$. To conclude we need $1 < m < 2$ hence the choice $m=3/2$ is convenient.
    \end{proof}
    
    This leads to proof of the Fundamental inequality of kinetic energy:
    
    \begin{proof}[th:kin energy ineq]
        With the variational principle and the Lieb-Thirring inequality \cref{eq:LT},
        \begin{align*}
            \Tr{\prth{\mathscr{L}_{\hbar, b} + 1}\gamma} 
            =& \Tr{\prth{\mathscr{L}_{\hbar, b} + 1 - \mathcal{V}}\gamma} + \Tr{\mathcal{V} \gamma}  \\
            \ge& \norm{\gamma}_{\mathcal{L}^\infty} \Tr{\mathbb{1}_{\prth{\mathscr{L}_{\hbar, b} + 1 - \mathcal{V}} \le 0} \prth{\mathscr{L}_{\hbar, b} + 1 - \mathcal{V}}} + \Tr{\mathcal{V} \gamma} \\
            \ge& - \dfrac{C_{LT}\norm{\gamma}_{\mathcal{L}^\infty}}{\hbar^2} \intr_\Omega \mathcal{V}^2 + \intr_\Omega \mathcal{V} \rho_\gamma
        \end{align*}
        Then choose $\mathcal{V} \coloneq C_N \mathbb{1}_{\rho_\gamma \le c}\rho_\gamma$:
        \begin{align*}
            \Tr{\prth{\mathscr{L}_{\hbar, b} + 1}\gamma} \ge C_N\prth{1 - C_N \dfrac{C_{LT}\norm{\gamma}_{\mathcal{L}^\infty}}{\hbar^2}} \intr_{\rho_\gamma \le c} {\rho_\gamma}^2
        \end{align*}
        The constant preceding the integral is maximal when
        \begin{align*}
            C_N \coloneq \dfrac{\hbar^2}{2C_{LT}\norm{\gamma}_{\mathcal{L}^\infty}}
        \end{align*}
        and we get
        \begin{align*}
            \Tr{\prth{\mathscr{L}_{\hbar, b} + 1} \gamma} \ge \dfrac{\hbar^2}{4C_{LT}\norm{\gamma}_{\mathcal{L}^\infty}} \intr_{\rho_\gamma \le c} {\rho_\gamma}^2 \yesnumber{fondkineq1}
        \end{align*}
        Since $\mathscr{L}_{\hbar, b} \ge \hbar b$,
        \begin{align*}
            \Tr{\mathscr{L}_{\hbar, b} \gamma} \ge \hbar b \Tr{\gamma}
        \end{align*}
        so because $\hbar b \to \infty$,
        \begin{align*}
             \Tr{\prth{\mathscr{L}_{\hbar, b} + 1} \gamma} \le \prth{1+\dfrac{1}{\hbar b}}\Tr{\mathscr{L}_{\hbar, b} \gamma} \le C \Tr{\mathscr{L}_{\hbar, b} \gamma}
        \end{align*}
        With this and monotone convergence we take the limit $c\to\infty$ in inequality \cref{eq:fondkineq1} and obtain \cref{eq:kin energy}. Applying this to \cref{eq:red d}, we have
        \begin{align*}
            \dfrac{1}{\hbar b} \Tr{\mathscr{L}_{\hbar, b}\gamma_{\psi_N}^{(1)}} 
            \ge C\dfrac{l_b^2}{\norm{\gamma_{\psi_N}^{(1)}}_{\mathcal{L}^\infty}} \norm{\rho_{\psi_N}^{(1)}}_{L^2}^2 
            \ge C Nl_b^2 \norm{\rho_{\psi_N}^{(1)}}_{L^2}^2
            \ge C \norm{\rho_{\psi_N}^{(1)}}_{L^2}^2
        \end{align*}
        For the second reduced density, by symmetry
        \begin{align*}
            &N\prth{\Tr{\mathscr{L}_{\hbar, b}\gamma_{\psi_N}^{(1)}} - \Tr{w \gamma_{\psi_N}^{(2)}}} 
            = \bk{\psi_N}{\prth{\sum_{i=1}^N \mathscr{L}_{\hbar, b}(x_i) - \dfrac{2}{N-1}\sum_{i<j}w(x_i - w_j)}\psi_N} \\
            =& \bk{\psi_N}{\prth{\sum_{i=1}^N \mathscr{L}_{\hbar, b}(x_i) - \dfrac{N}{N-1}\sum_{j=2}^N w(x_1 - w_j)}\psi_N} \\
            \ge& \bk{\psi_N}{\sum_{j=2}^N \prth{\mathscr{L}_{\hbar, b}(x_i) - \dfrac{N}{N-1}w(x_1 - x_j)}\psi_N} \\
            =& \intr_\Omega \bk{\psi_N(x, \bullet)}{\sum_{j=2}^N \prth{\mathscr{L}_{\hbar, b}(x_i) - \dfrac{N}{N-1}w(x - x_j)} \psi_N(x, \bullet)}dx \\
            =& \intr_\Omega\prth{\bk{\psi_N(x, \bullet)}{\sum_{j=2}^N \mathscr{L}_{\hbar, b}(x_i) \psi_N(x, \bullet)} - N \intr_\Omega w(x - y)\rho_{\psi_N(x, \bullet)}^{(1)}dy}dx
        \end{align*}
        Then using \cref{eq:Kin E ineq} and then Young's inequality,
        \begin{align*}
            &\Tr{\mathscr{L}_{\hbar, b}\gamma_{\psi_N}^{(1)}} - \Tr{w \gamma_{\psi_N}^{(2)}} \\
            \ge& \dfrac{1}{N}\intr_\Omega\prth{C \hbar b (N-1) \norm{\rho_{\psi_N(x, \bullet)}^{(1)}}_{L^2}^2 - N \intr_\Omega w(x - y)\rho_{\psi_N(x, \bullet)}^{(1)}dy}dx \\
            \ge& \intr_\Omega\prth{C \hbar b \norm{\rho_{\psi_N(x, \bullet)}^{(1)}}_{L^2}^2 - \intr_\Omega w(x - y)\rho_{\psi_N(x, \bullet)}^{(1)}dy}dx \\
            \ge& \intr_\Omega\prth{C \hbar b \norm{\rho_{\psi_N(x, \bullet)}^{(1)}}_{L^2}^2 - \dfrac{1}{2}\prth{\dfrac{1}{2C\hbar b}\norm{w}_{L^2}^2 + 2C\hbar b \norm{\rho_{\psi_N(x, \bullet)}^{(1)}}_{L^2}^2}}dx
            \ge - \dfrac{C}{\hbar b} \norm{w}_{L^2}^2
        \end{align*}
        Changing $w$ to $\epsilon w$, dividing by $\epsilon$ and using \cref{eq:E reduced d} gives
        \begin{align*}
            \intr_{\Omega^2} w\rho_{\psi_N}^{(2)} \le \dfrac{1}{\epsilon}\Tr{\mathscr{L}_{\hbar, b}\gamma_{\psi_N}^{(1)}} + \dfrac{C}{\hbar b} \epsilon\norm{w}_{L^2}^2
        \end{align*}
        To optimise in $\epsilon$, we choose $\epsilon \coloneq \dfrac{\hbar b}{\norm{w}_{L^2}}$, we get
        \begin{align*}
            \intr_{\Omega^2} w\rho_{\psi_N}^{(2)} 
            \le \prth{\dfrac{1}{\hbar b}\Tr{\mathscr{L}_{\hbar, b}\gamma_{\psi_N}^{(1)}} + C} \norm{w}_{L^2}
            \le  \dfrac{C}{\hbar b }\Tr{\mathscr{L}_{\hbar, b}\gamma_{\psi_N}^{(1)}} \norm{w}_{L^2}
        \end{align*}
        because $\mathscr{L}_{\hbar, b} \ge \hbar b$. Similarly with Young's inequality and \cref{eq:Kin E ineq},
        \begin{align*}
            \abs{\intr_\Omega V \rho_{\psi_N}^{(1)}} \le \dfrac{C}{\hbar b} \Tr{\mathscr{L}_{\hbar, b}\gamma_{\psi_N}^{(1)}} \norm{V}_{L^2}
        \end{align*}
    \end{proof}
    \section{Semi-classical limit}\label{sec:SC}

In this section we introduce the Husimi functions representing densities in the phase space. The fundamentals properties of these functions can be found in \cref{prty:sym measure}. Then we prove that the $N$-body quantum energy can be approximated by a semi-classical functional depending only on Husimi functions in \cref{prop:SC energy}.

\subsection{Husimi functions}

    \begin{notation}[label=Husimi]
        Let $k\in\mathbb{N}^*, \gamma_k \in \mathcal{L}^1(L^2(\Omega^k))$, recalling \cref{not:proj} and \cref{eq:lists} we define the associated Husimi functions or lower symbol as
        \begin{align*}
            m_{\gamma_k}(X_{1:k}) \coloneq \text{Tr} \sbra{\gamma_k \bigotimes_{i=1}^k\Pi_{X_i}} \text{ with } X_{1:k} \in (\mathbb{N}\times\Omega)^k 
        \end{align*}
        Conversely, if $m_k \in L^1\prth{(\mathbb{N}\times\Omega)^k }$, define the associated density matrix
        \begin{align*}
            \gamma_{m_k} \coloneq (2\pi l_b^2)^k \intr_{(\mathbb{N}\times\Omega)^k} m_k(X_{1:k})\bigotimes_{i=1}^k\Pi_{X_i} d\eta^{\otimes k}(X_{1:k})
        \end{align*}
        We call $m_k$ the upper symbol of $\gamma_{m_k}$. We also associate a density to $m_k$:
        \begin{align*}
            \rho_{m_k} \coloneq \sum_{n_{1:k}\in\mathbb{N}^k} m_k(n_{1:k};\bullet)
        \end{align*}
        we extend the definition \cref{eq:SC energy} to Husimi functions, if $k\ge 2$:
        \begin{align*}
            \mathcal{E}_{sc, \hbar b}\sbra{m_k} \coloneq \intr_{\mathbb{N}\times\Omega} E_n m_k^{(1)}(n, R)d\eta(n, R) + \intr_{\mathbb{N}\times\Omega}  V m_k^{(1)} d\eta + \intr_{(\mathbb{N}\times\Omega)^2}  w m_k^{(2)} d\eta^{\otimes 2} \yesnumber{SC energy 2}
        \end{align*}
        and we also extend \cref{eq:qLL} to $\rho_k \in L^1\prth{\Omega^k}$:
        \begin{align*}
            \mathcal{E}_{qLL}\sbra{\rho_k} = \intr_\Omega V \rho_k^{(1)} + \intr_{\Omega^2} w \rho_k^{(2)} \yesnumber{qLL energy 2}
        \end{align*}
        If one starts from $\psi_N \in L^2_-\prth{\Omega^N}$ we use the notation
        \begin{align*}
            m_{\psi_N} \coloneq m_{\gamma_{\psi_N}}
        \end{align*}
    \end{notation}
    
    For another discussion and further references about lower and upper symbols one can look at \cite[Definition 3.13]{Rougerie20}. The $k$-body Husimi function is the joint probability distribution for $k$ particles in phase space. Similarly as for \cref{eq:consistent not}, we have
    \begin{align*}
        m_{\gamma_N}^{(k)} = m_{\gamma_N^{(k)}} \text{ and } \rho_{m_{\gamma_N}}^{(k)} = \rho_{m_{\gamma_N}^{(k)}}
    \end{align*}

    We have the following properties for the Husimi functions, inherited from density matrices.
    
    \begin{property}[title=Husimi functions, label=sym measure]
        Let $\gamma_N$ be an $N$-body density matrix, then $m_{\gamma_N}^{(k)}$ are symmetric, consistent and satisfy
        \begin{align*}
            &0 \le m_{\gamma_N}^{(k)} \le \dfrac{(N-k)!}{(2\pi l_b^2)^k N!} + \mathcal{O}(l_b) \yesnumber{Husimi k bound} \\
            & \intr_{(\mathbb{N}\times\Omega)^k} m_{\gamma_N}^{(k)} d\eta^{\otimes k} = \norm{m_{\gamma_N}^{(k)}}_{L^1} = 1 \yesnumber{Husimi norm} \\
            &\rho_{m_{\gamma_N}}^{(k)} = (g^2_\lambda)^{\otimes k} * \rho_{\gamma_N}^{(k)} \yesnumber{husimi to density}
        \end{align*}
    \end{property}

    These are the usual properties for Husimi functions slightly modified here due to the approximation in \cref{cor:tr pi}. The proof (see \cref{sec:appendix}) uses the following translation between reduced density matrices and Husimi functions:

    \begin{lemma}[label=equiv Husimi admissible, title=Relations between Husimi functions and reduced densities]
        Let $\gamma_k \in \mathcal{L}^1\prth{L^2\prth{\Omega^k}}$ be a positive operator, then $m_{\gamma_k} \in L^1\prth{(\mathbb{N}\times\Omega)^k }$ and
        \begin{align*}
            &0 \le m_{\gamma_k} \le \dfrac{\norm{\gamma_k}_{\mathcal{L}^\infty}}{(2\pi l_b^2)^k}(1 + \mathcal{O}(l_b))
            & \intr_{(\mathbb{N}\times\Omega)^k} m_{\gamma_k} d\eta^{\otimes k} = \Tr{\gamma_k}
        \end{align*}
        Conversely if $m_k \in L^1\prth{(\mathbb{N}\times\Omega)^k}$ is positive, then $\gamma_{m_k} \in \mathcal{L}^1(L^2(\Omega^k))$ and
        \begin{align*}
            &0\le \gamma_{m_k} \le \prth{2\pi l_b^2}^k\norm{m_k}_{L^\infty}
            & \Tr{\gamma_{m_k}} = \norm{m_k}_{L^1} + \mathcal{O}(l_b)
        \end{align*}
        Moreover if $\gamma_N \in \mathcal{L}^1\prth{L_-^2\prth{\Omega^k}}$ and $1 \le k \le N$, then
        \begin{align*}
            m_{\gamma_N}^{(k)} \le \dfrac{(N-k)!}{\prth{2\pi l_b^2}^k N!}\Tr{\gamma_N}\prth{1 + \mathcal{O}\prth{l_b}}
        \end{align*}
    \end{lemma}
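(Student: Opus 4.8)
The plan is to prove the three assertions in turn, each reducing to two facts already at hand: the resolution of the identity $\intr_{\mathbb{N}\times\Omega}\Pi_X\,d\eta(X)=\text{Id}$ from \cref{eq:nR res id} (equivalently, applying Tonelli variable by variable to the positive operators $\Pi_{X_i}$, $\intr_{(\mathbb{N}\times\Omega)^k}\bigotimes_{i=1}^k\Pi_{X_i}\,d\eta^{\otimes k}=\text{Id}$ on $L^2(\Omega^k)$), and the trace estimate $\Tr{\Pi_{n,R}}=\frac{1}{2\pi l_b^2}\prth{1+\mathcal{O}(l_b)}$ from \cref{cor:tr pi}. Positivity of $\Pi_X$, hence of $\bigotimes_i\Pi_{X_i}$, is used throughout, as is $\Tr{\gamma A}\le\norm{\gamma}_{\mathcal{L}^\infty}\Tr{A}$ for $\gamma,A\ge 0$. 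For the forward direction, $m_{\gamma_k}(X_{1:k})=\Tr{\gamma_k\bigotimes_i\Pi_{X_i}}$ is measurable and nonnegative; $\gamma_k\le\norm{\gamma_k}_{\mathcal{L}^\infty}\text{Id}$ together with $\Tr{\bigotimes_i\Pi_{X_i}}=\prod_i\Tr{\Pi_{X_i}}$ and \cref{cor:tr pi} gives $0\le m_{\gamma_k}\le\norm{\gamma_k}_{\mathcal{L}^\infty}(2\pi l_b^2)^{-k}\prth{1+\mathcal{O}(l_b)}$; integrating (Tonelli) and using the resolution of identity in each variable gives $\intr m_{\gamma_k}\,d\eta^{\otimes k}=\Tr{\gamma_k}<\infty$, so also $m_{\gamma_k}\in L^1$.

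For the converse, the nonnegative quadratic form $\psi\mapsto(2\pi l_b^2)^k\intr m_k(X)\braket{\psi | \bigotimes_i\Pi_{X_i}\psi}\,d\eta^{\otimes k}$ is, by the resolution of identity, bounded above by $(2\pi l_b^2)^k\norm{m_k}_{L^\infty}\norm{\psi}_{L^2}^2$, so it defines a positive operator $\gamma_{m_k}$ with $0\le\gamma_{m_k}\le(2\pi l_b^2)^k\norm{m_k}_{L^\infty}$; its trace is $(2\pi l_b^2)^k\intr m_k(X)\prod_i\Tr{\Pi_{X_i}}\,d\eta^{\otimes k}=\norm{m_k}_{L^1}+\mathcal{O}(l_b)$ by \cref{cor:tr pi}, which is finite, so $\gamma_{m_k}\in\mathcal{L}^1$.

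For the fermionic bound, start from $m_{\gamma_N}^{(k)}=m_{\gamma_N^{(k)}}$ recorded above (integrating $X_{k+1:N}$ out of $m_{\gamma_N}$ via the resolution of identity leaves $\Tr{\gamma_N^{(k)}\bigotimes_{i=1}^k\Pi_{X_i}}$) and from the Pauli bound \cref{eq:red d}, rescaled by $\Tr{\gamma_N}$: $0\le\gamma_N^{(k)}\le\frac{k!(N-k)!}{N!}\Tr{\gamma_N}$. Feeding this into the forward direction costs a spurious factor $k!$, which is recovered by using that $\gamma_N^{(k)}$ is fermionic, $P_-\gamma_N^{(k)}P_-=\gamma_N^{(k)}$, where $P_-$ is the orthogonal projection of $L^2(\Omega^k)$ onto $L^2_-(\Omega^k)$ (its kernel is antisymmetric in the $x$- and in the $y$-variables separately because that of $\gamma_N$ is). Then $m_{\gamma_N^{(k)}}(X_{1:k})=\Tr{\gamma_N^{(k)}P_-\prth{\bigotimes_{i=1}^k\Pi_{X_i}}P_-}\le\norm{\gamma_N^{(k)}}_{\mathcal{L}^\infty}\Tr{P_-\bigotimes_{i=1}^k\Pi_{X_i}}$, and the crux is the Slater-type inequality $\Tr{P_-\bigotimes_{i=1}^k\Pi_{X_i}}\le\frac{1}{k!}\prod_{i=1}^k\Tr{\Pi_{X_i}}$. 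To get it, write each $\Pi_{X_i}=\sum_\alpha a_i^\alpha\ket{e_i^\alpha}\bra{e_i^\alpha}$ as a positive combination of rank-one projections with $\norm{e_i^\alpha}=1$ and $\sum_\alpha a_i^\alpha=\Tr{\Pi_{X_i}}$; then $\Tr{P_-\bigotimes_i\Pi_{X_i}}=\sum_{\alpha_1,\dots,\alpha_k}\prth{\prod_i a_i^{\alpha_i}}\norm{P_-\prth{e_1^{\alpha_1}\otimes\cdots\otimes e_k^{\alpha_k}}}^2$, and by the Leibniz formula $\norm{P_-\prth{f_1\otimes\cdots\otimes f_k}}^2=\frac{1}{k!}\det\prth{\braket{f_i | f_j}}_{i,j}\le\frac{1}{k!}$ by Hadamard's inequality for the unit-diagonal positive semidefinite Gram matrix; summing proves it. Combining with the rescaled Pauli bound and \cref{cor:tr pi} gives $m_{\gamma_N}^{(k)}\le\frac{(N-k)!}{(2\pi l_b^2)^k N!}\Tr{\gamma_N}\prth{1+\mathcal{O}(l_b)}$.

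The first two parts should be routine manipulation with the resolution of identity and \cref{cor:tr pi}; the real obstacle is the third, namely the sharp exponent $\frac{(N-k)!}{N!}$ — the quantum Pauli bound that degenerates to the constraint $m\le\frac{1}{(q+r)L^2}$ in the limit. It forces one to use that $\gamma_N^{(k)}$ lives on the antisymmetric subspace and to prove the trace inequality $\Tr{P_-\bigotimes_i\Pi_{X_i}}\le\frac{1}{k!}\prod_i\Tr{\Pi_{X_i}}$, since the naive bound $\prod_i\Tr{\Pi_{X_i}}$ loses the decisive $k!$. One minor caveat worth recording: the $\mathcal{O}(l_b)$ in \cref{cor:tr pi} hides constants depending on the Landau indices, harmless here because in every application the symbols are concentrated on finitely many Landau levels.
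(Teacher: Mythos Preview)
Your treatment of the first two assertions matches the paper's proof essentially verbatim: positivity of $\Pi_X$, the trace estimate from \cref{cor:tr pi}, and the resolution of the identity \cref{eq:nR res id} do all the work.

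For the fermionic bound your argument is correct but follows a genuinely different route from the paper. The paper does \emph{not} invoke the Pauli bound \cref{eq:red d} on $\gamma_N^{(k)}$; instead it lifts back to the $N$-body level, diagonalizes each $\Pi_{X_i}$, and proves the operator inequality
\[
\ket{\psi_1\otimes\cdots\otimes\psi_k}\bra{\psi_1\otimes\cdots\otimes\psi_k}\otimes\text{Id}_{L^2(\Omega^{N-k})}\le\frac{(N-k)!}{N!}\quad\text{on }L^2_-(\Omega^N)
\]
by an explicit computation on Slater determinants, then sums over eigenvalues to recover $\prod_i\Tr{\Pi_{X_i}}$. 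You instead stay at the $k$-body level and split the information in two: the Pauli bound $\norm{\gamma_N^{(k)}}_{\mathcal{L}^\infty}\le\frac{k!(N-k)!}{N!}\Tr{\gamma_N}$ (which already encodes the $N$-body antisymmetry) and the separate fact that $\gamma_N^{(k)}$ is supported on $L^2_-(\Omega^k)$, which lets you replace $\Tr{\bigotimes_i\Pi_{X_i}}$ by $\Tr{P_-\bigotimes_i\Pi_{X_i}}$ and recover the missing $\tfrac{1}{k!}$ via the Gram--Hadamard bound $\norm{P_-(f_1\otimes\cdots\otimes f_k)}^2=\tfrac{1}{k!}\det(\braket{f_i|f_j})\le\tfrac{1}{k!}$. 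Your decomposition is more modular and makes transparent exactly where the $k!$ is lost and regained; the paper's argument is more direct in that it extracts the sharp $\tfrac{(N-k)!}{N!}$ in a single step without passing through \cref{eq:red d}. Both are valid.
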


\subsection{Semi-classical energy} \label{subsec:SC energy}

    \begin{proposition}[title=Semi-classical approximation, label=SC energy]
        Let $\psi_N \in L_-^2(\Omega^N), \norm{\psi_N}_{L^2} = 1$, the quantum energy can be approximated with the semi-classical energy \cref{eq:SC energy 2}
        \begin{align*}
            \dfrac{\bk{\psi_N}{\mathcal{H}_N\psi_N}}{N} \eq_{N\to\infty}  \mathcal{E}_{sc, \hbar b}\sbra{m_{\psi_N}} + \mathcal{O}\prth{\dfrac{f(\lambda)}{\hbar b} \Tr{\mathscr{L}_{\hbar, b} \gamma_N^{(1)}}} + \mathcal{O}\prth{(\hbar \lambda)^2} \yesnumber{SC approx}
        \end{align*}
        where        
        \begin{align*}
            f(\lambda) \coloneq \text{max}\prth{\norm{g_\lambda^2 * V - V}_{L^2}, \norm{(g_\lambda^2)^{\otimes 2} * w - w}_{L^2}} \limit\displaylimits_{\lambda \to \infty} 0 \yesnumber{f(lambda)}
        \end{align*}
    \end{proposition}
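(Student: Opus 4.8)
The plan is to start from the exact reduction of the quantum energy to the first two reduced density matrices, \cref{eq:E reduced d}, so that with $\gamma_N \coloneq \gamma_{\psi_N}$
\begin{align*}
\dfrac{\bk{\psi_N}{\mathscr{H}_N\psi_N}}{N} = \Tr{\mathscr{L}_{\hbar, b}\gamma_N^{(1)}} + \Tr{V\gamma_N^{(1)}} + \Tr{w\gamma_N^{(2)}},
\end{align*}
and then to compare each of the three summands with the corresponding term of $\mathcal{E}_{sc, \hbar b}\sbra{m_{\psi_N}}$ in \cref{eq:SC energy 2}. The kinetic comparison will be an exact identity producing the error $\mathcal{O}\prth{(\hbar\lambda)^2}$, while the two potential comparisons will carry the error $\mathcal{O}\prth{f(\lambda)(\hbar b)^{-1}\Tr{\mathscr{L}_{\hbar, b}\gamma_N^{(1)}}}$.

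For the kinetic term I would use $\mathscr{L}_{\hbar, b} = \sum_{n\in\mathbb{N}}E_n\Pi_n$ together with $\sum_{n\in\mathbb{N}}\Pi_{n, R} = g_\lambda(\bullet - R)^2$ (from \cref{eq:nR res id}), giving $\sum_n E_n\Pi_{n, R} = g_\lambda(\bullet - R)\mathscr{L}_{\hbar, b}g_\lambda(\bullet - R)$, hence after summing over $n$ and integrating over $R$
\begin{align*}
\intr_{\mathbb{N}\times\Omega}E_n m_{\psi_N}^{(1)}(n, R)d\eta(n, R) = \Tr{\gamma_N^{(1)}\intr_\Omega g_\lambda(\bullet - R)\mathscr{L}_{\hbar, b}g_\lambda(\bullet - R)dR}.
\end{align*}
Then from $\mathscr{P}_{\hbar, b}\prth{g_\lambda(\bullet - R)\psi} = g_\lambda(\bullet - R)\mathscr{P}_{\hbar, b}\psi + i\hbar\prth{\nabla g_\lambda}(\bullet - R)\psi$, expanding the square, and using that $\intr_\Omega g_\lambda(\bullet - R)^2dR = 1$ makes the cross term vanish after integration in $R$ (since $\intr_\Omega g_\lambda(\bullet - R)\prth{\nabla g_\lambda}(\bullet - R)dR = \frac12\nabla(1) = 0$), I obtain the continuous IMS identity, as quadratic forms on $H^1_{mp}(\Omega)$,
\begin{align*}
\intr_\Omega g_\lambda(\bullet - R)\mathscr{L}_{\hbar, b}g_\lambda(\bullet - R)dR = \mathscr{L}_{\hbar, b} + \hbar^2\norm{\nabla g_\lambda}_{L^2}^2 = \mathscr{L}_{\hbar, b} + \hbar^2\lambda^2\norm{\nabla g}_{L^2}^2,
\end{align*}
the last equality by the scaling $g_\lambda = \lambda g(\lambda\,\bullet)$. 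Since $\Tr{\gamma_N^{(1)}} = 1$, the kinetic term of $\mathcal{E}_{sc, \hbar b}$ equals $\Tr{\mathscr{L}_{\hbar, b}\gamma_N^{(1)}} + \hbar^2\lambda^2\norm{\nabla g}_{L^2}^2 = \Tr{\mathscr{L}_{\hbar, b}\gamma_N^{(1)}} + \mathcal{O}\prth{(\hbar\lambda)^2}$.

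For the two potential terms I would use \cref{eq:husimi to density}, namely $\rho_{m_{\psi_N}}^{(1)} = g_\lambda^2 * \rho_{\psi_N}^{(1)}$ and $\rho_{m_{\psi_N}}^{(2)} = (g_\lambda^2)^{\otimes 2} * \rho_{\psi_N}^{(2)}$; since $g$ is radial, $g_\lambda^2$ is even and the convolution can be moved onto the potentials:
\begin{align*}
\intr_{\mathbb{N}\times\Omega}V m_{\psi_N}^{(1)}d\eta = \intr_\Omega \prth{g_\lambda^2 * V}\rho_{\psi_N}^{(1)}, \qquad \intr_{(\mathbb{N}\times\Omega)^2}w m_{\psi_N}^{(2)}d\eta^{\otimes 2} = \intr_{\Omega^2}\prth{(g_\lambda^2)^{\otimes 2}*w}\rho_{\psi_N}^{(2)},
\end{align*}
the function $(g_\lambda^2)^{\otimes 2}*w$ depending only on $x-y$. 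The discrepancies with the quantum terms are thus $\intr_\Omega\prth{V - g_\lambda^2*V}\rho_{\psi_N}^{(1)}$ and $\intr_{\Omega^2}\prth{w - (g_\lambda^2)^{\otimes 2}*w}\rho_{\psi_N}^{(2)}$; I would control them by applying the Lieb--Thirring bounds \cref{eq:Kin E ineq} and \cref{eq:Kin E ineq w} with $V$ and $w$ replaced by $\pm\prth{V - g_\lambda^2*V}$ and $\pm\prth{w - (g_\lambda^2)^{\otimes 2}*w}$, whose $L^2$ norms are both $\le Cf(\lambda)$ by \cref{eq:f(lambda)}; this bounds each discrepancy by $Cf(\lambda)(\hbar b)^{-1}\Tr{\mathscr{L}_{\hbar, b}\gamma_N^{(1)}}$. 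Collecting the three estimates produces \cref{eq:SC approx}.

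The step I expect to be delicate is the continuous IMS identity and the exchanges surrounding it: one must verify that $g_\lambda(\bullet - R)$ maps the magnetic periodic form domain $H^1_{mp}(\Omega)$ into itself (it does, being smooth and periodic on $\mathbb{T}$), and that the Landau sum $\sum_n E_n$, the integral over $R$, and the trace against $\gamma_N^{(1)}$ may be freely interchanged — which is legitimate because $\psi_N \in \text{Dom}(\mathscr{H}_N)$ forces $\Tr{\mathscr{L}_{\hbar, b}\gamma_N^{(1)}} < \infty$ and all the relevant terms are nonnegative, so Tonelli applies. The remaining input $f(\lambda) \to 0$ is merely the statement that $g_\lambda^2$ and $(g_\lambda^2)^{\otimes 2}$ are approximate identities; everything else is bookkeeping with the properties of Husimi functions recorded in \cref{lem:equiv Husimi admissible}.
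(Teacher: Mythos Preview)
Your proof is correct and follows essentially the same route as the paper. The only cosmetic difference is that you obtain the kinetic identity via the quadratic-form IMS expansion $\abs{\mathscr{P}_{\hbar,b}(g_\lambda\psi)}^2 = g_\lambda^2\abs{\mathscr{P}_{\hbar,b}\psi}^2 + \hbar^2\abs{\nabla g_\lambda}^2\abs{\psi}^2 + (\text{cross term})$ and then integrate in $R$, whereas the paper reaches the same conclusion by computing the commutator $\sbra{\mathscr{L}_{\hbar,b}, g_\lambda(\bullet - R)}$ explicitly; in both cases the cross term dies because $\intr_\Omega g_\lambda(\bullet - R)\nabla g_\lambda(\bullet - R)\,dR = 0$, and the surviving error is $(\hbar\lambda)^2\norm{\nabla g}_{L^2}^2$. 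The treatment of the potential terms via \cref{eq:husimi to density} and the Lieb--Thirring bounds \cref{eq:Kin E ineq}, \cref{eq:Kin E ineq w} is identical to the paper's.
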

        
    The kinetic energy
    \begin{align*}
        \dfrac{1}{\hbar b } \Tr{\mathscr{L}_{\hbar, b} \gamma_{\psi_N}^{(1)}}
    \end{align*}
    will be bounded when we will take a sequence of minimizers of the $N$-body quantum energy. Recalling \cref{eq:hbar} and \cref{eq:l_b scaling},
    \begin{align*}
        b l_b = \mathcal{O}\prth{\hbar N l_b} = \mathcal{O}\prth{\hbar N^{\frac{1}{2}}} \gg 1
    \end{align*}
    so with \cref{eq:lambda scaling}
    \begin{align*}
        \prth{\hbar \lambda}^2 \ll \hbar \lambda \ll \hbar b \lambda l_b \ll 1 \yesnumber{lambda}
    \end{align*}
    Moreover, $\lambda \to \infty$ so the error terms in \cref{eq:SC approx} will be small.
    
    \begin{proof}[prop:SC energy]
        With \cref{eq:E reduced d} in mind, we start by computing the kinetic term. Inserting the resolution of identity \cref{eq:nR res id}, we have
        \begin{align*}
            \Tr{\mathscr{L}_{\hbar, b}\gamma_{\psi_N}^{(1)}} = \intr_{\mathbb{N}\times\Omega} \Tr{\mathscr{L}_{\hbar, b} g_\lambda(\bullet-R)\Pi_n g_\lambda(\bullet-R)\gamma_{\psi_N}^{(1)}} d\eta(n, R) 
        \end{align*}
        Now, we use $\mathscr{L}_{\hbar, b} \Pi_n= E_n \Pi_n$ by commuting $\mathscr{L}_{\hbar, b}$ with $g_\lambda(\bullet-R)$ to obtain
        \begin{align*}
            \Tr{\mathscr{L}_{\hbar, b}\gamma_{\psi_N}^{(1)}} 
            =& \Tr{\intr_{\mathbb{N}\times\Omega}E_n \Pi_{n,R} \gamma_{\psi_N}^{(1)} d\eta(n, R)} \\ 
            &+ \Tr{\gamma_{\psi_N}^{(1)}\intr_{\mathbb{N}\times\Omega} \sbra{\mathscr{L}_{\hbar, b}, g_\lambda(\bullet-R)}\Pi_n g_\lambda(\bullet-R) dR} \\
            =&  \intr_{\mathbb{N}\times\Omega} E_n m_{\psi_N}^{(1)}(n, R)d\eta(n, R) + \Tr{\gamma_{\psi_N}^{(1)}\intr_\Omega \sbra{\mathscr{L}_{\hbar, b}, g_\lambda(\bullet-R)} g_\lambda(\bullet-R)dR} \yesnumber{Tr kin}
        \end{align*}
        We compute
        \begin{align*}
            \sbra{\mathscr{P}_{\hbar, b}, g_\lambda(\bullet-R)} = i\hbar\nabla g_\lambda(\bullet-R)
        \end{align*}
        and
        \begin{align*}
            \sbra{\mathscr{L}_{\hbar, b}, g_\lambda(\bullet-R)} 
            &= \sbra{\mathscr{P}_{\hbar, b}, g_\lambda(\bullet-R)}\cdot\mathscr{P}_{\hbar, b} + \mathscr{P}_{\hbar, b} \cdot\sbra{\mathscr{P}_{\hbar, b}, g_\lambda(\bullet-R)} \\
            &= 2i\hbar \nabla g_\lambda(\bullet-R) \cdot \mathscr{P}_{\hbar, b} - \hbar^2 \Delta g_\lambda(\bullet-R) \\
            &= \mathscr{P}_{\hbar, b} \cdot 2i\hbar \nabla g_\lambda(\bullet-R) + \hbar^2 \Delta g_\lambda(\bullet-R) \yesnumber{com H g}
        \end{align*}
        inserting this in \cref{eq:Tr kin}, we find
        \begin{align*}
            \Tr{\mathscr{L}_{\hbar, b}\gamma_{\psi_N}^{(1)}} 
            =&  \intr_{\mathbb{N}\times\Omega} E_n m_{\psi_N}^{(1)}(n, R)d\eta(n, R)  \\
            &+ 2i \hbar \Tr{\gamma_{\psi_N}^{(1)}\mathscr{P}_{\hbar, b}\cdot\intr_\Omega \nabla g_\lambda(\bullet-R)g_\lambda(\bullet-R)dR} \\
            &+ \hbar^2 \Tr{\gamma_{\psi_N}^{(1)}\intr_\Omega \Delta g_\lambda(\bullet-R) g_\lambda(\bullet-R)dR}
        \end{align*}
        But because $g$ has a fixed $L^2$ norm and is periodic
        \begin{align*}
            \nabla \intr_\Omega g_\lambda(\bullet-R)^2dR = 0 = 2\intr_\Omega \nabla g_\lambda(\bullet-R) g_\lambda(\bullet-R)dR
        \end{align*}
        Moreover
        \begin{align*}
            \intr_\Omega \Delta g_\lambda(\bullet-R) g_\lambda(\bullet-R)dR 
            = - \intr_\Omega \prth{\nabla g_\lambda}^2 
            = - \lambda^4 \intr_\Omega \prth{\nabla g(\lambda x)}^2 dx
            = - \lambda^2 \norm{\nabla g}_{L^2}^2 \yesnumber{grad g lambda}
        \end{align*}
        Therefore
        \begin{align*}
            \Tr{\mathscr{L}_{\hbar, b}\gamma_{\psi_N}^{(1)}} = \intr_{\mathbb{N}\times\Omega}E_n m_{\psi_N}^{(1)}(n, R) d\eta(n, R) - (\hbar\lambda)^2 \norm{\nabla g}_{L^2}^2 \yesnumber{kinetic energy term}
        \end{align*}
        If we take a $k$ variable potential $V_k \in L^1\prth{\Omega^k}$,
        \begin{align*}
            \Tr{V_k \gamma_{\psi_N}^{(k)}} = \intr_{\Omega^k}\gamma_{\psi_N}^{(k)}\prth{x_{1:k}; x_{1:k}}V_k(x_{1:k})dx_{1:k} = \intr_{\Omega^k}\rho_{\psi_N}^{(k)}V_k
        \end{align*}
        To express this in terms of Husimi functions we use \cref{eq:husimi to density}:
        \begin{align*}
            \Tr{V_k \gamma_{\psi_N}^{(k)}} &= \intr_{\Omega^k} \rho_{m_{\gamma_N}}^{(k)} V_k  + \intr_{\Omega^k} \prth{\rho_{\psi_N}^{(k)} - (g^2_\lambda)^{\otimes k} * \rho_{\psi_N}^{(k)}}V_k \\
            &=\intr_{\Omega^k} \rho_{m_{\gamma_N}}^{(k)} V_k  + \intr_{\Omega^k} \rho_{\psi_N}^{(k)}\prth{V_k - (g^2_\lambda)^{\otimes k} * V_k}
        \end{align*}
        Thus applying \cref{eq:E reduced d} and using \cref{eq:kinetic energy term},
        \begin{align*}
            \dfrac{\bk{\psi_N}{\mathcal{H}_N\psi_N}}{N} &=  \Tr{(\mathscr{L}_{\hbar, b}+V)\gamma_{\psi_N}^{(1)}} + \Tr{w \gamma_{\psi_N}^{(2)}} \\
            &= \intr_{\mathbb{N}\times\Omega} E_n m_{\psi_N}^{(1)}(n, R)d\eta(n, R) + \intr_\Omega \rho_{\psi_N}^{(1)} V + \intr_{\Omega^2} \rho_{\psi_N}^{(2)} w -  (\hbar \lambda)^2  \norm{\nabla g}_{L^2}^2 \\
            &=  \mathcal{E}_{sc, \hbar b}\sbra{m_{\psi_N}} + \intr_\Omega \rho_{\psi_N}^{(1)}\sbra{V - g^2_\lambda*V} + \intr_{\Omega^2}\rho_{\psi_N}^{(2)} \sbra{w - (g^2_\lambda)^{\otimes 2} * w} - (\hbar \lambda)^2  \norm{\nabla g}_{L^2}^2
        \end{align*}
        Using $V, w \in L^2(\Omega)$ and the fact that $w$ and thus 
        \begin{align*}
           (g^2_\lambda)^{\otimes 2} * w (x, y) = \iint\displaylimits_{\Omega^2} g^2_\lambda(z)g^2_\lambda(t)w(x - y + t - z)dzdt
        \end{align*}
        only depends on $x - y$ we can use the kinetic energy inequalities \cref{eq:Kin E ineq} and \cref{eq:Kin E ineq w} to control the errors terms:
        \begin{align*}
            \abs{\dfrac{\bk{\psi_N}{\mathcal{H}_N\psi_N}}{N} -  \mathcal{E}_{sc, \hbar b}\sbra{m_{\psi_N}}} 
            \le& \abs{\intr_\omega \rho_{\psi_N}^{(1)}\sbra{V - g_\lambda^2 * V}} + \abs{\intr_{\Omega^2}\rho_{\psi_N}^{(2)} \sbra{w - (g^2_\lambda)^{\otimes 2} * w}}  \\
            &+ (\hbar \lambda)^2  \norm{\nabla g}_{L^2}^2\\
            \le& \dfrac{C}{\hbar b } \Tr{\mathscr{L}_{\hbar, b} \gamma_N^{(1)}} f(\lambda) + (\hbar \lambda)^2 \norm{\nabla g}_{L^2}^2
        \end{align*}
        and we have        
        \begin{align*}
            f(\lambda) \limit\displaylimits_{\lambda\to\infty} 0
        \end{align*}
    \end{proof}

    \section{Mean field limit}\label{sec:MF}

    In \cref{sec:SC}, we went from the quantum $N$-body energy to the semi-classical energy \cref{eq:SC energy 2} (\cref{prop:SC energy}). The last step needed to obtain the limit models \cref{eq:SC energy} and \cref{eq:qLL} out of \cref{eq:SC energy 2} and \cref{eq:qLL energy 2} is to remove correlations. Indeed we see that for $m \in L^1\prth{\mathbb{N}\times \Omega}$ and $\rho \in L^1(\Omega)$
    \begin{align*}
        \mathcal{E}_{sc, \hbar b}\sbra{m^{\otimes 2}} = \mathcal{E}_{sc, \hbar b}\sbra{m} \qquad
        \mathcal{E}_{qLL}\sbra{\rho^{\otimes 2}} =  \mathcal{E}_{qLL}\sbra{\rho}
    \end{align*}
    For fermionic states there are always some correlations due to anti-symmetry. Therefore the objective of this section is to prove that all other correlations are negligible, that is to say justifying the mean field approximation. The main tools are Lieb's variational principle (\cref{th:PLieb}) for the energy upper bound in \cref{sec:UP E bound} and the De Finetti \cref{th:DeFinetti} for the lower bound in \cref{sec:LOW E bound}.

\subsection{Energy upper bound}\label{sec:UP E bound}

    In this part we prove the energy upper bound:

    \begin{proposition}[title=Upper energy bound, label=EUB]
        \begin{align*}
            \dfrac{E_N^0}{N} \le \hbar b E^{q,r} + E_{V}^{q,r} + E_{w}^{q,r} + \mathcal{E}_{qLL}\sbra{\rho}
                + \hbar b\mathcal{O}\prth{1 - \dfrac{d(q+r)}{N}} + \mathcal{O}\prth{f(\lambda)} + \mathcal{O}\prth{\hbar b \lambda l_b} 
        \end{align*}
    \end{proposition}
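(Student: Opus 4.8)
The plan is to produce a trial $N$-body state whose energy realizes, up to controlled errors, the semiclassical energy of the saturated profile $m_\rho$ of \cref{not:sc density}, and then to conclude with \cref{prop:sat LLL}. The trial state is the coherent-state superposition attached to $m_\rho$: set $\gamma_{m_\rho}=(2\pi l_b^2)\int_{\mathbb{N}\times\Omega}m_\rho(n,R)\,\Pi_{n,R}\,d\eta(n,R)$ as in \cref{not:Husimi} --- the $q$ lowest Landau levels saturated at density $\tfrac1{L^2(q+r)}$ and the $q$-th level carrying the coherent density $\rho$ --- and feed it to Lieb's variational principle \cref{th:PLieb} to obtain an admissible $N$-body fermionic state $\Gamma_N$. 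By \cref{lem:equiv Husimi admissible}, together with $\int m_\rho\,d\eta=1$ and $\norm{m_\rho}_{L^\infty}=\tfrac1{L^2(q+r)}$, the operator $\gamma_{m_\rho}$ has trace $1+\mathcal{O}(l_b)$ and operator norm at most $\tfrac1{d(q+r)}$; by \cref{eq:speed of LL conv} one has $\tfrac1{d(q+r)}=\tfrac1N(1+o(\tfrac1{\hbar b}))$, so after rescaling by $1-\mathcal{O}(l_b)-\mathcal{O}(1-\tfrac{d(q+r)}N)$ it becomes a genuine reduced one-body density matrix, with trace one and bounded by $\tfrac1N$. This normalization step is the source of the $\hbar b\,\mathcal{O}(1-\tfrac{d(q+r)}N)$ term in the statement: it records that the occupation numbers $qd$ and $N-qd$ differ from $\tfrac{qN}{q+r}$ and $\tfrac{rN}{q+r}$ by $\mathcal{O}(N(1-\tfrac{d(q+r)}N))$, which, multiplied by the per-particle Landau energies of order $\hbar b$, gives exactly that error.

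Next, writing the quantum energy through \cref{eq:E reduced d} as $\tfrac1N\Tr{\mathscr{H}_N\Gamma_N}=\Tr{(\mathscr{L}_{\hbar,b}+V)\Gamma_N^{(1)}}+\Tr{w\,\Gamma_N^{(2)}}$, one estimates each piece for the coherent trial. For the kinetic part I would use the commutator identity \cref{eq:com H g} for $[\mathscr{L}_{\hbar,b},g_\lambda(\bullet-R)]$, exactly as in \cref{prop:SC energy} but read in the reverse direction: with $\mathscr{L}_{\hbar,b}\Pi_n=E_n\Pi_n$, with $\Tr{\Pi_n}=\tfrac1{2\pi l_b^2}+\mathcal{O}(\tfrac1{l_b})$ from \cref{cor:tr pi}, and with the kernel bound $\abs{(\mathscr{P}_{\hbar,b}\Pi_n)(z,z)}=\mathcal{O}(b/l_b)$ from \cref{eq:kernel approx 2}, this gives $\Tr{\mathscr{L}_{\hbar,b}\Gamma_N^{(1)}}=\int_{\mathbb{N}\times\Omega}E_n\,m_\rho\,d\eta+\mathcal{O}(\hbar b\lambda l_b)$, the localizer gradient acting only on the non-constant $q$-th level, with $\norm{\nabla(g_\lambda^2*\rho)}_{L^1}=\mathcal{O}(\lambda)$, which is what pins the error at $\hbar b\lambda l_b$ rather than something larger. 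For the potential and interaction parts, $\Tr{V\,\Gamma_N^{(1)}}+\Tr{w\,\Gamma_N^{(2)}}$ is the classical electrostatic energy of the density $\rho_{\Gamma_N^{(1)}}$, which by the computation behind \cref{eq:husimi to density} equals $g_\lambda^2*\rho_{m_\rho}$ up to $\mathcal{O}(l_b)$; using $V,w\in L^2$ and Cauchy--Schwarz one replaces the smeared potentials by the bare ones at cost $\mathcal{O}(f(\lambda))$, with $f(\lambda)=\max(\norm{g_\lambda^2*V-V}_{L^2},\norm{(g_\lambda^2)^{\otimes2}*w-w}_{L^2})$ as in \cref{prop:SC energy}; the Hartree--Fock exchange term is $\mathcal{O}(N^{-1/2}\norm{w}_{L^2})$ because $\Tr{(\Gamma_N^{(1)})^2}\le\tfrac1N$ and the kernel of $\Gamma_N^{(1)}$ is uniformly bounded (the Landau kernels \cref{eq:Pi_n}, the localizers and $m_\rho$ all are), hence negligible.

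Collecting, $\tfrac{E_N^0}N\le\tfrac1N\Tr{\mathscr{H}_N\Gamma_N}\le\mathcal{E}_{sc,\hbar b}[m_\rho]+\mathcal{O}(f(\lambda))+\mathcal{O}(\hbar b\lambda l_b)+\hbar b\,\mathcal{O}(1-\tfrac{d(q+r)}N)$, and it only remains to invoke \cref{prop:sat LLL}, namely $\mathcal{E}_{sc,\hbar b}[m_\rho]=\hbar b E^{q,r}+E_{V}^{q,r}+E_{w}^{q,r}+\mathcal{E}_{qLL}[\rho]$, to reach the stated inequality. The one genuinely delicate ingredient, as noted, is turning the coherent operator $\gamma_{m_\rho}$ into a bona fide admissible state --- its trace is only $1+\mathcal{O}(l_b)$ and its operator norm exceeds $\tfrac1N$ by $o(\tfrac1{N\hbar b})$ --- while keeping every correction within $\hbar b\,\mathcal{O}(1-\tfrac{d(q+r)}N)+\mathcal{O}(\hbar b\lambda l_b)+\mathcal{O}(f(\lambda))$; the commutator and convolution estimates are otherwise precisely those already carried out in \cref{prop:SC energy} and \cref{prop:proj conv}.
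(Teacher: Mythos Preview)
Your overall strategy is the same as the paper's: build $\gamma_{m_\rho}$ from the saturated profile, correct it to an admissible one-body density matrix, apply Lieb's variational principle, compare the resulting Hartree--Fock energy to $\mathcal{E}_{sc,\hbar b}[m_\rho]$, and finish with \cref{prop:sat LLL}. The kinetic computation and the normalization step are handled exactly as you describe (the paper packages the latter into \cref{prop:corrected Husimi}, rescaling $m_\rho\mapsto\tfrac{d(q+r)}{N}m_\rho$ first to nail the Pauli bound, then perturbing to fix the trace).

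There is, however, a genuine gap in your treatment of the interaction. Lieb's principle \cref{th:PLieb} gives $\Gamma_N^{(2)}=\gamma_2-L_2$ with $L_2\ge 0$, so
\[
\Tr{w\,\Gamma_N^{(2)}}=\tfrac{N}{N-1}\Tr{w\,\gamma^{\otimes 2}}-\tfrac{N}{N-1}\Tr{w\,\text{Ex}\,\gamma^{\otimes 2}}-\Tr{w\,L_2}.
\]
You address only the exchange piece and identify the rest with the classical electrostatic energy of $\rho_{\Gamma_N^{(1)}}$; but the term $-\Tr{w\,L_2}$ is simply absent from your argument. Since $w$ is \emph{not} assumed non-negative, it cannot be dropped. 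The paper deals with this via an operator lower bound $w\ge -C\epsilon\,(\text{Id}\otimes\mathscr{L}_{\hbar,b})-C(1+\tfrac{1}{4\epsilon\hbar^2})$ obtained from Gagliardo--Nirenberg plus Kato's inequality \cref{eq:kato H1}, and then estimates $\Tr{L_2}\le\tfrac{1}{N-1}$ and $\Tr{(\text{Id}\otimes\mathscr{L}_{\hbar,b})L_2}\le\tfrac{1}{N-1}\Tr{\mathscr{L}_{\hbar,b}\gamma}$ separately (this is \cref{prop:Lieb prop}); optimizing in $\epsilon$ yields $\mathcal{O}(l_b)$.

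Your exchange estimate is also not solid. You claim the kernel of $\Gamma_N^{(1)}$ is uniformly bounded because ``the Landau kernels \cref{eq:Pi_n}'' are; but $\Pi_n(z,z)\sim\tfrac{1}{2\pi l_b^2}\to\infty$ by \cref{prop:proj conv}. What could be bounded is $(2\pi l_b^2)\Pi_n(x,y)$, and for your $L^\infty$-kernel argument you would need a uniform \emph{off-diagonal} bound of that form, which the paper never proves. The paper instead controls $\Tr{w\,\text{Ex}\,\gamma_m^{\otimes 2}}=\int w(x-y)\abs{\gamma_m(x,y)}^2\,dx\,dy$ by the same Gagliardo--Nirenberg/Kato device, reducing it to $\tfrac{1}{N}(\epsilon\Tr{\mathscr{L}_{\hbar,b}\gamma_m}+(1+\tfrac{1}{4\epsilon\hbar^2})\Tr{\gamma_m})$ and again optimizing to $\mathcal{O}(l_b)$ (see \cref{prop:HF to SC}). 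That is the missing ingredient in both places.
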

    
    For this result, we use Hartree-Fock theory, meaning we restrict the energy to Slater determinants. Computations are simplified by Wick's \cref{th:wick}. Hartree-Fock theory can be extended to general one body operators (see \cref{not:HF theory}), and using Lieb's variational principle (\cref{th:PLieb}) one can show that the theory still provides an approximate upper bound for the $N$-body quantum energy (\cref{prop:Lieb prop}). Then we conclude by approaching the semi-classical energy with the Hartree-Fock energy (\cref{prop:HF to SC}).
    
    \begin{notation}[title=Hartree Fock theory, label=HF theory]
        Let $s, t, u, v \in L^2(\Omega)$, if one define the exchange operator on $\mathcal{L}^1\prth{L^2\prth{\Omega^2}}$ as
        \begin{align*}
            \text{Ex} \ket{s \otimes t} \bra{u \otimes v} \coloneq \ket{s \otimes t} \bra{v \otimes u} \yesnumber{Ex}
        \end{align*}
        Let $\gamma \in \mathcal{L}^1\prth{L^2(\Omega)}$, define 
        \begin{align*}
            \gamma_2 \coloneq \dfrac{N}{N-1}\prth{1 - \text{Ex}}\gamma^{\otimes 2} \yesnumber{gamma2}
        \end{align*}
        Define the Hartree-Fock energy
        \begin{align*}
            \mathcal{E}_{HF}\sbra{\gamma} \coloneq \Tr{\prth{\mathscr{L}_{\hbar, b} + V}\gamma} +  \Tr{w \gamma_2} \yesnumber{HF energy}
        \end{align*}
    \end{notation}
    
    With Wick's theorem definitions \cref{eq:gamma2} and \cref{eq:HF energy} are actual statements for Slater determinants.

    \begin{theorem}[title=Wick's theorem, label=wick]
        Let $\psi_N = \dfrac{1}{\sqrt{N!}}\bigwedge_{j=1}^N \phi_j \in L^2_-(\Omega^N)$ with $(\phi_j)_j$ an orthonormal family, then
        \begin{align*}
            \gamma_{\psi_N}^{(1)} &= \dfrac{1}{N}\sum_{i = 1}^N \ket{\phi_i}\bra{\phi_i} \\
            \gamma_{\psi_N}^{(2)}
            &= \dfrac{N}{N-1}\prth{1 - \text{Ex}}\prth{\gamma_N^{(1)}}^{\otimes 2}
            = \dfrac{1}{N(N-1)}\sum_{i, j = 1}^N \ket{\phi_i \otimes \phi_j}\bra{\phi_i \otimes \phi_j - \phi_j \otimes \phi_i}
        \end{align*}
    \end{theorem}
    
    Thus for a Slater determinant $\gamma_{\psi_N}$
    \begin{align*}
        \prth{\gamma_{\psi_N}^{(1)}}_2 = \gamma_{\psi_N}^{(2)}
    \end{align*}
    and the Hartree-Fock energy is exactly what we obtain for the quantum $N$-body energy:
    \begin{align*}
        \mathcal{E}_{HF}\sbra{\gamma_{\psi_N}^{(1)}} = \Tr{\prth{\mathscr{L}_{\hbar, b} + V}\gamma_{\psi_N}^{(1)}} +  \Tr{w \gamma_{\psi_N}^{(2)}}
    \end{align*}
    Lieb's theorem \cite{Lieb} extends the usual variational principle for operators of the form \cref{eq:gamma2}.

    \begin{theorem} [title=Lieb's variational principle, label=PLieb]
        Let $\gamma \in \mathcal{L}^1\prth{L^2(\Omega)}$ satisfying
        \begin{align*}
        &\Tr{\gamma} = 1 \qquad 0\le \gamma \le \dfrac{1}{N}    
        \end{align*}
        there exits an $N$-body density matrix $\gamma_N$ and a positive operator $L_2$ such that
        \begin{align*}
            \gamma_N^{(1)} = \gamma \qquad
            \gamma_N^{(2)} = \gamma_2 - L_2
        \end{align*}
    \end{theorem}
    
    We start with Lieb's variational principle to get an energy upper bound in term of the operator $\gamma_2$. An important remark here is that we don't assume that the interaction potential is repulsive to get the upper bound as it is usually done when dealing with Lieb's variational principle. The reason why we were able to relax the assumption $w \ge 0$ is independent of our specific model and comes from the fact that $L_2$ looks like an exchange term in the mean field limit since (see \cref{eq:trL2})
    \begin{align*}
        \Tr{L_2} \le \dfrac{1}{N-1}
    \end{align*}
    Hence the contributions coming from this term in the computation \cref{eq:HF estimate} will be treated as error terms. Lieb's variational principle has also been recently generalised in \cite{Bach}.

    \begin{proposition}[label=Lieb prop]
        Let $\gamma \in \mathcal{L}^1\prth{L^2(\Omega)}$ such that $\Tr{\gamma} = 1$ and $0\le \gamma \le \dfrac{1}{N}$, then
        \begin{align*}
            \dfrac{E_N^0}{N} \le& \mathcal{E}_{HF}\sbra{\gamma} + \dfrac{\Tr{\mathscr{L}_{\hbar, b}\gamma}}{\hbar b}\mathcal{O}\prth{l_b}
        \end{align*}
    \end{proposition}
    
    \begin{proof}
        First we prove a lower bound for the interaction term. Using The Gagliardo-Nirenberg inequality for $\psi \in L^2(\Omega)$,
        \begin{align*}
            \norm{\psi}_{L^4}^2 \le C_{GN} \prth{\sqrt{\norm{\psi}_{L^2}  \norm{\nabla \psi}}_{L^2} + \norm{\psi}_{L^2}}
        \end{align*}
        along with Hölder's, Young's and Kato's \cref{eq:kato H1} inequalities,
        \begin{align*}
            \abs{\bk{\psi}{\mathcal{V}\psi}} 
            &\le \norm{\psi}_{L^4} \norm{\mathcal{V}\psi}_{L^{\frac{4}{3}}} 
            \le \norm{\mathcal{V}}_{L^2} \norm{\psi}_{L^4}^2 
            \le C_{GN} \norm{\mathcal{V}}_{L^2}\prth{\norm{\psi}_{L^2}  \norm{\nabla \abs{\psi}}_{L^2} + \norm{\psi}_{L^2}^2} \\
            &\le C_{GN}\norm{\mathcal{V}}_{L^2}\prth{\dfrac{1}{\hbar}\norm{\psi}_{L^2} \norm{\mathscr{P}_{\hbar, b}\psi}_{L^2} + \norm{\psi}_{L^2}^2} \\
            &\le C_{GN}\norm{\mathcal{V}}_{L^2}\prth{\epsilon  \norm{\mathscr{P}_{\hbar, b}\psi}_{L^2}^2 + \prth{1 + \frac{1}{4\epsilon\hbar^2}}\norm{\psi}_{L^2}^2}
        \end{align*}
        So for $\psi_2 \in L^2(\Omega) \otimes \text{Dom}\prth{\mathscr{L}_{\hbar, b}}$,
        \begin{align*}
            \abs{\bk{\psi_2}{w\psi_2}} &\le \intr_{\Omega^2} \abs{w(x - y)} \abs{\psi_2(x, y)}^2 dx dy 
            \le \norm{w}_{L^2} \intr_\Omega \norm{\psi_2(x, \bullet)}_{L^4}^2 \\
            &\le C_{GN}\norm{w}_{L^2} \intr_\Omega \prth{\epsilon  \norm{\mathscr{P}_{\hbar, b}\psi_2(x, \bullet)}_{L^2}^2 + \prth{1 + \frac{1}{4\epsilon\hbar^2}}\norm{\psi_2(x, \bullet)}_{L^2}^2} dx \yesnumber{GN+kato}\\
            &= C_{GN}\norm{w}_{L^2}\prth{\epsilon \norm{1\otimes \mathscr{P}_{\hbar, b}\psi_2}_{L^2}^2 + \prth{1 + \dfrac{1}{4\epsilon \hbar^2}}\norm{\psi_2}_{L^2}^2}
        \end{align*}
        Thus
        \begin{align*}
            \bk{\psi_2}{\prth{C_{GN}\norm{w}_{L^2}\epsilon \prth{\text{Id}_{L^2\prth{\Omega}} \otimes \mathscr{L}_{\hbar, b}} + w}  \psi_2} 
            =& C_{GN}\norm{w}_{L^2} \epsilon \norm{1\otimes \mathscr{P}_{\hbar, b}\psi_2}_{L^2}^2 + \bk{\psi_2}{w\psi_2} \\
            \ge& - C_{GN}\norm{w}_{L^2} \prth{1 + \dfrac{1}{4\epsilon \hbar^2}}\norm{\psi_2}_{L^2}^2
        \end{align*}
        and
        \begin{align*}
            \epsilon C_{GN}\norm{w}_{L^2} \prth{\text{Id}_{L^2\prth{\Omega}} \otimes \mathscr{L}_{\hbar, b}} + w \ge - C_{GN}\norm{w}_{L^2} \prth{1 + \dfrac{1}{4\epsilon \hbar^2}} \yesnumber{C_N}
        \end{align*}
        Let $\gamma_N$ and $L_2$ be the operators in \cref{th:PLieb}. Now we use \cref{eq:E reduced d}, and \cref{eq:C_N}:
        \begin{align*}
            \dfrac{E_N^0}{N} \le& \dfrac{\Tr{\mathscr{H}_{N}\gamma_N}}{N} 
            = \Tr{\prth{\mathscr{L}_{\hbar, b} +V} \gamma_N^{(1)}} + \Tr{w \gamma_N^{(2)}} \\
            =& \Tr{\prth{\mathscr{L}_{\hbar, b} +V} \gamma} + \Tr{w \prth{\gamma_2 -L_2}} = \mathcal{E}_{HF}\sbra{\gamma} - \Tr{w L_2}\\
            \le& \mathcal{E}_{HF}\sbra{\gamma} + C_{GN}\norm{w}_{L^2} \prth{\prth{1 + \dfrac{1}{4\epsilon \hbar^2}}\Tr{L_2} + \epsilon \Tr{\prth{\text{Id}_{L^2\prth{\Omega}} \otimes \mathscr{L}_{\hbar, b}} L_2}} \yesnumber{HF estimate}
        \end{align*}
        To conclude we need to estimate the error terms. If $A$ is an operator on $L^2(\Omega)$ it follows from \cref{eq:Ex} that
        \begin{align*}
            \Tr{\prth{\text{Id}_{L^2\prth{\Omega}}\otimes A}\text{Ex}\gamma^{\otimes 2}} = \Tr{A \gamma^2} \yesnumber{Tr and Ex}
        \end{align*}
        Indeed, if we decompose $\gamma$ in an orthonormal family:
        \begin{align*}
            \gamma \eqcolon \sum_{i\in\mathbb{N}}\lambda_i \ket{u_i}\bra{u_i}
        \end{align*}
        then
        \begin{align*}
             \Tr{\prth{\text{Id}_{L^2\prth{\Omega}}\otimes A}\text{Ex}\gamma^{\otimes 2}} 
             =& \sum_{i, j\in \mathbb{N}} \lambda_i \lambda_j \Tr{\text{Id}_{L^2\prth{\Omega}}\otimes A  \ket{u_i \otimes u_j} \bra{u_j \otimes u_i}} \\
             =& \sum_{i, j\in \mathbb{N}} \lambda_i \lambda_j \Tr{\prth{\ket{u_i}\bra{u_j}} \otimes \prth{A \ket{u_j}\bra{u_i}}} \\
             =& \sum_{i, j\in \mathbb{N}} \lambda_i \lambda_j \Tr{\ket{u_i}\bra{u_j}}\Tr{A \ket{u_j}\bra{u_i}}
             = \sum_{i\in \mathbb{N}} \lambda_i^2 \Tr{A \ket{u_i}\bra{u_i}} \\
             =& \Tr{A \gamma^2}
        \end{align*}
        Taking $A \coloneq \text{Id}_{L^2\prth{\Omega}}$, we obtain
        \begin{align*}
            \Tr{\text{Ex}\gamma^{\otimes 2}} = \Tr{\gamma^2}
        \end{align*}
        and since $\gamma$ is positive, with \cref{eq:gamma2} we can estimate
        \begin{align*}
            \Tr{L_2} 
            &= \Tr{\gamma_2} - \Tr{\gamma_N^{(2)}} 
            = \dfrac{N}{N-1}\Tr{\gamma^{\otimes 2} - \text{Ex}\gamma^{\otimes 2}} - 1 = \dfrac{N}{N-1} - \dfrac{N}{N-1}\Tr{\gamma^2} - 1 \\
            &\le \dfrac{1}{N-1} \yesnumber{trL2}
        \end{align*}
        If $\epsilon \to 0$, we can control the first error term in \cref{eq:HF estimate} with
        \begin{align*}
            0 \le \prth{1 + \dfrac{1}{4 \epsilon \hbar ^2}} \Tr{L_2} \le \dfrac{C}{N\epsilon \hbar^2} \yesnumber{HF error 1}
        \end{align*}
        For the second error term, using \cref{th:PLieb}, \cref{eq:gamma2} and \cref{eq:Tr and Ex} for $A \coloneq \mathscr{L}_{\hbar, b}$,
        \begin{align*}
            0 \le& \Tr{\prth{\text{Id}_{L^2\prth{\Omega}} \otimes \mathscr{L}_{\hbar, b}} L_2} 
            = \Tr{\prth{\text{Id}_{L^2\prth{\Omega}} \otimes \mathscr{L}_{\hbar, b}} \prth{\gamma_2 - \gamma_N^{(2)}}} \\
            =& \dfrac{N}{N - 1}\Tr{\prth{\text{Id}_{L^2\prth{\Omega}} \otimes \mathscr{L}_{\hbar, b}} \prth{1 - \text{Ex}}\gamma^{\otimes 2}} - \Tr{\prth{\text{Id}_{L^2\prth{\Omega}} \otimes \mathscr{L}_{\hbar, b}} \gamma_N^{(2)}} \\
            =& \dfrac{N}{N-1}\Tr{\mathscr{L}_{\hbar, b} \gamma} - \dfrac{N}{N - 1}\Tr{\prth{\text{Id}_{L^2\prth{\Omega}} \otimes \mathscr{L}_{\hbar, b}}\text{Ex}\gamma^{\otimes 2}} - \Tr{\mathscr{L}_{\hbar, b} \gamma}\\
            =& \dfrac{1}{N-1}\Tr{\mathscr{L}_{\hbar, b} \gamma} - \dfrac{N}{N - 1}\Tr{\mathscr{L}_{\hbar, b} \gamma^2}
            \le  \dfrac{1}{N-1}\Tr{\mathscr{L}_{\hbar, b} \gamma}
        \end{align*}
        When the kinetic energy is minimised $\Tr{\mathscr{L}_{\hbar, b}\gamma}$ is of order $\hbar b$ so we estimate the second error term in \cref{eq:HF estimate} with:
        \begin{align*}
            0 \le \epsilon \Tr{\prth{\text{Id}_{L^2\prth{\Omega}} \otimes \mathscr{L}_{\hbar, b}} L_2} \le C \dfrac{\epsilon\hbar b}{N}\cdot\dfrac{\Tr{\mathscr{L}_{\hbar, b}\gamma}}{\hbar b} \yesnumber{HF error 2}
        \end{align*}
        We optimise in $\epsilon$ so the bounds in \cref{eq:HF error 1} and \cref{eq:HF error 2} are of the same order:
        \begin{align*}
            \dfrac{1}{N\epsilon \hbar^2} = \dfrac{\epsilon\hbar b}{N} \implies \epsilon = \dfrac{1}{\sqrt{\hbar^3 b}} = N^{2\delta - \frac{1}{2}} = o(1) \implies \dfrac{1}{N\epsilon \hbar^2} = \dfrac{\epsilon\hbar b}{N} = \dfrac{1}{l_b N} = \mathcal{O}(l_b) \yesnumber{EUP epsilon choice}
        \end{align*}
        so \cref{eq:HF estimate} becomes
        \begin{align*}
            \dfrac{E_N^0}{N} 
            \le \mathcal{E}_{HF}\sbra{\gamma} + \prth{1 + \dfrac{\Tr{\mathscr{L}_{\hbar, b}\gamma}}{\hbar b}}\mathcal{O}\prth{l_b}
            = \mathcal{E}_{HF}\sbra{\gamma} + \dfrac{\Tr{\mathscr{L}_{\hbar, b}\gamma}}{\hbar b}\mathcal{O}\prth{l_b}
        \end{align*}
    \end{proof}

    Recalling definitions \cref{eq:SC energy} and \cref{eq:f(lambda)}, we now go from the Hartree-Fock energy to the semi-classical energy.
    
    \begin{proposition}[title=Semi-classical approximation of Hartree-Fock energy, label=HF to SC]
        Let $n_0 \in \mathbb{N}$, $m \in L^1(\mathbb{N}\times\Omega)$ such that $\forall n > n_0, m(n, \bullet) = 0$ and
        \begin{align*}
            0\le m \le \dfrac{1}{2\pi l_b^2 N} \yesnumber{pp in prop}
        \end{align*}
        then
        \begin{align*}
            \mathcal{E}_{HF}\sbra{\gamma_m} = \mathcal{E}_{sc, \hbar b}\sbra{m} + \mathcal{O}\prth{f(\lambda)} + \mathcal{O}\prth{\hbar b \lambda l_b}
        \end{align*}
    \end{proposition}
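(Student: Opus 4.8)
The plan is to split $\mathcal{E}_{HF}[\gamma_m]=\Tr{\mathscr{L}_{\hbar,b}\gamma_m}+\Tr{V\gamma_m}+\Tr{w(\gamma_m)_2}$ (recall \cref{eq:gamma2}, \cref{eq:HF energy} and $\gamma_m=2\pi l_b^2\int_{\mathbb{N}\times\Omega}m(X)\Pi_X\,d\eta$) and to estimate the three terms separately. First I record the elementary consequences of the hypotheses: since $m$ is supported on $n\le n_0$ and obeys \cref{eq:pp in prop}, one has $\|m\|_{L^1}\le (n_0+1)\tfrac{d}{N}=\mathcal{O}(1)$, the density $\rho_m=\sum_n m(n,\cdot)$ is bounded in $L^\infty(\Omega)$ (hence in $L^2$) by \cref{eq:l_b scaling}, and $\|\gamma_m\|_{\mathcal{L}^\infty}\le 2\pi l_b^2\|m\|_{L^\infty}\le 1/N$. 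I will freely use \cref{cor:tr pi} (so $2\pi l_b^2\Tr{\Pi_{n,R}}=1+\mathcal{O}(l_b)$ uniformly) and \cref{prop:proj conv}: namely $\Pi_n(z,z)=\frac{1}{2\pi l_b^2}+\mathcal{O}(1/l_b)$ from \cref{eq:kernel approx 1}, and the key point that $(\mathscr{P}_{\hbar,b}\Pi_n)(z,z)=\mathcal{O}(b)$ uniformly, because the would-be leading $\tfrac{b}{l_b}$-term in \cref{eq:kernel approx 2} vanishes: $h_n$ has parity $(-1)^n$ by \cref{eq:phermit1}, so $\int_{\mathbb R} i h_n' h_n e^{-u^2}$ and $\int_{\mathbb R} u h_n^2 e^{-u^2}$ are odd integrals, hence zero. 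All error terms will be shown to be $\ll \hbar b\lambda l_b$ via the chain \cref{eq:lambda} together with \cref{eq:hbar}, \cref{eq:l_b scaling}.

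For the kinetic term I write $\Tr{\mathscr{L}_{\hbar,b}\gamma_m}=2\pi l_b^2\int m(n,R)\Tr{\mathscr{L}_{\hbar,b}\,g_\lambda(\cdot-R)\Pi_n g_\lambda(\cdot-R)}\,d\eta$, commute $\mathscr{L}_{\hbar,b}$ past $g_\lambda(\cdot-R)$ using \cref{eq:com H g}, and use $\mathscr{L}_{\hbar,b}\Pi_n=E_n\Pi_n$; the trace becomes $E_n\Tr{\Pi_{n,R}}+\Tr{\bigl(2i\hbar\nabla g_\lambda(\cdot-R)\cdot\mathscr{P}_{\hbar,b}-\hbar^2\Delta g_\lambda(\cdot-R)\bigr)\Pi_n g_\lambda(\cdot-R)}$. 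By cyclicity the first correction equals $i\hbar\int_\Omega\nabla(g_\lambda^2)(x-R)\cdot(\mathscr{P}_{\hbar,b}\Pi_n)(x,x)\,dx=\mathcal{O}(\hbar b\lambda)$ (using $\|\nabla(g_\lambda^2)\|_{L^1}=\lambda\|\nabla(g^2)\|_{L^1}$) and the second equals $-\hbar^2\int_\Omega g_\lambda\Delta g_\lambda(x-R)\Pi_n(x,x)\,dx=\frac{\hbar^2\lambda^2}{2\pi l_b^2}\|\nabla g\|_{L^2}^2+\mathcal{O}(\hbar^2\lambda^2/l_b)$. Multiplying by $2\pi l_b^2$ and integrating against $m$: the main term gives $\int E_n m\,d\eta\,(1+\mathcal{O}(l_b))=\int E_n m\,d\eta+\mathcal{O}(\hbar b l_b)$ (since $\int E_n m\,d\eta\le 2\hbar b(n_0+\frac12)\|m\|_{L^1}=\mathcal{O}(\hbar b)$), and the corrections contribute $\mathcal{O}\big((\hbar\lambda)^2\big)+\mathcal{O}(\hbar^2\lambda)+\mathcal{O}(\hbar^2\lambda^2 l_b)$, each $\ll\hbar b\lambda l_b$. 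Hence $\Tr{\mathscr{L}_{\hbar,b}\gamma_m}=\int E_n m\,d\eta+\mathcal{O}(\hbar b\lambda l_b)$.

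For the one-body potential and the direct interaction I first note $\rho_{\gamma_m}(x)=2\pi l_b^2\int m(n,R)g_\lambda(x-R)^2\Pi_n(x,x)\,d\eta=(g_\lambda^2*\rho_m)(x)\,(1+\mathcal{O}(l_b))$, so $\|\rho_{\gamma_m}-g_\lambda^2*\rho_m\|_{L^2}=\mathcal{O}(l_b)$. Then $\Tr{V\gamma_m}=\int_\Omega V\rho_{\gamma_m}=\int_\Omega(g_\lambda^2*V)\rho_m+\mathcal{O}(\|V\|_{L^2}l_b)=\int V m\,d\eta+\int(g_\lambda^2*V-V)\rho_m+\mathcal{O}(l_b)$, with $\big|\int(g_\lambda^2*V-V)\rho_m\big|\le\|g_\lambda^2*V-V\|_{L^2}\|\rho_m\|_{L^2}=\mathcal{O}(f(\lambda))$ by \cref{eq:f(lambda)}; since $l_b=\mathcal{O}(\hbar b\lambda l_b)$ this is $\int V m\,d\eta+\mathcal{O}(f(\lambda))+\mathcal{O}(\hbar b\lambda l_b)$. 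For $\Tr{w\gamma_m^{\otimes 2}}=\iint w(x-y)\rho_{\gamma_m}(x)\rho_{\gamma_m}(y)\,dxdy$ I replace $\rho_{\gamma_m}$ by $g_\lambda^2*\rho_m$ (error $\mathcal{O}(\|w\|_{L^2}l_b)$ by Young), recognise the main term as $\iint\big((g_\lambda^2)^{\otimes 2}*w\big)(u-v)\rho_m(u)\rho_m(v)\,dudv$, and compare with $\int w m^{\otimes 2}d\eta^{\otimes 2}=\iint w(u-v)\rho_m(u)\rho_m(v)\,dudv$; the difference is $\lesssim\|(g_\lambda^2)^{\otimes 2}*w-w\|_{L^2}\|\rho_m\|_{L^2}\|\rho_m\|_{L^1}=\mathcal{O}(f(\lambda))$. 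The prefactor $\frac{N}{N-1}=1+\mathcal{O}(1/N)$ in $(\gamma_m)_2$ costs an extra $\mathcal{O}(1/N)$ because $\Tr{w\gamma_m^{\otimes 2}}=\mathcal{O}(1)$.

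The delicate part is the exchange term. Expanding $\gamma_m$ in eigenvectors and using \cref{eq:Ex}, its kernel identity gives $\Tr{w\,\text{Ex}\,\gamma_m^{\otimes 2}}=\iint w(x-y)|\gamma_m(x,y)|^2\,dxdy$ (using $\gamma_m^*=\gamma_m$). I bound $|\gamma_m(x,y)|\le 2\pi l_b^2\sum_{n\le n_0}|\Pi_n(x,y)|\int_\Omega m(n,R)g_\lambda(x-R)g_\lambda(y-R)\,dR\le\frac1N\sum_{n\le n_0}|\Pi_n(x,y)|$ by Cauchy–Schwarz and $\|g_\lambda\|_{L^2}=1$. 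For fixed $x$, the projector identity $\Pi_n^2=\Pi_n$ gives $\||\Pi_n(x,\cdot)|^2\|_{L^1(\Omega)}=\Pi_n(x,x)\le C(n)l_b^{-2}$, while $\||\Pi_n(x,\cdot)|^2\|_{L^\infty}\le\big(\sup_z\Pi_n(z,z)\big)^2\le C(n)l_b^{-4}$ (Cauchy–Schwarz for the positive kernel plus \cref{eq:kernel approx 1}), so by interpolation $\||\Pi_n(x,\cdot)|^2\|_{L^2}\le C(n)l_b^{-3}$. Cauchy–Schwarz in $y$ and then integration in $x$ yield $\iint |w(x-y)|\,|\Pi_n(x,y)|^2\le L^2\|w\|_{L^2}C(n)l_b^{-3}$, whence the exchange term is $\mathcal{O}(N^{-2}l_b^{-3})=\mathcal{O}(N^{-1/2})\ll\hbar b\lambda l_b$ by \cref{eq:l_b scaling}, \cref{eq:hbar}, \cref{eq:lambda}. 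Adding the three contributions proves the claim. I expect the exchange term to be the main obstacle: one must exhibit enough smallness (the $1/N$ from the Pauli bound \cref{eq:pp in prop}) to beat the $l_b^{-3}$ blow-up of $\||\Pi_n(x,\cdot)|^2\|_{L^2}$ with only $w\in L^2$; the careful bookkeeping of the kinetic commutator errors against \cref{eq:lambda} is the secondary point requiring attention.
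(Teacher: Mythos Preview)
Your argument is correct. The treatment of the kinetic and potential terms is essentially the paper's, with one nice twist: you observe that the putative leading constant $\tfrac{b}{l_b}\cdot\tfrac{1}{2\pi\|h_n\|^2}\int(ih_n',\,uh_n)h_n$ in \cref{eq:kernel approx 2} vanishes by parity of $h_n$, giving $(\mathscr{P}_{\hbar,b}\Pi_n)(z,z)=\mathcal O(b)$ directly, whereas the paper keeps this constant and kills it instead via $\int_\Omega \nabla g_\lambda\, g_\lambda=0$. Both mechanisms lead to the same $\mathcal O(\hbar b\lambda l_b)$ error after multiplying by $2\pi l_b^2$.

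The exchange term is where your route genuinely diverges. The paper treats $\iint w(x-y)|\gamma_m(x,y)|^2$ by applying the Gagliardo--Nirenberg/Kato estimate \cref{eq:GN+kato} to the kernel $\gamma_m(x,\cdot)$, which produces $\epsilon\,\Tr{\mathscr L_{\hbar,b}\gamma_m^2}+(1+\tfrac{1}{4\epsilon\hbar^2})\Tr{\gamma_m^2}$; then $\gamma_m^2\le N^{-1}\gamma_m$ and the same $\epsilon$--optimisation as in \cref{eq:EUP epsilon choice} give $\mathcal O(l_b)$. Your argument is more elementary and self--contained: the pointwise bound $|\gamma_m(x,y)|\le N^{-1}\sum_{n\le n_0}|\Pi_n(x,y)|$, the projector identities $\int|\Pi_n(x,\cdot)|^2=\Pi_n(x,x)$ and $|\Pi_n(x,y)|^2\le\Pi_n(x,x)\Pi_n(y,y)$, and $L^1$--$L^\infty$ interpolation yield $\||\Pi_n(x,\cdot)|^2\|_{L^2}\lesssim l_b^{-3}$ and hence an exchange of size $\mathcal O(N^{-2}l_b^{-3})=\mathcal O(N^{-1/2})$, which is the same order as the paper's $\mathcal O(l_b)$ under \cref{eq:l_b scaling}. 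Your approach avoids the analytic machinery (GN, Kato, the $\epsilon$--trick) at the price of using the finite--level hypothesis $n\le n_0$ more heavily; the paper's approach recycles estimates already set up for \cref{prop:Lieb prop} and would be easier to adapt if the support condition on $m$ were relaxed.
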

    
    \begin{proof}
    We start by proving that we recover the semi-classical functional from the direct terms. We compute the kinetic term using the commutation relation \cref{eq:com H g} and \cref{cor:tr pi}:
        \begin{align*}
            \Tr{\mathscr{L}_{\hbar, b}\gamma_m} =& 2\pi l_b^2 \intr_{\mathbb{N}\times\Omega}m(X) \Tr{\mathscr{L}_{\hbar, b} \Pi_X} d\eta(X) \\
            =& 2\pi l_b^2 \intr_{\mathbb{N\times\Omega}}m(X)E_n\Tr{\Pi_X}d\eta(X) \\
            &+ 2\pi l_b^2 \intr_{\mathbb{N\times\Omega}}m(n,R)\Tr{\sbra{\mathscr{L}_{\hbar, b}, g_\lambda(\bullet - R)} \Pi_n g_\lambda(\bullet - R)}d\eta(n, R) \\
            =& \intr_{\mathbb{N}\times\Omega}E_n m(X)d\eta(X) + \mathcal{O}(\hbar b l_b) \\
            &+ 2\pi l_b^2 \intr_{\mathbb{N\times\Omega}}m(n,R)\Tr{2i\hbar \nabla g_\lambda(\bullet-R) \mathscr{P}_{\hbar, b} \Pi_n g_\lambda(\bullet - R)}d\eta(n, R) \\
            &- 2\pi l_b^2 \intr_{\mathbb{N\times\Omega}}m(n,R)\Tr{\hbar^2 \Delta g_\lambda(\bullet-R) \Pi_n g_\lambda(\bullet - R)}d\eta(n, R) \yesnumber{DT compt 1}
        \end{align*}
        Using \cref{eq:kernel approx 1}, $\exists \mathcal{E}:\mathbb{N}\times \Omega \to \mathbb{R}$ such that
        \begin{align*}
            &2\pi l_b^2 \Pi_n(x, x) = 1 + l_b  \mathcal{E}(n, x) \\
            &\abs{\mathcal{E}(n , x)} \le C(n)
        \end{align*}
        With \cref{eq:grad g lambda},
        \begin{align*}
            &- 2\pi l_b^2 \intr_{\mathbb{N\times\Omega}}m(n,R)\Tr{\hbar^2 \Delta g_\lambda(\bullet-R) \Pi_n g_\lambda(\bullet - R)}d\eta(n, R) \\
            =& - \hbar^2 \intr_{\mathbb{N}\times\Omega} m(n, R) \prth{\intr_\Omega \Delta g_\lambda(x-R)\prth{1 + l_b \mathcal{E}(n, x)}g_\lambda(x - R) dx}d\eta(n, R)\\
            =& \prth{\hbar \lambda}^2 \norm{\nabla g}_{L^2}^2 \norm{m}_{L^1} - \hbar^2 l_b \intr_{\mathbb{N}\times\Omega} m(n, R) \prth{\intr_\Omega \lambda^3 \Delta g(\lambda x) \mathcal{E}(n, x + R)\lambda g(\lambda x) dx}d\eta(n, R) \\
            =& \prth{\hbar \lambda}^2 \norm{\nabla g}_{L^2}^2 \norm{m}_{L^1} + \prth{\hbar\lambda}^2 \mathcal{O}\prth{l_b} = \mathcal{O}\prth{\prth{\hbar \lambda}^2} \yesnumber{DT compt 2}
        \end{align*}
        And by \cref{eq:kernel approx 2}, $\exists \widetilde{\mathcal{E}}:\mathbb{N}\times \Omega \to \mathbb{R}$ such that
        \begin{align*}
            &\mathscr{P}_{\hbar, b}\Pi_n(x, x) = \dfrac{b}{l_b} C(n) + b \widetilde{\mathcal{E}}(n, x) \\
            &\abs{\mathcal{E}(n , x)} \le \widetilde{C}(n)
        \end{align*}
        so
        \begin{align*}
            &2\pi l_b^2 \intr_{\mathbb{N}\times\Omega}m(n,R)\Tr{2i\hbar \nabla g_\lambda(\bullet-R) \mathscr{P}_{\hbar, b} \Pi_n g_\lambda(\bullet - R)}d\eta(n, R) \\
            =& 4i\pi l_b^2 \hbar \intr_{\mathbb{N}\times\Omega} m(n, R) \prth{\intr_\Omega \nabla g_\lambda(x-R) \prth{C(n) \dfrac{b}{l_b} + b\widetilde{\mathcal{E}}(n, R)}g_\lambda(x - R) dx} d\eta(n, R) \\
            =& \mathcal{O}\prth{\hbar b \lambda l_b} \yesnumber{DT compt 3}
        \end{align*}
        Inserting \cref{eq:DT compt 2} and \cref{eq:DT compt 3} in \cref{eq:DT compt 1}, we obtain
        \begin{align*}
            \Tr{\mathscr{L}_{\hbar, b}\gamma_m} = \intr_{\mathbb{N}\times\Omega}E_n m(X)d\eta(X) +\mathcal{O}\prth{\hbar b \lambda l_b} + \mathcal{O}\prth{\prth{\hbar \lambda}^2} \yesnumber{DT compt 4}
        \end{align*}
        Let $k \in \mathbb{N^*}$ and $W_k \in L^2\prth{\Omega^k}$, with the Fubini theorem
        \begin{align*}
            \Tr{W_k \gamma_m^{\otimes k}} 
            =& (2\pi l_b^2)^k \intr_{\prth{\mathbb{N}\times\Omega}^k} m^{\otimes k}(X_{1:k})\Tr{W_k \bigotimes_{i=1}^k\Pi_{X_i}}d\eta^{\otimes k}\prth{X_{1:k}} \\
            =& \prth{2\pi l_b^2}^k \intr_{\prth{\mathbb{N}\times\Omega}^k} m^{\otimes k}(X_{1:k}) \intr_{\Omega^k}W_k(x_{1:k})\prth{\bigotimes_{i=1}^k \Pi_{X_i}}(x_{1:k}, x_{1:k}) dx_{1:k} d\eta^{\otimes k}\prth{X_{1:k}} \\
            =& \intr_{\Omega^k}  W_k(x_{1:k})  \prth{\prod_{i=1}^k 2\pi l_b^2 \intr_{\prth{\mathbb{N}\times\Omega}}  m(X) \Pi_X(x_i, x_i)d\eta\prth{X}} dx_{1:k} \\
            =& \intr_{\Omega^k}  W_k(x_{1:k}) \prth{\prod_{i=1}^k \intr_{\prth{\mathbb{N}\times\Omega}}  m(n, R) g_\lambda^2(x_i- R)\prth{1 + l_b \mathcal{E}(n, x_i)}d\eta\prth{n, R}}dx_{1:k} \\
            =& \intr_{\Omega^k} W_k \prth{\rho_m^{\otimes k}* (g_\lambda^2)^{\otimes k}} dx \\
            &+ l_b \intr_{\Omega^k}  W_k(x_{1:k}) \prth{\prod_{i=1}^k \intr_\Omega g_\lambda^2(x_i- R) \sum_{n=0}^{n_0} m(n, R)  \mathcal{E}(n, x_i)dR}dx_{1:k}
        \end{align*}
        $m$ has finitely many filled Landau level so with the Pauli principle \cref{eq:pp in prop}, $\rho_m \in L^\infty(\Omega)$ and
        \begin{align*}
            \Tr{W_k \gamma_m^{\otimes k}} = \intr_{\Omega^k} W_k\rho_m^{\otimes k} + \mathcal{O}\prth{\norm{W_k - W_k * (g_\lambda^2)^{\otimes k}}_{L^1}} + \mathcal{O}\prth{l_b} \yesnumber{pot HFSC}
        \end{align*}
        Now we need to control the exchange term. It follows from \cref{eq:Ex} that
        \begin{align*}
            \text{Ex}\gamma_m^{\otimes 2}(x, y; z, t) = \gamma_m(x, t)\gamma_m(y, z)
        \end{align*}
        so with \cref{eq:GN+kato} for $\gamma_m \in L^2(\Omega)\otimes \text{Dom}\prth{\mathscr{L}_{\hbar, b}}$ as an integral kernel,
        \begin{align*}
            \abs{\Tr{w\text{Ex}\gamma_m^{\otimes 2}}}
            =& \abs{\intr_{\Omega^2} w(x - y) \abs{\gamma_m(x, y)}^2 dxdy } \\
            \le& C_{GN}\norm{w}_{L^2} \intr_\Omega \prth{\epsilon  \norm{\mathscr{P}_{\hbar, b}\gamma_m(x, \bullet)}_{L^2}^2 + \prth{1 + \frac{1}{4\epsilon\hbar^2}}\norm{\gamma_m(x, \bullet)}_{L^2}^2} dx \yesnumber{HFSC 1}
        \end{align*}
        With an integration by part,
        \begin{align*}
            \intr_\Omega \norm{\mathscr{P}_{\hbar, b}\gamma_m(x, \bullet)}_{L^2}^2 dx
            =& \intr_{\Omega^2} \mathscr{P}_{\hbar, b}\gamma_m(x, \bullet)(y) \cdot \overline{\mathscr{P}_{\hbar, b}\gamma_m(x, \bullet)(y)} dxdy \\
            =& \intr_{\Omega^2} \mathscr{L}_{\hbar, b} \gamma_m(x, \bullet)(y) \overline{\gamma_m(x, y)} dx dy
            = \intr_{\Omega^2} \overline{\gamma_m}(x, y) \mathscr{L}_{\hbar, b} \overline{\gamma_m}(\bullet, x)(y)dxdy \\
            =& \intr_{\Omega^2}  \overline{\gamma_m}(x, y) \prth{\mathscr{L}_{\hbar, b}  \overline{\gamma_m}}(y, x) dxdy
            = \Tr{\overline{\gamma_m}\mathscr{L}_{\hbar, b}\overline{\gamma_m}}
        \end{align*}
        Inserting this in \cref{eq:HFSC 1}, using the cyclicity of the trace we get
        \begin{align*}
            \abs{\Tr{w\text{Ex}\gamma_m^{\otimes 2}}} = \abs{\Tr{w\text{Ex}\overline{\gamma_m}^{\otimes 2}}}
            \le& C_{GN}\norm{w}_{L^2} \prth{\epsilon\Tr{\mathscr{L}_{\hbar, b}\gamma_m^2} +  \prth{1 + \frac{1}{4\epsilon\hbar^2}} \Tr{\gamma_m^2}} \\
            \le& \dfrac{C_{GN}\norm{w}_{L^2}}{N}\prth{\epsilon\Tr{\mathscr{L}_{\hbar, b}\gamma_m} + \prth{1 + \frac{1}{4\epsilon\hbar^2}}\Tr{\gamma_m}}
        \end{align*}
        With \cref{eq:DT compt 4}, $\Tr{\mathscr{L}_{\hbar, b}\gamma_m} = \mathcal{O}\prth{\hbar b}$ and using \cref{lem:equiv Husimi admissible}, $\Tr{\gamma_m} = \norm{m}_{L^1} + \mathcal{O}(l_b)$ so the choice of $\epsilon$ is the same as in \cref{eq:EUP epsilon choice} thus
        \begin{align*}
            \Tr{w\text{Ex}\gamma_m^{\otimes 2}} = \mathcal{O}(l_b)
        \end{align*}
        We conclude with \cref{eq:HF energy} and \cref{eq:gamma2} then \cref{eq:DT compt 4} and \cref{eq:pot HFSC} applied to $V$ and $w$
        \begin{align*}
            \mathcal{E}_{HF}\sbra{\gamma_m} =& \Tr{\mathscr{L}_{\hbar, b}\gamma_m} + \Tr{V\gamma_m} + \dfrac{N}{N-1}\Tr{w\gamma_m^{\otimes 2}} +  \dfrac{N}{N-1}\Tr{w \text{Ex}\gamma_m^{\otimes 2}} \\
            =& \intr_{\mathbb{N}\times\Omega}E_n m(X)d\eta(X) + \intr_\Omega V\rho_m  + \dfrac{N}{N-1}\intr_{\Omega^2} w\rho_m^{\otimes 2} + \mathcal{O}\prth{\norm{V - V * (g_\lambda^2)}_{L^1}} \\
            &+ \dfrac{N}{N-1}\mathcal{O}\prth{\norm{w - w * (g_\lambda^2)^{\otimes 2}}_{L^1}} + \mathcal{O}\prth{l_b}
            + \mathcal{O}\prth{\hbar b \lambda l_b} +  \mathcal{O}\prth{\prth{\hbar \lambda}^2}
        \end{align*}
        Recalling \cref{eq:f(lambda)}, the semi-classical energy expression \cref{eq:SC energy}, \cref{eq:lambda} and $\hbar b \lambda \gg 1$,
        \begin{align*}
            \mathcal{E}_{HF}\sbra{\gamma_m} = \mathcal{E}_{sc, \hbar}\sbra{m} + f(\lambda) + \mathcal{O}\prth{\hbar b \lambda l_b}
        \end{align*}
    \end{proof}
    
    With the notation of Equation \cref{eq:saturated low LL}, we would like to define a one body operator with saturated low Landau levels:
    \begin{align*}
        \gamma_\rho \coloneq \dfrac{L^2(q+r)}{N} \intr_{\Omega \times \mathbb{N}} m_\rho(X)\Pi_X d\eta(X)
    \end{align*}
    We need to prove that the direct term gives the limit model for qLL and to control the exchange terms. But we cannot apply directly Lieb's principle because with \cref{lem:equiv Husimi admissible} we have an error on the trace
    \begin{align*}
        \Tr{\gamma_\rho} = 1 +o(1) \text{ and } 0\le \gamma_\rho \le \frac{1}{N}
    \end{align*}
    To cure this we modify $m_\rho$ slightly in the following technical Lemma:
    
    \begin{lemma}[title=Corrected Husimi function, label=corrected Husimi]
        Let $n_0 \in \mathbb{N}$, $m \in L^1(\mathbb{N}\times\Omega)$ such that $\forall n > n_0, m(n, \bullet) = 0, \norm{m}_{L^1} = 1 + o(1)$ and
        \begin{align*}
            0\le m \le \dfrac{1}{2\pi l_b^2 N}
        \end{align*}
        then there exist $\widetilde{m} \in L^1(\mathbb{N}\times\Omega), n_1 \in \mathbb{N}$ such that $\forall n > n_1, \widetilde{m}(n, \bullet) = 0$,
        \begin{align*}
            \Tr{\gamma_{\widetilde{m}}} = 1,
            \quad 0\le \gamma_{\widetilde{m}} \le \dfrac{1}{N}
        \end{align*}
        and
        \begin{align*}
            \mathcal{E}_{sc, \hbar b}\sbra{\widetilde{m}} = \mathcal{E}_{sc, \hbar b}\sbra{m} + \mathcal{O}\prth{\hbar b l_b} + \mathcal{O}\prth{\hbar b\prth{1 - \norm{m}_{L^1}}} \yesnumber{SC approx m tilde}
        \end{align*}
    \end{lemma}

    The proof of this Lemma (see \cref{sec:appendix}) consists in moving some small amount of mass from one Landau level to another. Putting all together we obtain the upper bound.
    
    \begin{proof}[prop:EUB]
        Recalling \cref{eq:saturated low LL}, let $\rho \in \mathcal{D}_{qLL}$ and define
        \begin{align*}
            m_{\rho, N} \coloneq \dfrac{d(q+r)}{N} m_\rho \yesnumber{mrhotilde}
        \end{align*}
        then
        \begin{align*}
            &0 \le m_{\rho, N} \le \frac{d}{L^2 N} = \dfrac{1}{2\pi l_b^2 N} \\
            &\intr_{\mathbb{N}\times \Omega} m_{\rho, N} d\eta = \dfrac{d(q+r)}{N} = 1 + o(1)
        \end{align*}
        We consider $\widetilde{m}_{\rho, N}$ the corrected Husimi function in \cref{lem:corrected Husimi} associated with $ m_{\rho, N}$, it satisfies
        \begin{align*}
            \mathcal{E}_{sc, \hbar b}\sbra{\widetilde{m}_{\rho, N}} = \mathcal{E}_{sc, \hbar b}\sbra{m_{\rho, N}} + \mathcal{O}\prth{\hbar b l_b} + \hbar b\mathcal{O}\prth{1 - \dfrac{d(q+r)}{N}} \yesnumber{ELB 1}
        \end{align*}
        and $\Tr{\gamma_{m_{\rho, N}}} = 1, 0\le \gamma_{m_{\rho, N}} \le \dfrac{1}{N}$. Moreover by \cref{eq:DT compt 4},
        \begin{align*}
            \Tr{\mathscr{L}_{\hbar, b}\gamma_{m_{\rho, N}}} = \mathcal{O}\prth{\hbar b}
        \end{align*}
        Thus, we can apply Propositions \cref{prop:Lieb prop}, \cref{prop:HF to SC} and \cref{eq:ELB 1}:
        \begin{align*}
            &\dfrac{E_N^0}{N}  \\
            \le& \mathcal{E}_{HF}\sbra{\gamma_{m_{\rho, N}}} + \mathcal{O}\prth{l_b}
            = \mathcal{E}_{sc, \hbar b}\sbra{\widetilde{m}_{\rho, N}} + \mathcal{O}\prth{f(\lambda)} + \mathcal{O}\prth{\hbar b \lambda l_b}\\
            =&  \mathcal{E}_{sc, \hbar b}\sbra{m_{\rho, N}} + \hbar b\mathcal{O}\prth{1 - \dfrac{d(q+r)}{N}} + \mathcal{O}\prth{f(\lambda)} + \mathcal{O}\prth{\hbar b \lambda l_b} \\
            =& \hbar b E^{q,r} + E_{V}^{q,r} + E_{w}^{q,r} + \mathcal{E}_{qLL}\sbra{\dfrac{d(q+r)}{N}\rho} + \hbar b\mathcal{O}\prth{1 - \dfrac{d(q+r)}{N}} + \mathcal{O}\prth{f(\lambda)} + \mathcal{O}\prth{\hbar b \lambda l_b} \\
            =& \hbar b E^{q,r} + E_{V}^{q,r} + E_{w}^{q,r} + \mathcal{E}_{qLL}\sbra{\rho}
                + \hbar b\mathcal{O}\prth{1 - \dfrac{d(q+r)}{N}} + \mathcal{O}\prth{f(\lambda)} + \mathcal{O}\prth{\hbar b \lambda l_b}
        \end{align*}
        For the last equality we use the estimate
        \begin{align*}
            &\abs{\mathcal{E}_{qLL}\sbra{\dfrac{d(q+r)}{N}\rho} - \mathcal{E}_{qLL}\sbra{\rho}} \\
                \le& \abs{1 - \dfrac{d(q+r)}{N}}\norm{V}_{L^2}\norm{\rho}_{L^2} + \prth{1 - \prth{\dfrac{d(q+r)}{N}}^2}\norm{w}_{L^2}\norm{\rho}_{L^2}^2
        \end{align*}
        and
        \begin{align*}
            \abs{\prth{1 - \prth{\dfrac{d(q+r)}{N}}^2}} \le C\abs{1 - \dfrac{d(q+r)}{N}}
        \end{align*}
    \end{proof}

\subsection{Energy lower bound}\label{sec:LOW E bound}
    
    In this part we prove the Energy lower bound:
    
    \begin{proposition}[title=Lower bound, label=ELB]
    Let $(\psi_N)_N$ be a sequence of minimizers of \cref{eq:E_N^0},
        \begin{align*}
            \mathcal{E}_{sc, \hbar b}\sbra{m_{\psi_N}} \ge \hbar b E^{q,r} + E_{V}^{q,r} + E_{w}^{q,r} + \mathcal{E}_{qLL}^0 + o(1)
        \end{align*}
    \end{proposition}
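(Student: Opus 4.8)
\emph{Overview and a priori estimates.} The plan is to combine the a priori bounds from the Lieb--Thirring inequality \cref{th:kin energy ineq}, a rigidity of the kinetic term forced by the Pauli constraint, and the de Finetti theorem \cref{th:DeFinetti} applied to the Husimi functions $m_{\psi_N}^{(k)}$, which are genuine symmetric and consistent probability densities on $(\mathbb{N}\times\Omega)^k$. First, since $\psi_N$ minimizes, $E_N^0/N=\Tr{\mathscr{L}_{\hbar,b}\gamma_{\psi_N}^{(1)}}+\int_\Omega V\rho_{\psi_N}^{(1)}+\int_{\Omega^2}w\rho_{\psi_N}^{(2)}$; bounding the potential terms with \cref{th:kin energy ineq} and using $\hbar b\to\infty$ together with $E_N^0/N=O(\hbar b)$ (which follows from the upper bound \cref{prop:EUB}, whose error terms are $o(1)$ under \cref{not:scaling}) gives $\Tr{\mathscr{L}_{\hbar,b}\gamma_{\psi_N}^{(1)}}\le E_N^0/N+O(1)\le\hbar b E^{q,r}+O(1)$. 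Hence $\|\rho_{\psi_N}^{(1)}\|_{L^2}=O(1)$, $|\int_{\Omega^2}W\rho_{\psi_N}^{(2)}|\le C\|W\|_{L^2}$ for every radial $W\in L^2(\mathbb{T})$, and by \cref{eq:kinetic energy term}, $\int_{\mathbb{N}\times\Omega}E_nm_{\psi_N}^{(1)}d\eta=\Tr{\mathscr{L}_{\hbar,b}\gamma_{\psi_N}^{(1)}}+o(1)\le\hbar b E^{q,r}+O(1)$; in particular $\sum_{n\ge n_0}\int_\Omega m_{\psi_N}^{(1)}(n,\cdot)\le C/n_0$ uniformly in $N$.

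\emph{Kinetic rigidity.} Minimizing $m\mapsto\int_{\mathbb{N}\times\Omega}E_nm\,d\eta$ over densities with $\int m\,d\eta=1$ and the Pauli bound $m\le\frac1{2\pi l_b^2N}(1+O(l_b))$ of \cref{prty:sym measure} yields the configuration filling Landau levels $0,\dots,q-1$ and filling level $q$ with the rest, of minimal value $\hbar b E^{q,r}+o(1)$; this is the step where the hypotheses $N/d=q+r+o(1/\hbar b)$ and $\hbar\ll N^{-1/4}$ (so $\hbar b\,l_b\to0$) are exactly what turns the error into $o(1)$. Since moving a unit of mass up by one Landau level costs at least $2\hbar b$, the bound of the previous step forces the total displaced mass to be $O(1/\hbar b)\to0$. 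Therefore $\int_\Omega m_{\psi_N}^{(1)}(n,\cdot)$ tends to $\frac1{q+r}$ for $n<q$, to $\frac r{q+r}$ for $n=q$ and to $0$ for $n>q$, and, using the pointwise bound again, $m_{\psi_N}^{(1)}(n,\cdot)\to\frac1{L^2(q+r)}$ in $L^1(\Omega)$, hence in every $L^p$ with $p<\infty$, for each $n<q$.

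\emph{De Finetti and identification of the limit.} The family $(m_{\psi_N}^{(k)})_k$ is tight in the $n$-variables and has densities with respect to $\eta^{\otimes k}$ bounded by $\tfrac{(N-k)!}{(2\pi l_b^2)^kN!}+O(l_b)\to\frac1{L^{2k}(q+r)^k}$. Along a subsequence on which all $m_{\psi_N}^{(k)}$ converge, Hewitt--Savage together with \cref{th:DeFinetti} (to keep the Pauli bound in the limit) provides $\mu\in\mathcal{P}(\mathcal{P}(\mathbb{N}\times\Omega))$ supported on $\{0\le\nu\le\frac1{L^2(q+r)}\}$ with $m_{\psi_N}^{(k)}\to\int\nu^{\otimes k}d\mu(\nu)$. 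Applying the previous step to $\tilde m^{(1)}=\int\nu\,d\mu(\nu)$: saturating the Landau occupations forces, for $\mu$-a.e.\ $\nu$, $\nu(n,\cdot)=\frac1{L^2(q+r)}$ for $n<q$, $\nu(n,\cdot)=0$ for $n>q$, and $\nu(q,\cdot)=:\rho\in\mathcal{D}_{qLL}$, i.e.\ $\nu=m_\rho$ in the notation of \cref{eq:saturated low LL}; so $\mu$ descends to a probability measure on $\mathcal{D}_{qLL}$ with $m_{\psi_N}^{(k)}\to\int_{\mathcal{D}_{qLL}}m_\rho^{\otimes k}\,d\mu(\rho)$.

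\emph{Passing to the limit in the potentials and conclusion.} We have $\int_{\mathbb{N}\times\Omega}Vm_{\psi_N}^{(1)}d\eta=\int_\Omega V(g_\lambda^2*\rho_{\psi_N}^{(1)})$, with $g_\lambda^2*\rho_{\psi_N}^{(1)}$ bounded in $L^2(\Omega)$ and converging weakly to $\int_{\mathcal{D}_{qLL}}\rho_{m_\rho}\,d\mu$, so this term tends to $\int_{\mathcal{D}_{qLL}}(\int_\Omega V\rho_{m_\rho})d\mu$ since $V\in L^2$. For the interaction, split $w=w\mathbb{1}_{|w|\le M}+w\mathbb{1}_{|w|>M}$: the second piece contributes at most $C\|w\mathbb{1}_{|w|>M}\|_{L^2}$ uniformly in $N$ (the a priori bound of the first step applied to its mollification $(g_\lambda^2)^{\otimes2}*w\mathbb{1}_{|w|>M}$), the first piece converges because $m_{\psi_N}^{(2)}$ converges against $L^1$ test functions (uniform $L^\infty$ bound on its density and uniform $n$-tail), and letting $M\to\infty$ yields $\int wm_{\psi_N}^{(2)}d\eta^{\otimes2}\to\int_{\mathcal{D}_{qLL}}(\iint_{\Omega^2}w\,\rho_{m_\rho}^{\otimes2})d\mu$. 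By \cref{prop:sat LLL}, $\int_\Omega V\rho_{m_\rho}+\iint_{\Omega^2}w\rho_{m_\rho}^{\otimes2}=E_V^{q,r}+E_w^{q,r}+\mathcal{E}_{qLL}[\rho]\ge E_V^{q,r}+E_w^{q,r}+E_{qLL}^0$ for $\mu$-a.e.\ $\rho$; combining this with $\int E_nm_{\psi_N}^{(1)}\ge\hbar b E^{q,r}+o(1)$ and $\mathcal{E}_{sc,\hbar b}[m_{\psi_N}]=\int E_nm_{\psi_N}^{(1)}+\int Vm_{\psi_N}^{(1)}+\int wm_{\psi_N}^{(2)}$ gives the stated inequality along the subsequence, hence (the bound not depending on it) for the whole sequence. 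The main obstacle is the de Finetti step: compactness and the Pauli bound live at the level of phase-space probability measures, so one must both transfer the Pauli principle to the limit measure $\mu$ — the role of \cref{th:DeFinetti} — and control the pairing of the merely $L^2$ interaction potential against $m_{\psi_N}^{(2)}$ uniformly near its singularity, which is where \cref{th:kin energy ineq} re-enters; the sharp scaling hypotheses entering the rigidity step are the other delicate point.
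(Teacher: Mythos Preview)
Your proof is correct and follows essentially the same architecture as the paper: a priori kinetic bounds from the upper bound and Lieb--Thirring, the rigidity argument showing Landau occupations are forced (the paper does this explicitly via the quantities $c_{N,n}\coloneq\int_\Omega(m_{\psi_N}^{(1)}(n,\cdot)-m_\rho(n,\cdot))$), extraction of limit Husimi functions, de Finetti, transfer of the Pauli bound to the de Finetti measure, and the final computation via \cref{prop:sat LLL}. The only notable difference is your handling of the $L^2$ potentials: you truncate $w$ at level $M$ and use the kinetic energy inequality on the tail, whereas the paper regularizes $V,w$ by convolution with a smooth mollifier and uses the same inequality on the difference $V-V_{k,n}$; both work, and yours is arguably more direct.
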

    
    Husimi functions are symmetric and consistent measures. The De Finetti \cref{th:DeFinetti} states that such measures are reduced to trivial measure of this kind, namely tensorized products of one body measures and their convex combinations. This result plays an important role in the control of correlations for the lower bound. We start by extracting some limit Husimi functions and give their fundamental properties. Similar arguments can be found in \cite[Section 2]{FLS}. With \cref{not:Husimi},

    \begin{proposition}[label=limit densities & H]
        Let $(\psi_N)_N$ be a sequence of minimizers of \cref{eq:E_N^0}, up to a subsequence
        \begin{enumerate}[wide, labelindent=0pt, label=\alph*)]
            \item 
                there exists limit Husimi functions $M^{(k)} \in L^\infty \prth{(\mathbb{N}\times\Omega)^k}$ such that
                \begin{align*}
                    &m_{\psi_N}^{(k)} \wslim\displaylimits_{N\to\infty} M^{(k)} \text{ in the weak star topology on } L^\infty\prth{\prth{\mathbb{N}\times\Omega}^k} \yesnumber{conv Husimi} \\
                    &0 \le M^{(k)} \le \dfrac{1}{\prth{L^2(q+r)}^k} \yesnumber{limit H bound}
                \end{align*}
            \item
            $M^{(1)}(q, \bullet) \in \mathcal{D}_{qLL}$ and
                    \begin{align*}
                        M^{(1)}(n, \bullet) = \mathbb{1}_{n < q}\dfrac{1}{L^2(q+r)} + \mathbb{1}_{n = q} M^{(1)}(q, \bullet) `\yesnumber{M1}
                    \end{align*}
            \item
                $M^{(k)}$ are the reduced densities of a symmetric measure $M$ on $\prth{\mathbb{N}\times\Omega}^\mathbb{N}$ and $\norm{M^{(k)}}_{L^1} = 1$
            \item
                in the sense of Radon measures 
                \begin{align*}
                    \rho_{m_{\psi_N}}^{(k)} \wslim\displaylimits_{N\to\infty} \rho_{M^{(k)}} \yesnumber{Radon m}
                \end{align*}
            \item
                we have convergence of the potential terms:
                \begin{align*}
                    \mathcal{E}_{qLL}\sbra{\rho_{m_{\psi_N}}} \limit\displaylimits_{N\to\infty} \mathcal{E}_{qLL}\sbra{\rho_{M}} \yesnumber{to rhoM}
                \end{align*}
        \end{enumerate}
    \end{proposition}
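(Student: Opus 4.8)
The plan is to combine a priori bounds coming from the energy upper bound with weak-$*$ compactness and the de Finetti theorem; everything below is understood along a single subsequence. The cornerstone is the kinetic estimate $\mathrm{Tr}(\mathscr{L}_{\hbar, b}\gamma_{\psi_N}^{(1)}) = \mathcal{O}(\hbar b)$: starting from the splitting \cref{eq:E reduced d}, the two potential terms $\int_\Omega V\rho_{\psi_N}^{(1)}$ and $\int_{\Omega^2}w\rho_{\psi_N}^{(2)}$ are $\mathcal{O}\big((\hbar b)^{-1}\mathrm{Tr}(\mathscr{L}_{\hbar, b}\gamma_{\psi_N}^{(1)})\big)$ by \cref{eq:Kin E ineq} and \cref{eq:Kin E ineq w}, so (since $\hbar b\to\infty$) they are absorbed on the left, giving $(1-o(1))\,\mathrm{Tr}(\mathscr{L}_{\hbar, b}\gamma_{\psi_N}^{(1)})\le E_N^0/N$; the right side is $\le \hbar b E^{q,r}+\mathcal{O}(1)$ by the upper bound \cref{prop:EUB}, whose error terms are $o(1)$ thanks to \cref{eq:speed of LL conv} and \cref{eq:lambda}. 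Then \cref{prop:SC energy} yields $\mathcal{E}_{sc,\hbar b}[m_{\psi_N}]=E_N^0/N+o(1)\le\hbar b E^{q,r}+\mathcal{O}(1)$, and \cref{eq:Kin E ineq} also gives that $\rho_{\psi_N}^{(1)}$ is bounded in $L^2(\Omega)$, hence that the potential terms of $\mathcal{E}_{sc,\hbar b}[m_{\psi_N}]$ are $\mathcal{O}(1)$.

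Part a) is Banach--Alaoglu: \cref{eq:Husimi k bound} is uniform in $N$, and $2\pi l_b^2 N\to L^2(q+r)$ so it converges to $(L^2(q+r))^{-k}$; a diagonal extraction over $k$ produces one subsequence valid for all $k$, and the pointwise bounds pass to the limit. Part b) is the heart of the matter and is forced by the energy. Writing $c_n^N\coloneq\int_\Omega m_{\psi_N}^{(1)}(n,\cdot)$, the bound on $\mathcal{E}_{sc,\hbar b}[m_{\psi_N}]$ minus the $\mathcal{O}(1)$ potential terms forces $\sum_n(n+\tfrac12)c_n^N\le\tfrac12 E^{q,r}+o(1)$, while $\sum_n c_n^N=1$ and $c_n^N\le\tfrac1{q+r}+o(1)$. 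Now $\tfrac12 E^{q,r}$ is exactly the minimum of $\sum_n(n+\tfrac12)c_n$ over $\{c_n\ge0,\ \sum_n c_n=1,\ c_n\le\tfrac1{q+r}\}$, attained only at $c_n^\star\coloneq\mathbb{1}_{n<q}\tfrac1{q+r}+\mathbb{1}_{n=q}\tfrac r{q+r}$; since $c_n^N\to\int_\Omega M^{(1)}(n,\cdot)$ for each fixed $n$ (testing against $\mathbb{1}_{\{n\}}\otimes\mathbb{1}_\Omega\in L^1$), and the bound on $\sum_n n c_n^N$ gives tightness in the Landau index, a Fatou argument shows $(\int_\Omega M^{(1)}(n,\cdot))_n$ is itself a minimizer, hence equals $c_n^\star$. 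Finally $0\le M^{(1)}(n,\cdot)\le\tfrac1{L^2(q+r)}$ with integral $\tfrac1{q+r}$ forces $M^{(1)}(n,\cdot)\equiv\tfrac1{L^2(q+r)}$ for $n<q$, integral $0$ forces $M^{(1)}(n,\cdot)\equiv0$ for $n>q$, and for $n=q$ the remaining constraints are precisely the definition of $\mathcal{D}_{qLL}$, giving \cref{eq:M1}.

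For part c), the $M^{(k)}$ inherit symmetry and consistency from the $m_{\psi_N}^{(k)}$ (\cref{prty:sym measure}); the kinetic tightness propagates by consistency to $\int m_{\psi_N}^{(k)}\mathbb{1}_{\{\exists i:\ n_i\ge N_0\}}\le kC/N_0$, and testing against the $L^1$ indicator of $\{n_i<N_0\ \forall i\}$ gives $\|M^{(k)}\|_{L^1}\ge 1-kC/N_0$, hence $=1$; Kolmogorov's extension theorem then produces the symmetric $M$ on $(\mathbb{N}\times\Omega)^{\mathbb{N}}$ with marginals $M^{(k)}$. Part d) is the same truncation: for $\phi\in C^0(\Omega^k)$ one splits the sum defining $\rho_{m_{\psi_N}}^{(k)}$ at level $N_0$, the finite part converges by weak-$*$ convergence, the tail is $\le\|\phi\|_\infty kC/N_0$ uniformly in $N$, and $N\to\infty$ then $N_0\to\infty$ gives $\rho_{m_{\psi_N}}^{(k)}\to\rho_{M^{(k)}}$ as Radon measures.

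For part e), the one-body term is handled via \cref{eq:husimi to density}: $\int_\Omega V\rho_{m_{\psi_N}^{(1)}}=\int_\Omega(g_\lambda^2*V)\rho_{\psi_N}^{(1)}=\int_\Omega V\rho_{\psi_N}^{(1)}+\mathcal{O}(f(\lambda))$ by \cref{eq:Kin E ineq}, and since $\rho_{\psi_N}^{(1)}$ is $L^2$-bounded its weak $L^2$ limit coincides with its Radon limit $\rho_{M^{(1)}}$, so this tends to $\int_\Omega V\rho_{M^{(1)}}$. For the interaction term I would approximate $w$ in $L^2(\mathbb{T})$ by continuous $w_\varepsilon$ (still functions of $x-y$): the $w_\varepsilon$-part converges by d), the remainder against $\rho_{m_{\psi_N}^{(2)}}=(g_\lambda^2)^{\otimes 2}*\rho_{\psi_N}^{(2)}$ is $\mathcal{O}(\|w-w_\varepsilon\|_{L^2})$ uniformly in $N$ by \cref{eq:Kin E ineq w}, and the remainder against the limit is controlled because $\rho_{M^{(2)}}\in L^\infty$ — which follows from c) and b), since the de Finetti representation of $M$ (\cref{th:DeFinetti}) together with the equality case of Cauchy--Schwarz forces the de Finetti measure to charge only the saturated densities $m_\rho$ of \cref{eq:saturated low LL}. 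I expect part b) to be the real obstacle: it is the only point where the separation of energy scales must be exploited, whereas a), c), d) are soft compactness and e) is a delicate but routine approximation on top of the Lieb--Thirring estimates already in hand.
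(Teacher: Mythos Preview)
Your proof is correct and follows the paper's approach closely: Banach--Alaoglu for a), the energy forcing the level occupations for b) (the paper phrases this via the signed differences $c_{N,n}=\int_\Omega(m_{\psi_N}^{(1)}(n,\cdot)-m_\rho(n,\cdot))$ rather than your unique-minimizer formulation, but the content is identical), Kolmogorov extension for c), Landau-index truncation for d), and $L^2$ approximation of the potentials combined with the Lieb--Thirring bounds for e).

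The one detour worth flagging is at the end of e): you invoke the de Finetti representation and an unclear ``equality case of Cauchy--Schwarz'' to obtain $\rho_{M^{(2)}}\in L^\infty$, but this is unnecessary and the paper does not do it here (de Finetti is the content of the \emph{next} proposition). The bound $\rho_{M^{(k)}}\in L^\infty$ follows immediately from a) together with b) and consistency: $M^{(k)}$ is supported on $\{0,\dots,q\}^k$ and bounded by $(L^2(q+r))^{-k}$, so $\rho_{M^{(k)}}\le (q+1)^k(L^2(q+r))^{-k}$. Similarly, your one-body step in e) routes through the weak $L^2$ limit of $\rho_{\psi_N}^{(1)}$, which works but requires identifying that limit with $\rho_{M^{(1)}}$; the paper avoids this by approximating $V$ directly by continuous functions and using d), exactly as you do for $w$.
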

    
    \begin{proof}
        \begin{enumerate}[wide, labelindent=0pt, label=\alph*)]
            \item 
                From inequality \cref{eq:Husimi k bound} the Husimi functions are uniformly bounded, with a diagonal extraction we obtain \cref{eq:conv Husimi} and the bound \cref{eq:Husimi k bound} with \cref{eq:l_b scaling} induce \cref{eq:limit H bound} in the limit.
            \item
                Now since we took a minimizer of the energy, with the upper bound \cref{prop:EUB} and the Kinetic energy inequalities \cref{eq:Kin E ineq} and \cref{eq:Kin E ineq w},
                \begin{align*}
                    \dfrac{E_N^0}{N} 
                    =& \Tr{\mathscr{L}_{\hbar, b} \gamma_{\psi_N}^{(1)}} + \intr_\Omega V\rho_{\psi_N}^{(1)} + \intr_{\Omega^2} w\rho_{\psi_N}^{(2)}
                    = \Tr{\mathscr{L}_{\hbar, b} \gamma_{\psi_N}^{(1)}}\prth{1 + \mathcal{O}\prth{\dfrac{1}{\hbar b}}}\\
                    \le& \mathcal{E}_{sc, \hbar b}\sbra{m_\rho} + \hbar b\mathcal{O}\prth{1 - \dfrac{d(q+r)}{N}} + \mathcal{O}\prth{f(\lambda)} + \mathcal{O}\prth{\hbar b \lambda l_b}
                \end{align*}
                so by \cref{eq:sat LLL} we know that
                \begin{align*}
                    \Tr{\mathscr{L}_{\hbar, b} \gamma_{\psi_N}^{(1)}} = \mathcal{O}\prth{\hbar b} \yesnumber{controlled kin E}
                \end{align*}
                Since the contribution of the potential are bounded, the only thing we have to look at are the kinetic terms. Let $m_\rho$ be the Husimi function with saturated low Landau levels defined here \cref{eq:saturated low LL}. We denote
                \begin{align*}
                    c_{N,n} \coloneq \intr_\Omega\prth{m_{\psi_N}^{(1)}(n, .) - m_\rho(n, .)} 
                \end{align*}
                By definition of $m_\rho$ and \cref{lem:equiv Husimi admissible} we have
                \begin{align*}
                    &\sum_{n\in\mathbb{N}} c_{N,n} = \intr_{\mathbb{N}\times\Omega} m_{\psi_N}^{(1)} - \intr_{\mathbb{N}\times\Omega} m_\rho = 1 - 1 = 0 \\
                    &n < q \implies c_{N,n} \le \dfrac{L^2}{2\pi l_b^2 N} + \mathcal{O}\prth{l_b} - \dfrac{1}{q+r} = \mathcal{O}\prth{l_b} + \mathcal{O}\prth{1 - \dfrac{d(q+r)}{N}}\\
                    & n > q \implies c_{N,n} = \norm{ m_{\psi_N}^{(1)}(n, \bullet)}_{L^1} \ge 0
                \end{align*}
                Since $(E_n)_n$ is increasing
                \begin{align*}
                    \sum_{n\in\mathbb{N}} E_n c_{N,n} 
                    \ge \sum_{n=0}^q E_n c_{N,n}  + E_q \sum_{n>q} c_{N,n} 
                    =& -\sum_{n=0}^{q-1} (E_q - E_n) c_{N, n} \\
                    \ge& \mathcal{O}\prth{\hbar b l_b} + \hbar b \mathcal{O}\prth{1 - \dfrac{d(q+r)}{N}} \yesnumber{cn}
                \end{align*}
                Now we compute
                \begin{align*}
                    \mathcal{E}_{sc, \hbar b}\sbra{m_{\psi_N}} -  \mathcal{E}_{sc, \hbar b}\sbra{m_\rho} 
                    =& \sum_{n\in\mathbb{N}} E_n c_{N,n} + \intr_{\mathbb{N}\times\Omega}  V \prth{m_{\psi_N}^{(1)} - m_\rho}d\eta \\
                    &+  \intr_{\prth{\mathbb{N}\times\Omega}^2}  w \prth{m_{\psi_N}^{(2)} - m_\rho^{\otimes 2}}d\eta^{\otimes 2} \yesnumber{expression for lowerbound}
                \end{align*}
                From the semi-classical approximation (\cref{prop:SC energy}), \cref{eq:controlled kin E} and the upper bound (\cref{prop:EUB}),
                \begin{align*}
                    \dfrac{E_N^0}{N} 
                    =& \dfrac{\bk{\psi_N}{\mathcal{H}_N\psi_N}}{N}
                    = \mathcal{E}_{sc, \hbar b}\sbra{m_{\psi_N}} + \mathcal{O}\prth{f(\lambda)} + \mathcal{O}\prth{(\hbar \lambda)^2} \\
                    \le& \mathcal{E}_{sc, \hbar b}\sbra{m_\rho} + \hbar b\mathcal{O}\prth{1 - \dfrac{d(q+r)}{N}} + \mathcal{O}\prth{f(\lambda)} + \mathcal{O}\prth{\hbar b \lambda l_b}
                \end{align*}
                so with \cref{eq:lambda},
                \begin{align*}
                    \mathcal{E}_{sc, \hbar b}\sbra{m_{\psi_N}} -  \mathcal{E}_{sc, \hbar b}\sbra{m_\rho} \le \hbar b\mathcal{O}\prth{1 - \dfrac{d(q+r)}{N}} + \mathcal{O}\prth{f(\lambda)} + \mathcal{O}\prth{\hbar b \lambda l_b} \yesnumber{ineq Encn}
                \end{align*}
                All the potential terms in \cref{eq:expression for lowerbound} are of order $1$, therefore the sum in \cref{eq:cn} is bounded and we have
                \begin{align*}
                    \mathcal{O}\prth{\hbar b l_b} + \hbar b \mathcal{O}\prth{1 - \dfrac{d(q+r)}{N}} \le - \sum_{n=0}^{q-1} (E_q - E_n) c_{N,n} \le \sum_{n \in\mathbb{N}}E_nc_{N,n}\le C 
                \end{align*}
                So
                \begin{align*}
                    \sum_{n=0}^{q-1} \dfrac{E_n - E_q}{\hbar b} c_{N,n} = \mathcal{O}\prth{\dfrac{1}{\hbar b}} \yesnumber{order cNn}
                \end{align*}
                With a similar inequality as \cref{eq:cn} but with $E_{q+1}$ instead of $E_q$ we deduce
                \begin{align*}
                    C &\ge \sum_{n \in\mathbb{N}}E_nc_{N,n} \ge \sum_{n=0}^q E_n c_{N,n}  + E_{q+1} \sum_{n>q} c_{N,n} = \sum_{n=0}^{q} (E_n - E_{q+1}) c_{N,n}\\
                    &\ge \sum_{n=0}^q E_n c_{N,n}  + E_{q} \sum_{n>q} c_{N,n} \ge \mathcal{O}\prth{\hbar b l_b} + \hbar b \mathcal{O}\prth{1 - \dfrac{d(q+r)}{N}} \yesnumber{positivity kinetic diff}
                \end{align*}
                and therefore \cref{eq:order cNn} implies
                \begin{align*}
                    c_{N,q} = \mathcal{O}\prth{\dfrac{1}{\hbar b}}
                \end{align*}
                Then
                \begin{align*}
                    \sum_{n>q} \dfrac{E_n}{\hbar b} c_{N,n} 
                    = \sum_{n\in\mathbb{N}} \dfrac{E_n}{\hbar b} c_{N,n} -  \sum_{n=0}^q \dfrac{E_n}{\hbar b} c_{N,n} = \mathcal{O}\prth{\dfrac{1}{\hbar b}} \ge \sum_{n>q} \intr_{\Omega} m_{\psi_N}^{(1)}(n, \bullet)
                \end{align*}
                and
                \begin{align*}
                    c_{N,q} = \intr_\Omega m_{N}^{(1)}(q, R)dR -  \intr_\Omega \rho(R)dR =  \norm{m_{N}^{(1)}(q, \bullet)}_{L^1} - \dfrac{r}{q+r} =  \mathcal{O}\prth{\dfrac{1}{\hbar b}} \yesnumber{Mq norm}
                \end{align*}
                From the consistency of $m_{\psi_N}^{(k)}$ in \cref{prty:sym measure},
                \begin{align*}
                    \norm{m_{\psi_N}^{(1)}(n_1, \bullet)}_{L^1} 
                    =& \intr_\Omega \prth{\intr_{(\mathbb{N}\times \Omega)^{k-1}} m_{\psi_N}^{(k)}(n_1, x_1; X_{2:k}) d\eta^{\otimes(k-1)}(X_{2:k})}dx_1 \\
                    =& \sum_{n_{2:k}\in \mathbb{N}^{k-1}} \norm{m_{\psi_N}^{(k)}(n_{1:k}, \bullet)}_{L^1} \yesnumber{m1 to mk}
                \end{align*}
                Since
                \begin{align*}
                    \mathbb{N}^k \backslash \intint{0:q}^k = \bigsqcup_{i=1}^k \mathbb{N}^{i-1} \times \prth{\mathbb{N}\backslash\intint{0:q}} \times \mathbb{N}^{k-i}
                \end{align*}
                by the symmetry of $m_{\psi_N}^{(k)}$, \cref{eq:m1 to mk} and \cref{eq:order cNn},
                \begin{align*}
                    \sum_{n_{1:k} \in \mathbb{N}^k \backslash \intint{0:q}^k} \norm{m_{\psi_N}^{(k)}(n_{1:k}, \bullet)}_{L^1} = k \sum_{n_1 > q} \norm{m_{\psi_N}^{(1)}(n_1, \bullet)}_{L^1} = \mathcal{O}\prth{\dfrac{1}{\hbar b}} \yesnumber{over q mk}
                \end{align*}
                $\Omega$ is bounded, thus testing \cref{eq:conv Husimi} against $\mathbb{1}_{\sett{n_{1:k}}\times\Omega} \in L^1\prth{\prth{\mathbb{N}\times\Omega}^k}$,
                \begin{align*}
                    \norm{m_{\psi_N}^{(k)}(n_{1:k}; \bullet)}_{L^1} \limit\displaylimits_{N\to\infty} \norm{M^{(k)}(n_{1:k}; \bullet)}_{L^1}
                \end{align*}
                So \cref{eq:Mq norm} gives
                \begin{align*}
                    \norm{M^{(1)}(q, \bullet)}_{L^1} = \dfrac{r}{q+r}
                \end{align*}
                and with \cref{eq:over q mk}, if $n_{1:k} \in \mathbb{N}^k \backslash \intint{0:q}^k$, then $M^{(k)}(n_{1:k}, \bullet) = 0$ and we see that the norm \cref{eq:Husimi norm} passes to the limit:
                \begin{align*}
                    \norm{M^{(k)}}_{L^1} 
                    =& \sum_{n_{1:k}\in\intint{0:q}^k} \norm{M^{(k)}(n_{1:k}, \bullet)}_{L^1} 
                    = \lim_{N\to\infty} \sum_{n_{1:k}\in\intint{0:q}^k} \norm{m_{\psi_N}^{(k)}(n_{1:k}, \bullet)}_{L^1} \\
                    =& \lim_{N\to\infty} \prth{\sum_{n_{1:k}\in\intint{0:q}^k} \norm{m_{\psi_N}^{(k)}(n_{1:k}, \bullet)}_{L^1} + \sum_{n_{1:k} \in \mathbb{N}^k \backslash \intint{0:q}^k} \norm{m_{\psi_N}^{(k)}(n_{1:k}, \bullet)}_{L^1}} \\
                    =& \lim_{N\to\infty}  \norm{m_{\psi_N}^{(k)}}_{L^1}
                    = 1
                \end{align*}
                If $n < 0$, by \cref{eq:order cNn},
                \begin{align*}
                    \norm{m_{\psi_N}^{(1)}(n ,\bullet) - \dfrac{1}{L^2(q+r)}}_{L^1} 
                    \le& \norm{m_{\psi_N}^{(1)}(n ,\bullet) - \dfrac{1}{2\pi l_b^2 N}}_{L^1} +  \mathcal{O}\prth{1 - \dfrac{d(q+r)}{N}} \\
                    =& \intr_\Omega\prth{\dfrac{1}{2\pi l_b^2 N} - m_{\psi_N}^{(1)}(n ,\bullet)} + \mathcal{O}\prth{1 - \dfrac{d(q+r)}{N}}\\ 
                    =& \intr_\Omega\prth{\dfrac{1}{L^2(q+r)} - m_{\psi_N}^{(1)}(n ,\bullet)} + \mathcal{O}\prth{1 - \dfrac{d(q+r)}{N}} \\
                    =& -C_{N, n} + \mathcal{O}\prth{1 - \dfrac{d(q+r)}{N}}
                    = \mathcal{O}\prth{\dfrac{1}{\hbar b}} +  \mathcal{O}\prth{1 - \dfrac{d(q+r)}{N}} 
                \end{align*}
                so $M^{(1)}(n, \bullet) = \dfrac{1}{L^2(q+r)}$.
            \item
                Testing \cref{eq:coherent Husimi} against $\varphi_q \in C_c^0\prth{\prth{\mathbb{N}\times\Omega}^q}$, we have
                \begin{align*}
                    \intr_{\prth{\mathbb{N}\times\Omega}^q} \varphi_q m_{\psi_N}^{(q)}d\eta^{\otimes q} 
                    = \intr_{\prth{\mathbb{N}\times \Omega}^k}  \varphi_q(X_{1:q}) m_{\psi_N}^{(k)}(X_{1:k}) d\eta^{\otimes k}(X_{1:k}) \yesnumber{limit consistency}
                \end{align*}
                Since $\varphi_q \in L^1\prth{\prth{\mathbb{N}\times\Omega}^k}$, with \cref{eq:conv Husimi},
                \begin{align*}
                    \intr_{\prth{\mathbb{N}\times\Omega}^q} \varphi_q m_{\psi_N}^{(q)}d\eta^{\otimes q} \limit\displaylimits_{N\to\infty} \intr_{\prth{\mathbb{N}\times\Omega}^q} \varphi_q M^{(q)}d\eta^{\otimes q} \yesnumber{cons lim 1}
                \end{align*}
                In order to pass to the limit in the right term of \cref{eq:limit consistency}, for the low Landau levels we use \cref{eq:conv Husimi} on \begin{align*}
                    \mathbb{1}_{\prth{\intint{0:q}\times\mathbb{N}}^k} \prth{\varphi_q \otimes \text{Id}_{\prth{\mathbb{N}\times \Omega}^{k-q}}} \in  L^1\prth{\prth{\mathbb{N}\times\Omega}^k}
                \end{align*}
                and for the high Landau levels we use \cref{eq:over q mk} and $\varphi_q \in L^\infty\prth{\prth{\mathbb{N}\times\Omega}^k}$:
                \begin{align*}
                    \intr_{\prth{\mathbb{N}\times \Omega}^k}  \varphi_q(X_{1:q}) m_{\psi_N}^{(k)}(X_{1:k}) d\eta^{\otimes k}(X_{1:k})
                    =& \intr_{\Omega^k} \mathbb{1}_{\prth{\intint{0:q}\times\mathbb{N}}^k} \prth{\varphi_q \otimes \text{Id}_{\prth{\mathbb{N}\times \Omega}^{k-q}}}m_{\psi_N}^{(k)} d\eta^{\otimes k} \\
                    &+ \sum_{n_{1:k}\in \mathbb{N^k}\backslash\intint{0:q}^k} \intr_{\Omega^k} \varphi_q(n_{1:q}, x_{1:q})m_{\psi_N}^{(k)}(n_{1:k}, x_{1:k}) dx_{1:k} \\
                    \limit\displaylimits_{N\to\infty}& \intr_{\Omega^k} \mathbb{1}_{\prth{\intint{0:q}\times\mathbb{N}}^k} \prth{\varphi_q \otimes \text{Id}_{\prth{\mathbb{N}\times \Omega}^{k-q}}}M^{(k)} d\eta^{\otimes k} \\
                    =& \intr_{\prth{\mathbb{N}\times \Omega}^k} \varphi_q(X_{1:q}) M^{(k)}(X_{1:k}) d\eta^{\otimes k}(X_{1:k}) \yesnumber{cons lim 2}
                \end{align*}
                Thus passing to the limit in \cref{eq:limit consistency} and inserting \cref{eq:cons lim 1} and\cref{eq:cons lim 2} we obtain 
                \begin{align*}
                    \forall \varphi_q \in C_c^0\prth{\prth{\mathbb{N}\times\Omega}^q}, \intr_{\prth{\mathbb{N}\times\Omega}^q} \varphi_q M^{(q)}d\eta^{\otimes q} = \intr_{\prth{\mathbb{N}\times \Omega}^k} \varphi_q(X_{1:q}) M^{(k)}(X_{1:k}) d\eta^{\otimes k}(X_{1:k})
                \end{align*}
                and this proves that the limit Husimi functions are also consistent. Testing against $\varphi_q$, we also obtain that the symmetry of Husimi functions passes to the limit. Then we can conclude with the Kolmogorov extension theorem that there exists $M$ a symmetric measure on $\prth{\mathbb{N}\times\Omega}^\mathbb{N}$ whose marginals are $(M^{(k)})_k$.
            \item
                Let $\varphi_k \in C^0(\Omega^k)$, $\varphi_k$ is bounded and
                \begin{align*}
                    \mathbb{1}_{\intint{0:q}^k} \otimes \varphi_k \in L^1\prth{\prth{\mathbb{N}\times\Omega}^k}
                \end{align*}
                so using \cref{eq:conv Husimi} and \cref{eq:over q mk}
                \begin{align*}
                    \intr_{\Omega^k} \varphi_k \rho_{m_{\psi_N}}^{(k)}
                    =& \intr_{\prth{\mathbb{N}\times \Omega}^k} \prth{\mathbb{1}_{\intint{0:q}^k} \otimes \varphi_k} m_{\psi_N}^{(k)} d\eta^{\otimes k}
                    + \sum_{n_{1:k}\in\mathbb{N}^k\backslash\intint{0:q}^k} \intr_{\Omega^k} \varphi_k m_{\psi_N}^{(k)} \\
                    \limit\displaylimits_{N\to\infty}& \intr_{\prth{\mathbb{N}\times \Omega}^k} \prth{\mathbb{1}_{\intint{0:q}^k} \otimes \varphi_k} M^{(k)} d\eta^{\otimes k} 
                    = \intr_{\Omega^k} \varphi_k \rho_M^{(k)}
                \end{align*}
            \item
                Let $V_k \in L^2(\Omega^k)$, and $(V_{k, n})_n \subset C^\infty(\Omega^k)$ a sequence regularised with a convolution to a regular function so that
                \begin{align*}
                    \norm{V_k - V_{k, n}}_{L^2} \limit\displaylimits_{n\to\infty} 0
                \end{align*}
                we have
                \begin{align*}
                    \intr_{\Omega^k} V_k\prth{\rho_{m_{\psi_N}}^{(k)} - \rho_M^{(k)}} 
                    = \intr_{\Omega^k} V_{k, n} \prth{\rho_{m_{\psi_N}}^{(k)} - \rho_M^{(k)}} 
                        + \intr_{\Omega^k} \rho_{m_{\psi_N}}^{(k)} \prth{V_k - V_{k, n}}
                        + \intr_{\Omega^k} \rho_M^{(k)}\prth{V_{k, n} - V_k}
                \end{align*}
                For a fixed $n$, since $V_{k, n} \in C^0(\Omega^k)$ by \cref{eq:Radon m} the first term goes to $0$ when $N\to\infty$. For the second term we use \cref{eq:Kin E ineq} if $V_1 = V$, \cref{eq:Kin E ineq w} if $V_2 = w$ and \cref{eq:husimi to density}
                \begin{align*}
                    \abs{\intr_{\Omega^k} \rho_{m_{\psi_N}}^{(k)} \prth{V_k - V_{k, n}}} 
                    =& \abs{\intr_{\Omega^2} \prth{\prth{g_\lambda^k}^{\otimes k}*\rho_N^{(k)}} \prth{V_k - V_{k, n}}} 
                    \le C \norm{\prth{V_k - V_{k, n}} * \prth{g_\lambda^2}^{\otimes k}}_{L^2}  \\
                    \le& C\norm{\prth{V_k - V_{k, n}}}_{L^2}
                \end{align*}
                For the third term we use Hölder's inequality since $\rho_M^{(k)} \in L^\infty\prth{\Omega^k}$ so we have
                \begin{align*}
                    \lim_{N\to\infty}\abs{\intr_{\Omega^k} V_k\prth{\rho_{m_{\psi_N}}^{(k)} - \rho_M^{(k)}}}
                    \le C \norm{V_k - V_{n, k}}_{L^2} \limit\displaylimits_{n\to\infty} 0
                \end{align*}
        \end{enumerate}
    \end{proof}

    Now we want to apply the De Finetti theorem to $M$:

    \begin{theorem}[title=De Finetti or Hewitt-Savage, label=DeFinetti]
        Let $X$ be a metric space, $\mu \in \mathcal{P}_s(X^\mathbb{N})$ be a symmetric probability measure with marginals $\prth{\mu^{(n)}}_{n \ge 1}$.
        
        $\exists P_{\mu} \in \mathcal{P}(\mathcal{P}(X)) $ such that:
        \begin{equation}
            \forall n\in \mathbb{N}^*, \mu^{(n)} = \intr_{\mathcal{P}(\Omega)}  \rho^{\otimes n}dP_{\mu}(\rho)
        \end{equation}
    \end{theorem}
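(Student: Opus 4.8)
The plan is to prove this constructively, following Diaconis and Freedman, which keeps everything within the weak-$*$ / total-variation formalism already used above and makes the mixing measure $P_\mu$ explicit as a limit of empirical-measure laws. For each $N\in\mathbb{N}^*$ let $\mu_N$ be the image of $\mu$ under the projection $X^\mathbb{N}\to X^N$ onto the first $N$ coordinates; it is a symmetric probability measure on $X^N$, and by consistency of $\mu$ its $k$-th marginal is $\mu^{(k)}$ whenever $k\le N$. Introduce the empirical measure map $e_N\colon X^N\to\mathcal{P}(X)$, $e_N(x_{1:N})\coloneq\frac{1}{N}\sum_{i=1}^N\delta_{x_i}$, and let $\nu_N\coloneq(e_N)_*\mu_N\in\mathcal{P}(\mathcal{P}(X))$ be its pushforward. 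The measure $P_\mu$ will be obtained as a weak limit of a subsequence of $(\nu_N)_N$.

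The first step is the quantitative finite de Finetti estimate: for every fixed $k$ and every $N\ge k$,
\begin{align*}
    \norm{\mu^{(k)} - \intr_{\mathcal{P}(X)}\rho^{\otimes k}\,d\nu_N(\rho)}_{\mathrm{TV}} \le \frac{k(k-1)}{2N}.
\end{align*}
This follows from coupling, on one probability space, two ways of producing a random element of $X^k$ from $x_{1:N}\sim\mu_N$: drawing $k$ coordinate indices without replacement gives, by exchangeability, a sample of law $\mu^{(k)}$, whereas drawing them with replacement gives, conditionally on $x_{1:N}$, an independent identically distributed sample from $e_N(x_{1:N})$, hence a sample of law $\int\rho^{\otimes k}\,d\nu_N(\rho)$; the two procedures differ only on the event that some index is repeated, of probability at most $\binom{k}{2}/N$. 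In particular $\int\rho^{\otimes k}\,d\nu_N\to\mu^{(k)}$ in total variation for each $k$.

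The second step extracts a limit of $(\nu_N)_N$. When $X$ is compact, $\mathcal{P}(X)$ and hence $\mathcal{P}(\mathcal{P}(X))$ are compact metrizable and a weakly convergent subsequence exists at once; in general one checks tightness of $(\nu_N)_N$. Given $\epsilon>0$, choose compacts $K_j\subset X$ with $\mu^{(1)}(X\setminus K_j)\le\epsilon\,2^{-j}j^{-1}$ (possible for $X$ Polish, which covers $X=\mathbb{N}\times\Omega$). Since $\int\rho(X\setminus K_j)\,d\nu_N(\rho)=\mu_N^{(1)}(X\setminus K_j)=\mu^{(1)}(X\setminus K_j)$, Markov's inequality gives $\nu_N\bigl(\{\rho:\rho(X\setminus K_j)>1/j\}\bigr)\le\epsilon\,2^{-j}$, so $\nu_N$ charges with mass $\ge 1-\epsilon$ the set $\{\rho:\rho(X\setminus K_j)\le1/j\ \forall j\}$, which is relatively compact in $\mathcal{P}(X)$ by Prokhorov's theorem. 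Hence $(\nu_N)_N$ is tight, and Prokhorov again yields a subsequence $\nu_{N_\ell}\to P_\mu$ weakly in $\mathcal{P}(\mathcal{P}(X))$.

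Finally one passes to the limit: $\rho\mapsto\rho^{\otimes k}$ is continuous $\mathcal{P}(X)\to\mathcal{P}(X^k)$, so for $f\in C_b(X^k)$ the function $\rho\mapsto\int f\,d\rho^{\otimes k}$ is bounded continuous on $\mathcal{P}(X)$ and weak convergence gives $\int_{\mathcal{P}(X)}\bigl(\int f\,d\rho^{\otimes k}\bigr)\,d\nu_{N_\ell}(\rho)\to\int_{\mathcal{P}(X)}\bigl(\int f\,d\rho^{\otimes k}\bigr)\,dP_\mu(\rho)$; combining with step one and $\int f\,d\mu^{(k)}=\int f\,d\bigl(\int\rho^{\otimes k}d\nu_{N_\ell}\bigr)+O(k^2/N_\ell)$ gives $\int f\,d\mu^{(k)}=\int_{\mathcal{P}(X)}\bigl(\int f\,d\rho^{\otimes k}\bigr)\,dP_\mu(\rho)$ for all $f\in C_b(X^k)$, i.e. $\mu^{(k)}=\int_{\mathcal{P}(X)}\rho^{\otimes k}\,dP_\mu(\rho)$; since $k$ is arbitrary, we are done. (An alternative route is the backward-martingale argument based on convergence of the empirical measures, but the computation above is more self-contained.) The only point requiring genuine care is the tightness in step two, which is vacuous when $X$ is compact — and for the application one may even replace $X=\mathbb{N}\times\Omega$ by the compact set $\{0,\dots,q\}\times\Omega$, on which the relevant limit Husimi functions $M^{(k)}$ are supported, so there $P_\mu$ is automatically supported in $\mathcal{P}(\{0,\dots,q\}\times\Omega)$.
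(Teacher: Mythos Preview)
The paper does not give its own proof of this theorem; it simply cites \cite[Section~2.1]{Rougerie} for a proof via the Diaconis--Freedman theorem. Your argument is precisely that Diaconis--Freedman route (finite quantitative de Finetti via the with/without-replacement coupling, then tightness and passage to the limit), so you are in full agreement with the approach the paper points to. One small caveat: the theorem as stated in the paper only assumes $X$ is a metric space, whereas your tightness step needs $X$ Polish; you correctly flag this, and your closing remark that for the application $X$ may be taken to be the compact set $\{0,\dots,q\}\times\Omega$ is exactly how the paper uses the result.
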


    For a proof of this via the the Diaconis-Freedman theorem see \cite[Section 2.1.]{Rougerie} and for some further context one can look at \cite[Section 2.2.]{Rougerie20}.
    
    Recalling the definition of the semi-classical domain \cref{eq:sc domain}, we obtain:
    
    \begin{proposition}[title=Low Landau level filling of the limit factorised densities, label=DF]
        There exists $\mathcal{P}_M \in \mathcal{P}\prth{\mathcal{D}_{sc}}$ such that
        \begin{align*}
            \forall k\in \mathbb{N}^*, M^{(k)} = \intr_{\mathcal{D}_{sc}} m^{\otimes k} d\mathcal{P}_M(m) \yesnumber{PM De Finetti}
        \end{align*}
        Let $\mu$ be the push-forward measure of $\mathcal{P}_M$ by the application
        \begin{align*}
            \matrx{
                L^1\prth{\mathbb{N}\times\Omega} &\to &L^1(\Omega) \\
                m &\mapsto &m(q, \bullet)
            }
        \end{align*}
        then $\mu \in \mathcal{P}\prth{\mathcal{D}_{qLL}}$ and
        \begin{align*}
            &\rho_M^{(k)} = \intr_{\mathcal{D}_{qLL}} \prth{\dfrac{q}{L^2(q+r)} + \rho}^{\otimes k}  d\mu (\rho) \yesnumber{limit rhoM} \\
            &\mathcal{E}_{qLL}\sbra{\rho_M} =  \intr_{\mathcal{D}_{qLL}} \mathcal{E}_{qLL}\sbra{\dfrac{q}{L^2(q+r)} + \rho}  d\mu (\rho) = E_V^{q, r} +  E_w^{q, r} + \intr_{\mathcal{D}_{qLL}} \mathcal{E}_{qLL}\sbra{\rho} d\mu (\rho) \yesnumber{EqLL rhoM}
        \end{align*}
    \end{proposition}
    
    \begin{proof}
        From \cref{th:DeFinetti} applied to $M$ in \cref{prop:limit densities & H}, $\exists \mathcal{P}_M \in \mathcal{P}\prth{\mathcal{P}\prth{\mathbb{N}\times\Omega}}$ such that
        \begin{align*}
            \forall k\in \mathbb{N}^*, M^{(k)} = \intr_{\mathcal{P}\prth{\mathbb{N}\times\Omega}} m^{\otimes k} d\mathcal{P}_M(m) \yesnumber{pre Defin}
        \end{align*}
        Let $\varphi \in C_c^0(\mathbb{N}\times \Omega, \mathbb{R}_+), \varphi \neq 0, \epsilon > 0, k \in \mathbb{N}^*$, and
        \begin{align*}
            A_{\epsilon}(\varphi) \coloneq \sett{m \in \mathcal{P}\prth{\mathbb{N}\times \Omega}| \intr_{\mathbb{N}\times\Omega} \varphi dm \ge \dfrac{1+\epsilon}{L^2(q+r)}\intr_{\mathbb{N}\times\Omega} \varphi}
        \end{align*}
        If $m \in A_\epsilon(\varphi)$, then
        \begin{align*}
            1 \le \dfrac{L^2(q+r)}{(1+\epsilon)\norm{\varphi}_{L^1(\eta)}} \intr_{\mathbb{N}\times\Omega} \varphi dm\le \prth{\dfrac{L^2(q+r)}{(1+\epsilon)\norm{\varphi}_{L^1(\eta)}}\intr_{\mathbb{N}\times\Omega} \varphi dm}^k
        \end{align*}
        so with \cref{eq:limit H bound},
        \begin{align*}
            \mathcal{P}_M \prth{A_{\epsilon}(\varphi)} 
            =& \intr_{\mathcal{P}\prth{\mathbb{N}\times\Omega}} \mathbb{1}_{A_{\epsilon}(\varphi)} d\mathcal{P}_M
            \le \intr_{\mathcal{P}\prth{\mathbb{N}\times\Omega}} \prth{\dfrac{L^2(q+r)}{(1+\epsilon)\norm{\varphi}_{L^1(\eta)}} \intr_{\mathbb{N}\times\Omega} \varphi dm }^k d\mathcal{P}_M(m)\\
            =& \prth{\dfrac{L^2(q+r)}{(1+\epsilon)\norm{\varphi}_{L^1(\eta)}}}^k \intr_{\mathcal{P}\prth{\mathbb{N}\times\Omega}} \prth{ \intr_{\prth{\mathbb{N}\times\Omega}^k} \varphi^{\otimes k} dm^{\otimes k} } d\mathcal{P}_M(m) \\
            =& \prth{\dfrac{L^2(q+r)}{(1+\epsilon)\norm{\varphi}_{L^1(\eta)}}}^k \intr_{\prth{\mathbb{N}\times\Omega}^k} \varphi^{\otimes k} dM^{(k)}
            \le \prth{\dfrac{1}{1 + \epsilon}}^{k} \limit\displaylimits_{k\to\infty} 0
        \end{align*}
        we proved that $\mathcal{P}_M\prth{A_\epsilon(\varphi)} = 0$ and therefore
        \begin{align*}
            \mathcal{P}_M\prth{\bigcap_{\substack{\varphi \in C_c^0\prth{\mathbb{N}\times \Omega, \mathbb{R}_+}\\ \epsilon > 0}}\mathcal{P}\prth{\mathbb{N}\times \Omega} \backslash A_\epsilon(\varphi)} 
            = 1 - \mathcal{P}_M\prth{\bigcup_{\substack{\varphi \in C_c^0\prth{\mathbb{N}\times \Omega, \mathbb{R}_+}\\ \epsilon > 0}}A_\epsilon(\varphi)} 
            = 1
        \end{align*}
        therefore for $\mathcal{P}_m$ almost every $m \in \mathcal{P}\prth{\mathbb{N}\times \Omega}$,
        \begin{align*}
            \forall \varphi \in C_c^0\prth{\mathbb{N}\times \Omega, \mathbb{R}_+}, \epsilon > 0, \intr_{\mathbb{N}\times\Omega} \varphi dm < \dfrac{1+\epsilon}{L^2(q+r)}\intr_{\mathbb{N}\times\Omega} \varphi \yesnumber{bound PM ae}
        \end{align*}
        So for $\mathcal{P}_m$ almost every $m \in \mathcal{P}\prth{\mathbb{N}\times \Omega}$, $m$ is the density of a probability measure thus a positive function such that $\norm{m}_{L^1} =1$ and by \cref{eq:bound PM ae}, $m \in L^\infty(\mathbb{N}\times\Omega)$ and
        \begin{align*}
            m \le \dfrac{1}{L^2(q+r)} \yesnumber{bound m ae}
        \end{align*}
        We have shown $\mathcal{P}_M \in \mathcal{P}\prth{\mathcal{D}_{sc}}$, therefore \cref{eq:pre Defin} implies \cref{eq:PM De Finetti}.
        
        Moreover if $n<q$ by \cref{eq:M1},
        \begin{align*}
            \intr_\Omega \dfrac{1}{L^2(q+r)}dx = \intr_{\mathbb{N}\times\Omega} \mathbb{1}_{\{n\}\times \Omega} dM^{(1)} = \intr_{\mathcal{P}\prth{\mathbb{N}\times\Omega}} \prth{\intr_\Omega m(n,x)dx}d\mathcal{P}_M(m)
        \end{align*}
        so
        \begin{align*}
            \intr_{\mathcal{P}\prth{\mathbb{N}\times\Omega}} \prth{\intr_\Omega \prth{\dfrac{1}{L^2(q+r)} - m(n,x)} dx }d\mathcal{P}_M(m) = 0
        \end{align*}
        By \cref{eq:bound m ae} the integrand is positive thus null $\mathcal{P}_M$ almost everywhere, we conclude that for $\mathcal{P}_M$ almost every $m$
        \begin{align*}
            n < q \implies m(n, \bullet) = \dfrac{1}{L^2(q+r)} \yesnumber{m n<q}
        \end{align*}
        If $ n>q$ by \cref{eq:M1},
        \begin{align*}
            0 = \intr_{\mathbb{N}\times\Omega} \mathbb{1}_{\{n\}\times \Omega} dM^{(1)} = \intr_{\mathcal{P}\prth{\mathbb{N}\times\Omega}} \prth{\intr_\Omega m(n,x)dx}d\mathcal{P}_M(m)
        \end{align*}
        Once again by \cref{eq:bound m ae} the right integrand is positive and thus null so for $\mathcal{P}_M$ almost every $m$
        \begin{align*}
            n>q \implies m(n, \bullet) = 0 \yesnumber{m n>q}
        \end{align*}
        Finally if $n = q$, since $m \in \mathcal{P}\prth{\mathbb{N}\times \Omega}$ we conclude using \cref{eq:m n>q} and \cref{eq:m n<q}: for $\mathcal{P}_M$ almost everywhere $m$ 
        \begin{align*}
            \intr_\Omega m(q, \bullet) = \intr_{\mathbb{N}\times\Omega} m - \sum_{n<q} \intr_\Omega m(n, \bullet)  - \sum_{n>q} \intr_\Omega m(n, \bullet) = 1 - \dfrac{q}{q+r} = \dfrac{r}{q+r} \yesnumber{m n=q}
        \end{align*}
        Gathering \cref{eq:bound m ae}, \cref{eq:m n<q}, \cref{eq:m n>q} and \cref{eq:m n=q}, we now know that for $\mathcal{P}_M$ almost every $m$  we have $m(q, \bullet) \in \mathcal{D}_{qLL}$. This means that $\mu \in \mathcal{P}\prth{\mathcal{D}_{qLL}}$.
        
        Finally we compute
        \begin{align*}
            \rho_M^{(k)} 
            =& \sum_{n_{1:k}}M^{(k)}(n_{1:k};\bullet) 
            = \intr_{\mathcal{D}_{sc}} \sum_{n_{1:k}} m^{\otimes k}(n_{1:k}; \bullet)d\mathcal{P}_M(m) 
            =  \intr_{\mathcal{D}_{sc}} \prth{\sum_{n\in\mathbb{N}} m(n; \bullet) }^{\otimes k}d\mathcal{P}_M(m) \\
            =& \intr_{\mathcal{D}_{sc}} \prth{\dfrac{q}{L^2(q+r)} + m(q; \bullet) }^{\otimes k}d\mathcal{P}_M(m) 
            = \intr_{\mathcal{D}_{qLL}} \prth{\dfrac{q}{L^2(q+r)} + \rho}^{\otimes k}  d\mu (\rho) \\
            =& \intr_{\mathcal{D}_{qLL}} \prth{\dfrac{q}{L^2(q+r)} + \rho}^{\otimes k}  d\mu (\rho)
        \end{align*}
        and
        \begin{align*}
            \mathcal{E}_{qLL}\sbra{\rho_M} 
            =& \intr_{\mathcal{D}_{qLL}} \mathcal{E}_{qLL}\sbra{\dfrac{q}{L^2(q+r)} + \rho}  d\mu (\rho) 
            =& \intr_{\mathcal{D}_{qLL}} \prth{ E_V^{q, r} +  E_w^{q, r} + \mathcal{E}_{qLL}\sbra{\rho}} d\mu (\rho) \\
            =&  E_V^{q, r} +  E_w^{q, r} + \intr_{\mathcal{D}_{qLL}} \mathcal{E}_{qLL}\sbra{\rho} d\mu (\rho)
        \end{align*}
    \end{proof}

    Now we are ready for the proof of the lower bound.
    
    \begin{proof}[prop:ELB]
        Let $\rho \in \mathcal{D}_{qLL}$, starting from \cref{eq:expression for lowerbound}, using inequality \cref{eq:positivity kinetic diff} and \cref{eq:sat LLL} we have
        \begin{align*}
            \mathcal{E}_{sc, \hbar b}\sbra{m_{\psi_N}} 
            &\ge\mathcal{E}_{sc, \hbar b}\sbra{m_\rho} + \mathcal{E}_{qLL}\sbra{\rho_{m_{\psi_N}}} - \mathcal{E}_{qLL}\sbra{\rho_{m_\rho}} + \mathcal{O}\prth{\hbar b l_b} + \hbar b \mathcal{O}\prth{1 - \dfrac{d(q+r)}{N}} \\
            =& \hbar b E_{q, r} + \mathcal{E}_{qLL}\sbra{\rho_{m_{\psi_N}}} + \mathcal{O}\prth{\hbar b l_b} + \hbar b \mathcal{O}\prth{1 - \dfrac{d(q+r)}{N}} 
        \end{align*}
         We conclude with \cref{eq:to rhoM} and \cref{eq:EqLL rhoM} and that
        \begin{align*}
            \mathcal{E}_{sc, \hbar b}\sbra{m_{\psi_N}} 
            \ge&  \hbar b E_{q, r} + \mathcal{E}_{qLL}\sbra{\rho_{m_{\psi_N}}} 
                + \mathcal{O}\prth{\hbar b l_b} + \hbar b \mathcal{O}\prth{1 - \dfrac{d(q+r)}{N}} \\
            =& \hbar b E_{q, r} + \mathcal{E}_{qLL}\sbra{\rho_M} + o(1)
            = \hbar b E^{q,r} + E_{V}^{q,r} + E_{w}^{q,r} + \intr_{\mathcal{D}_{qLL}}\mathcal{E}_{qLL}\sbra{\rho}d\mu(\rho) + o(1) \yesnumber{supp nu}\\
            \ge& \hbar b E^{q,r} + E_{V}^{q,r} + E_{w}^{q,r} + \mathcal{E}_{qLL}^0 + o(1)
        \end{align*}
    \end{proof}
    
\subsection{Conclusion}

\begin{proof}[th:main]
    Let $(\psi_N)_N$ be a sequence of minimizers of \cref{eq:E_N^0}, by \cref{eq:SC approx}
    \begin{align*}
        \dfrac{E(N)}{N} = \dfrac{\bk{\psi_N}{\mathscr{H}_N\psi_N}}{N} = \mathcal{E}_{sc, \hbar b}\sbra{m_{\psi_N}} + o(1)
    \end{align*}
    Since the lower bound is true up to a subsequence for which the have \cref{prop:limit densities & H}, for every adherence value of $E(N)/N$ we conclude by gathering \cref{prop:EUB} and \cref{prop:ELB}.
\end{proof}

\begin{proof}[th:density conv]
    With \cref{eq:limit rhoM} and \cref{eq:Radon m} we get
    \begin{align*}
        \rho_{m_{\psi_N}}^{(k)} &\wslim\displaylimits_{N\to\infty} \intr_{\mathcal{D}_{qLL}} \prth{\dfrac{q}{L^2(q+r)} + \rho}^{\otimes k}  d\mu (\rho)
    \end{align*}
    Let $\varphi \in C^\infty\prth{\Omega^k} $ with \cref{eq:husimi to density},
    \begin{align*}
        \intr_{\Omega^k} \varphi \prth{\rho_{m_{\psi_N}}^{(k)} - \rho_{\psi_N}^{(k)}} 
        = \intr_{\Omega^k} \varphi
        \prth{(g^2_\lambda)^{\otimes k} * \rho_{\psi_N}^{(k)} - \rho_{\psi_N}^{(k)}}
        = \intr_{\Omega^k} \rho_{\psi_N}^{(k)}
        \prth{(g^2_\lambda)^{\otimes k} * \varphi - \varphi} \limit\displaylimits_{N\to\infty} 0 \yesnumber{co limits}
    \end{align*}
    by Hölder's inequality since
    \begin{align*}
        \norm{\rho_{\psi_N}^{(k)}}_{L^1} = 1    
    \end{align*}
    and $\varphi$ is Lipschitz. Up to a subsequence $\rho_{\psi_N}^{(k)}$ converges $\forall k \in \mathbb{N}^*$ in the sense of Radon measures. But with \cref{eq:co limits} this limit coincides with the one of $\rho_{m_{\psi_N}}^{(k)}$ so we obtain \cref{eq:th conv rho}.
    
    Moreover by \cref{eq:supp nu} and \cref{prop:EUB}
    \begin{align*}
        \mathcal{E}_{qLL}^0 \ge \intr_{\mathcal{D}_{qLL}}\mathcal{E}_{qLL}\sbra{\rho}d\mu(\rho) + o(1)
    \end{align*}
    thus $\mu$ only gives mass to minimizers of $\mathcal{E}_{qLL}$.
\end{proof}
    \setcounter{equation}{0}
    \renewcommand\yesnumber[1]{\refstepcounter{equation}\tag{A\arabic{equation}}\label{eq:#1}}
    \section{Appendix: technical proofs} \label{sec:appendix}

    \begin{proof}[prop:poisson wave func]
        We start from \cref{eq:psi_nl} expressed in terms of $h_n$:
        \begin{align*}
            \psi_{nl}(z)= \dfrac{c_n}{\sqrt{L l_b}} e^{2i\pi l\frac{x}{L}} \sum_{k\in\mathbb{Z}} h_n \left(\dfrac{1}{l_b}\left[y+kL+l\dfrac{L}{d}\right]\right) e^{2i\pi k d \frac{x}{L}}
        \end{align*}
        Define
        \begin{align*}
            g(u) \coloneq h_n \left(\dfrac{1}{l_b}\left[y+uL+l\dfrac{L}{d}\right]\right)e^{2i\pi d u \frac{x}{L}}
        \end{align*}
        so we have
        \begin{align*}
            \psi_{nl}(z)= \dfrac{c_n}{\sqrt{L l_b}} e^{2i\pi l\frac{x}{L}} \sum_{k\in\mathbb{Z}} g(k)  \yesnumber{phinl g}
        \end{align*}
        in order to apply the Poisson summation formula to $g$. To do so, we compute $\hat{g}$ with a change of variable and equations \cref{eq:flux quant} and \cref{eq:HO wave func transf}:
        \begin{align*}
            \hat{g}(\nu) &= \dfrac{1}{\sqrt{2\pi}}\intr_\mathbb{R} h_n \left(\dfrac{1}{l_b}\left[y+uL+l\dfrac{L}{d}\right]\right)e^{-iu\left(\nu - 2\pi d \frac{x}{L}\right)}du \\
            &= \dfrac{l_b}{L\sqrt{2\pi}}e^{i\prth{\frac{y}{L}+\frac{l}{d}}\left(\nu - 2\pi d \frac{x}{L}\right)}\intr_\mathbb{R} h_n(u)e^{-i\frac{u l_b}{L}\left(\nu - 2\pi d \frac{x}{L}\right)}du \\
            &= \dfrac{l_b}{L\sqrt{2\pi}}e^{i\prth{\frac{y}{L}+\frac{l}{d}}\left(\nu - 2\pi d \frac{x}{L}\right)} \widehat{h_n}\prth{\frac{l_b}{L}\nu - \frac{x}{l_b}}
            =\dfrac{(-i)^n l_b}{L}e^{i\prth{\frac{y}{L}+\frac{l}{d}}\left(\nu - 2\pi d \frac{x}{L}\right)} h_n\prth{\frac{l_b}{L}\nu - \frac{x}{l_b}}
        \end{align*}
        so by using \cref{eq:flux quant} again:
        \begin{align*}
            \hat{g}(2\pi k) &=\dfrac{(-i)^n l_b}{L} e^{i\prth{\frac{y}{L}+\frac{l}{d}}\left(2\pi k - 2\pi d \frac{x}{L}\right)} h_n\left(2\pi k\dfrac{l_b}{L} - \dfrac{x}{l_b} \right) \\
            &=\dfrac{(-i)^n l_b}{L} e^{-i\frac{xy}{l_b^2} - 2i\pi l \frac{x}{L}} e^{2i\pi k \left(\frac{y}{L}+\frac{l}{d}\right)} h_n\prth{\dfrac{1}{l_b}\sbra{k\dfrac{L}{d} - x}}
        \end{align*}
        To conclude the computation we insert this after applying the Poisson summation formula \cref{eq:poisson formula} to \cref{eq:phinl g}:
        \begin{align*}
            \psi_{nl}(z)&=  \dfrac{c_n}{\sqrt{L l_b}} e^{2i\pi l\frac{x}{L}} \sqrt{2\pi}\sum_{k\in \mathbb{Z}}\hat{g}\left(2\pi k\right) \\
            &= \dfrac{c_n}{\sqrt{L l_b}}\cdot\dfrac{\sqrt{2\pi}(-i)^n l_b}{L}e^{-i\frac{xy}{l_b^2} }  \sum_{k\in \mathbb{Z}} H_n \prth{\dfrac{1}{l_b}\sbra{k\dfrac{L}{d} - x}} e^{2i\pi k \left(\frac{y}{L}+\frac{l}{d}\right) -\frac{1}{2l_b^2}\left(k\frac{L}{d} - x \right)^2} \\
            &= \widetilde{c}_n\dfrac{\sqrt{l_b}}{L^{\frac{3}{2}}}e^{-i\frac{xy}{l_b^2}}  \sum_{k\in \mathbb{Z}} H_n \left(\dfrac{1}{l_b}\left[x + k\dfrac{L}{d}\right]\right) e^{-2i\pi k \left(\frac{y}{L}+\frac{l}{d}\right) -\frac{1}{2l_b^2}\left(x + k\frac{L}{d}\right)^2}
        \end{align*}
        by changing the sum index $k$ to $-k$, using the parity of Hermite polynomials and the relation
        \begin{align*}
            c_n \sqrt{2\pi}(-i)^n = \widetilde{c}_n
        \end{align*}
    \end{proof}

    \begin{lemma}[label=series conv]
        Let $m \in \mathbb{N}, c > 0$, the following series are uniformly bounded in $\alpha, a, b$:
        \begin{align*}
            &\forall  \alpha \in \mathbb{R}_+, a, b\in[-1,1], \alpha\sum_{q \in \mathbb{Z}^*}\abs{a + b + \alpha q}^m e^{-c\prth{a + \alpha q}^2} \le C(c, m) \yesnumber{convlem q} \\
            &\forall \alpha \in [0, 1], a \in \mathbb{R},b \in [-1, 1], \alpha\sum_{k \in \mathbb{Z}}\abs{a + b + \alpha k}^m e^{-c\prth{a + \alpha k}^2} \le C(c, m) \yesnumber{convlem k}
        \end{align*}
        Moreover, if $P_n, Q_n$ are complex polynomials of degree $n$, the function
        \begin{align*}
            \Xi(z) \coloneq \sum_{k, q\in\mathbb{Z}} P_n\prth{\dfrac{1}{l_b}\sbra{x+k\dfrac{L}{d}}} Q_n\prth{\dfrac{1}{l_b}\sbra{x+qL+k\dfrac{L}{d}}}e^{2i\pi qd \frac{y}{L} -\frac{1}{2l_b^2}\prth{x+k\frac{L}{d}}^2 -\frac{1}{2l_b^2}\prth{x+ qL + k\frac{L}{d}}^2} 
        \end{align*}
        is of order $\dfrac{1}{l_b}$ and can be uniformly approximated as
        \begin{align*}
            \norm{\Xi(z) - \dfrac{L}{2\pi l_b}\intr_\mathbb{R}P_n(u)Q_n(u)e^{-u^2}du}_{L^\infty} \le C(n) \yesnumber{kernel approx}
        \end{align*}
    \end{lemma}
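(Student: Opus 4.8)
The plan is to establish the three assertions in order: the series bounds \cref{eq:convlem q}, \cref{eq:convlem k}, and then \cref{eq:kernel approx}, which rests on them.

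\emph{The series bounds.} Since $\abs{b}\le 1$ one has $\abs{a+b+\alpha q}^{m}\le C_{m}\bigl(1+\abs{a+\alpha q}^{m}\bigr)$, so both sums are dominated by $\alpha\sum_{q}\phi(a+\alpha q)$ with $\phi(t)\coloneq C_{m}(1+\abs{t}^{m})e^{-ct^{2}}$: an even, bounded, integrable function which, outside a compact interval $[-t_{0},t_{0}]$ with $t_{0}=t_{0}(c,m)$, is decreasing in $\abs{t}$. For any step $\alpha>0$ and any offset $s\in\mathbb{R}$, comparing each term with $\abs{s+\alpha k}>t_{0}$ to the integral of $\phi$ over the adjacent interval of length $\alpha$, and bounding the at most $1+2t_{0}/\alpha$ terms with $\abs{s+\alpha k}\le t_{0}$ by $\phi(t_{0})$, gives $\alpha\sum_{k}\phi(s+\alpha k)\le C(c,m)(1+\alpha)$. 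For $\alpha\le 1$ this already yields \cref{eq:convlem k} and the case $\alpha\le 1$ of \cref{eq:convlem q}. For \cref{eq:convlem q} with $\alpha>1$ the factor $(1+\alpha)$ must be removed, and here I would use $q\ne 0$: then $\abs{a+\alpha q}\ge\alpha\abs{q}-1$, so apart from the at most two terms with $\abs{q}=1$, $\alpha<2$ (bounded by inspection) one has $\alpha\abs{q}\ge 2$, hence $\abs{a+\alpha q}\ge\tfrac12\alpha\abs{q}$, and each term is at most $C_{m}\,\alpha(\alpha\abs{q})^{m}e^{-c(\alpha\abs{q})^{2}/4}\le C_{m}(\alpha\abs{q})^{m+1}e^{-c(\alpha\abs{q})^{2}/4}$ because $\alpha\le\alpha\abs{q}$; the extra power of $t$ in $\psi(t)=t^{m+1}e^{-ct^{2}/4}$ makes $\sum_{q\ge 1}\psi(\alpha q)$ bounded uniformly in $\alpha\ge 1$ by the same integral comparison. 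This is the only genuinely delicate point in the lemma.

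\emph{Setup for \cref{eq:kernel approx}.} Write $u_{k}\coloneq\tfrac1{l_{b}}\bigl(x+k\tfrac{L}{d}\bigr)$, so consecutive $u_{k}$ are spaced by $h\coloneq u_{k+1}-u_{k}=\tfrac{L}{d\,l_{b}}$, and observe the identity $\tfrac1h=\tfrac{d\,l_{b}}{L}=\tfrac{L}{2\pi l_{b}}$, which follows from \cref{eq:flux quant} and already displays the claimed prefactor; note $h\to 0$. Split $\Xi=\Xi_{0}+\sum_{q\ne 0}\Xi_{q}$ according to the value of $q$, where $\Xi_{0}(z)=\sum_{k}P_{n}(u_{k})Q_{n}(u_{k})e^{-u_{k}^{2}}$. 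For $q\ne 0$, using $x+qL+k\tfrac{L}{d}=l_{b}\bigl(u_{k}+q\tfrac{L}{l_{b}}\bigr)$ and $\bigl(\tfrac{L}{l_{b}}\bigr)^{2}=2\pi d$ (again \cref{eq:flux quant}), the inequality $a^{2}+b^{2}\ge\tfrac12(a-b)^{2}$ bounds the Gaussian exponent by $-\tfrac14 u_{k}^{2}-\tfrac{\pi d}{4}q^{2}$, while the polynomial factor is at most $C(n)(1+\abs{u_{k}})^{2n}\bigl(1+\abs{q}^{n}d^{n/2}\bigr)$; summing over $k$ with \cref{eq:convlem k} (legitimate since $h\le 1$ eventually) costs a factor $\tfrac1h=\tfrac{L}{2\pi l_{b}}$, and $\sum_{q\ne 0}\bigl(1+\abs{q}^{n}d^{n/2}\bigr)e^{-\pi d q^{2}/4}\le C(n)d^{n}e^{-\pi d/4}$, so $\norm{\sum_{q\ne 0}\Xi_{q}}_{L^{\infty}}\le C(n)\,l_{b}^{-1}d^{n}e^{-\pi d/4}\to 0$; in particular it is $\le C(n)$ eventually.

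\emph{The term $q=0$ and conclusion.} Here $\Xi_{0}(z)=\sum_{k}\Phi(u_{k})$ with $\Phi(u)\coloneq P_{n}(u)Q_{n}(u)e^{-u^{2}}$, a Schwartz function, so $\Xi_{0}$ is a shifted Riemann sum of $\Phi$ with step $h$. The standard bound $\abs{h\sum_{k}\Phi(u_{0}+kh)-\intr_{\mathbb{R}}\Phi}\le h\norm{\Phi'}_{L^{1}}$, uniform in the offset $u_{0}$ (Poisson summation in fact improves the error to $\le C(n)e^{-c/h^{2}}$), gives $\Xi_{0}(z)=\tfrac1h\intr_{\mathbb{R}}\Phi+\mathcal{O}\bigl(\norm{\Phi'}_{L^{1}}\bigr)=\tfrac{L}{2\pi l_{b}}\intr_{\mathbb{R}}P_{n}Q_{n}e^{-u^{2}}+\mathcal{O}(C(n))$. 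Adding the $q\ne 0$ estimate yields \cref{eq:kernel approx}; and since $l_{b}^{-1}\to\infty$ the additive $\mathcal{O}(C(n))$ is dominated by the main term (or is itself $\le C(n)\,l_{b}^{-1}$ when the integral vanishes), so $\abs{\Xi}\le C(n)\,l_{b}^{-1}$, i.e. $\Xi$ is of order $1/l_{b}$. Besides the $\alpha\gg 1$ point above, the only recurring check is that the $l_{b}^{-1}$ factors produced by the $k$-sums are beaten by the $e^{-cd}$-type decay of the $q\ne 0$ tail and of the Riemann remainder, which holds because $d\to\infty$.
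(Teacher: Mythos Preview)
Your argument is correct and follows the same overall architecture as the paper: establish the two elementary series bounds, split $\Xi$ into the $q=0$ Riemann-sum term and the $q\ne 0$ tail, and control each piece using those bounds together with the Gaussian decay. The differences are in execution rather than strategy. For \cref{eq:convlem q}--\cref{eq:convlem k} the paper compares directly with the integral after the observation $q\le 2(q-1)$ for $q\ge 2$ (handling $q=\pm 1$ separately) and reduces \cref{eq:convlem k} to \cref{eq:convlem q} via the $\alpha$-periodicity in $a$; your route via a single decreasing majorant $\phi$ and a separate $\alpha>1$ argument is equally valid. For the $q\ne 0$ tail the paper uses Young's inequality with a parameter $\epsilon=3/2$ to decouple the exponent, whereas your quadratic split (applying $a^2+b^2\ge\tfrac12(a-b)^2$ to half the exponent while retaining $\tfrac14 u_k^2$) achieves the same thing more directly. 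For the $q=0$ term your one-line bound $\bigl|h\sum_k\Phi(u_0+kh)-\int\Phi\bigr|\le h\|\Phi'\|_{L^1}$ is in fact cleaner than the paper's monomial-by-monomial mean-value argument, and it already carries the uniformity in $x$ without invoking the $L/d$-periodicity of $\Xi$ that the paper uses.
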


    \begin{proof}
        Let $\alpha \in \mathbb{R}_+, a, b\in[-1,1]$. If $q \ge 2$ then $q \le 2(q-1)$ so
        \begin{align*}
            \forall u \in [q-1, q), \abs{a+b+\alpha q}^m e^{-c\prth{a + \alpha q}^2} \le (2 +2\alpha u)^m e^{-c(a+\alpha u)^2}
        \end{align*}
        and
        \begin{align*}
            \alpha\sum_{q \ge 2}\abs{a + b + \alpha q}^m e^{-c\prth{a + \alpha q}^2} 
            \le \intr_1^\infty (2+2\alpha u)^m e^{-c(a+\alpha u)^2} \le \intr_\mathbb{R} (2+2u)^m e^{-c(a+u)^2}du \le C(c, m)
        \end{align*}
        the term for $q=1$ is
        \begin{align*}
            \alpha \abs{a+b+\alpha}^m e^{-c(a+\alpha)^2} \le C
        \end{align*}
        for the negative $q$, we see that
        \begin{align*}
            \alpha \sum_{q \le -1}\abs{a + b + \alpha q}^m e^{-c\prth{a + \alpha q}^2} &= \alpha \sum_{q \ge 1}\abs{-a - b + \alpha q}^m e^{-c\prth{-a + \alpha q}^2} \le C(c, m)
        \end{align*}
        because $-a, -b \in [-1, 1]$.
        
        For \cref{eq:convlem k}, let $\alpha \in [0, 1], a \in \mathbb{R},b \in [-1, 1]$. We see that the series is $\alpha$-periodic in $a$ so we can assume $0\le a \le 1$ and use \cref{eq:convlem q} and for $k=0$:
        \begin{align*}
            \alpha \abs{a+b}^me^{-ca^2} \le 2^m
        \end{align*}
        Now we use this result to prove the approximation of $\Xi$. Due to the Gaussian factor, all terms for which $q \neq 0$ have a fair chance to vanish when $l_b \to 0$. Thus, we focus first on the term indexed by $q=0$. To simplify notation we introduce
        \begin{align*}
            &u_{b, x} \coloneq \dfrac{1}{l_b}\prth{x+ u \dfrac{L}{d}} = \dfrac{x}{l_b} + 2\pi u \dfrac{l_b}{L} \yesnumber{ubx} \\
            &\xi(u) \coloneq P_n\prth{u_{b, x}}Q_n\prth{u_{b, x}}e^{-{u_{b, x}}^2} \\
            &\Xi_{|q\neq 0}(z) \coloneq \Xi(z) - \sum_{k\in \mathbb{Z}} \xi(k)
        \end{align*}
        so
        \begin{align*}
            \sum_{k\in \mathbb{Z}} \xi(k)  = \sum_{k\in\mathbb{Z}} P_n\prth{\dfrac{1}{l_b}\sbra{x+k\dfrac{L}{d}}} Q_n\prth{\dfrac{1}{l_b}\sbra{x+k\dfrac{L}{d}}}e^{-\frac{1}{l_b^2}\prth{x+k\frac{L}{d}}^2} 
        \end{align*}
        is the term for $q=0$ and $\Xi_{|q\neq 0}(z)$ contains the other terms.
        
        Note that $\Xi$ is $L/d$-periodic in $x$ so we can choose $x \in\sbra{0,L/d}$ and
        \begin{align*}
            \dfrac{x}{l_b} \le 2\pi \dfrac{l_b}{L}\limit\displaylimits_{N\to\infty} 0 \yesnumber{x bound}
        \end{align*}
        For $q=0$, if we replace the sum in $k$ by the associated integral we obtain:
        \begin{align*}
            \intr_\mathbb{R} \xi(u)du = \dfrac{L}{2\pi l_b}\intr_\mathbb{R}P_n(u)Q_n(u)e^{-u^2}du
        \end{align*}
        which is the approximation in \cref{eq:kernel approx}. For the convergence of the Riemann sum, we compute the derivative of the integrand.
        There exists $R_n$ a polynomial of degree $2n+1$ such that
        \begin{align*}
            \xi'(u) &= 2\pi \dfrac{l_b}{L} R_n\prth{ u_{b, x}}e^{- {u_{b, x}}^2}
        \end{align*}
        Now, use the mean value theorem:
        \begin{align*}
            \abs{\sum_{k\in \mathbb{Z}} \xi(k) - \intr_\mathbb{R}\xi(u)du} \le \sum_{k\in \mathbb{Z}} \intr_k^{k+1} \abs{\xi(k) - \xi(u)}du
            \le 2\pi\dfrac{l_b}{L} \sum_{k\in\mathbb{Z}}  \supm\displaylimits_{k\le u\le k+1} \abs{R_{n}\prth{ u_{b, x}}}e^{- {u_{b, x}}^2} \yesnumber{mean value th}
        \end{align*}
        To control this we only need to control monomials. If $k \le u \le k+1$,
        \begin{align*}
            &\abs{ u_{b, x}}^m e^{-{u_{b, x}}^2} \\
            \le&\abs{ k_{b, x}}^m e^{-{k_{b, x}}^2} + \abs{ \prth{k+1}_{b, x}}^m e^{-{\prth{k+1}_{b, x}}^2} + \abs{ k_{b, x}}^m e^{-{\prth{k+1}_{b, x}}^2} + \abs{ \prth{k+1}_{b, x}}^m e^{-{k_{b, x}}^2} \\
            =& \abs{ k_{b, x}}^m e^{-{k_{b, x}}^2} + \abs{ \prth{k+1}_{b, x}}^m e^{-{\prth{k+1}_{b, x}}^2} + \abs{ \prth{k+1}_{b, x - \frac{L}{d}}}^m e^{-{\prth{k+1}_{b, x}}^2} + \abs{ k_{b, x + \frac{L}{d}}}^m e^{-{k_{b, x}}^2}
        \end{align*}
        Thus after some change of indices,
        \begin{align*}
            2\pi\dfrac{l_b}{L} \sum_{k\in\mathbb{Z}}  \supm\displaylimits_{k\le u\le k+1} &\abs{ u_{b, x}}^m e^{-{u_{b, x}}^2} \le 2\pi \dfrac{l_b}{L} \sum_{k\in\mathbb{Z}} \prth{2\abs{ k_{b, x}}^m + \abs{k_{b, x - \frac{L}{d}}}^m + \abs{k_{b, x + \frac{L}{d}}}^m } e^{-{k_{b, x}}^2}
        \end{align*}
        Using  \cref{eq:convlem k} with $\alpha = 2\pi \dfrac{l_b}{L} \to 0, a =\dfrac{x}{l_b}, b \in \sett{0, 2\pi \dfrac{l_b}{L}, -2\pi \dfrac{l_b}{L}}, c=1$:
        \begin{align*}
            \abs{\sum_{k\in \mathbb{Z}} \xi(k) - \intr_\mathbb{R}\xi(u)du} \le C(n)
        \end{align*}
        We next control $\Xi_{|q\neq 0}$. Let $\epsilon > 0$, with Young's inequality:
        \begin{align*}
            -\prth{\dfrac{x}{l_b} + 2\pi k \dfrac{l_b}{L}}\cdot q\dfrac{L}{l_b} \le \dfrac{\epsilon}{2}\prth{\dfrac{x}{l_b} + 2\pi k \dfrac{l_b}{L}}^2 + \dfrac{1}{2\epsilon}\prth{q\dfrac{L}{l_b}}^2
        \end{align*}
        so
        \begin{align*}
            e^{-\frac{1}{2}\prth{\frac{x}{l_b} + 2\pi k \frac{l_b}{L}}^2 -\frac{1}{2} \prth{\frac{x}{l_b} + 2\pi k \frac{l_b}{L} + q\frac{L}{l_b}}^2} \le e^{-\prth{1-\frac{\epsilon}{2}}\prth{\frac{x}{l_b} + 2\pi k \frac{l_b}{L}}^2 - \prth{\frac{1}{2} - \frac{1}{2\epsilon}}\prth{q\frac{L}{l_b}}^2}
        \end{align*}
        We take $\epsilon = 3/2$. As in \cref{eq:mean value th}, we need to deal with monomial terms of the form
        \begin{align*}
            \sum_{q\neq0, k} \abs{\frac{x}{l_b} + 2\pi k \frac{l_b}{L}}^m \abs{q\frac{L}{l_b}}^{\widetilde{m}} e^{-\frac{1}{4}\prth{\frac{x}{l_b} + 2\pi k \frac{l_b}{L}}^2 -\frac{1}{6}\prth{q\frac{L}{l_b}}^2}
        \end{align*}
        by using
        \begin{itemize}
            \item \cref{eq:convlem q} for the sum in $q$ with $\alpha = \dfrac{L}{l_b}, a=0, b=0, c=\dfrac{1}{6}$
            \item \cref{eq:convlem k} for the sum in $k$ with $\alpha = 2\pi \dfrac{l_b}{L} \to 0, a=\dfrac{x}{l_b} \to 0, b=0, c=\dfrac{1}{4}$
        \end{itemize}
        We conclude that
        \begin{align*}
            \abs{\Xi_{|q\neq 0}} \le C(n) \dfrac{L}{l_b}\cdot\dfrac{l_b}{L} = C(n)
        \end{align*}
    \end{proof}
    
    \begin{proof}[prop:proj conv]
        We start from \cref{eq:Pi_n}:
        \begin{align*}
            \Pi_n(z, z) &= \dfrac{1}{\norm{h_n}_{L^2}^2 Ll_b} \sum_{k, q \in \mathbb{Z}} h_n\prth{\dfrac{1}{l_b}\sbra{x+k\dfrac{L}{d}}} h_n\prth{\dfrac{1}{l_b}\sbra{x+qL+k\dfrac{L}{d}}}e^{2i\pi qd \frac{y}{L}}
        \end{align*}
        We apply \cref{lem:series conv} and thus compute 
        \begin{align*}
                \dfrac{1}{\norm{h_n}_{L^2}^2 Ll_b} \intr_\mathbb{R} h_n\prth{\dfrac{x}{l_b} + 2\pi u\dfrac{l_b}{L}}^2 du = \dfrac{1}{Ll_b}\cdot \dfrac{L}{2\pi l_b} = \dfrac{1}{2\pi l_b^2}
        \end{align*}
        and obtain \cref{eq:kernel approx 1}. Starting again from \cref{eq:Pi_n} and using notation \cref{eq:ubx}, we compute in Landau gauge
        \begin{align*}
            &(\mathscr{P}_{\hbar, b}\Pi_n)(x, y)\\
            =&\prth{\matrx{i\hbar \partial_{x_1} - bx_2 \\ i\hbar\partial_{x_2}}} \dfrac{1}{\norm{h_n}_{L^2}^2 Ll_b} e^{i\frac{y_1y_2 - x_1x_2}{l_b^2}}\sum_{k, q \in \mathbb{Z}} h_n \prth{k_{b, x_1}} h_n \prth{k_{b, y_1 + qL}} \cdot e^{2i\pi k\frac{y_2 - x_2}{L} + 2i\pi d q \frac{y_2}{L}} \\
            =& \dfrac{1}{\norm{h_n}_{L^2}^2 Ll_b} e^{i\frac{y_1y_2 - x_1x_2}{l_b^2}}\sum_{k, q \in \mathbb{Z}} \dfrac{\hbar}{l_b} \prth{\matrx{i h_n'\prth{k_{b, x_1}}\\  k_{b, x_1}h_n\prth{k_{b, x_1}}}}  h_n \prth{k_{b, y_1 + qL}} \cdot e^{2i\pi k\frac{y_2 - x_2}{L} + 2i\pi d q \frac{y_2}{L}}
        \end{align*}
        So 
        \begin{align*}
            (\mathscr{P}_{\hbar, b}\Pi_n)(z, z) =  \dfrac{b}{\norm{h_n}_{L^2}^2 L} \sum_{k, q \in \mathbb{Z}} \prth{\matrx{i h_n'\prth{k_{b, x}}\\  k_{b, x}h_n\prth{k_{b, x}}}}  h_n \prth{k_{b, x + qL}} e^{2i\pi d q \frac{y}{L}}
        \end{align*}
        and with \cref{lem:series conv},
        \begin{align*}
            \norm{(\mathscr{P}_{\hbar, b}\Pi_n)(z, z) - \dfrac{L}{2\pi l_b}\cdot \dfrac{b}{\norm{h_n}_{L^2}^2 L}\intr_\mathbb{R}\prth{\matrx{i h_n'(u)\\  u h_n(u)}}  h_n(u)e^{-u^2}du}_{L^\infty} \le C(n) b
        \end{align*}
    \end{proof}

    \begin{proof}[lem:equiv Husimi admissible]
        $m_{\gamma_k}$ is positive because $\forall X \in \mathbb{N}\times \Omega$, $\Pi_X$ and $\gamma_k$ are positive. With \cref{cor:tr pi},
        \begin{align*}
            m_{\gamma_k}(X_{1:k}) \le& \norm{\gamma_k}_{\mathcal{L}^\infty} \prod_{i=1}^k\Tr{\Pi_{X_i}}
            = \norm{\gamma_k}_{\mathcal{L}^\infty} \prth{\dfrac{1}{2\pi l_b^2} + \mathcal{O}\prth{\dfrac{1}{l_b}}}^k
            = \dfrac{\norm{\gamma_k}_{\mathcal{L}^\infty}}{(2\pi l_b^2)^k}(1 + \mathcal{O}(l_b))
        \end{align*}
        Then, with the resolution of identity \cref{eq:nR res id} we have
        \begin{align*}
            \intr_{(\mathbb{N}\times\Omega)^k} m_{\gamma_k} d\eta^{\otimes k} = \Tr{\gamma_k}
        \end{align*}
        Since $\forall X \in \mathbb{N}\times \Omega$, $\Pi_X$ and $m_k$ are positive $\gamma_{m_k}$ is also positive. \cref{eq:nR res id} also implies
        \begin{align*}
            \gamma_{m_k} \le \prth{2\pi l_b^2}^k\norm{m_k}_{L^\infty} \intr_{(\mathbb{N}\times\Omega)^k} \bigotimes_{i=1}^k\Pi_{X_i} d\eta^{\otimes k}(X_{1:k}) = \prth{2\pi l_b^2}^k\norm{m_k}_{L^\infty}
        \end{align*}
        Finally, using \cref{cor:tr pi},
        \begin{align*}
            \Tr{\gamma_{m_k}} &= (2\pi l_b^2)^k \intr_{(\mathbb{N}\times\Omega)^k} m_k(X_{1:k})\Tr{\bigotimes_{i=1}^k\Pi_{X_i}} d\eta^{\otimes k}(X_{1:k}) = \intr_{(\mathbb{N}\times\Omega)^k} m_k d\eta^{\otimes k} + \mathcal{O}(l_b) \\
            &= \norm{m_k}_{L^1} + \mathcal{O}(l_b)
        \end{align*}
        $\Pi_X \in \mathcal{L}^1(L^2(\Omega))$ is positive, thus it can be diagonalized:
        \begin{align*}
            \Pi_X = \sum_{i\in\mathbb{N}} \lambda_{i, X} \ket{\psi_{i, X}}\bra{\psi_{i ,X}} \text{ with } \lambda_{i, X} \ge 0 \text{ and } \sum_{i\in\mathbb{N}}\lambda_{i, X} = \Tr{\Pi_X}
        \end{align*}
        We have
        \begin{align*}
            m_{\gamma_N^{(k)}}(X_{1:k}) 
            =& \sum_{i_{1:k} \in \mathbb{N}^k} \prth{\prod_{j=1}^k \lambda_{{i_j}, X_j}} \Tr{\gamma_N^{(k)} \ket{\bigotimes_{j=1}^k \psi_{{i_j}, X_j}}\bra{\bigotimes_{j=1}^k \psi_{{i_j}, X_j}}} \\
            =& \sum_{i_{1:k} \in \mathbb{N}^k} \prth{\prod_{j=1}^k \lambda_{{i_j}, X_j}} \Tr{\gamma_N \ket{\bigotimes_{j=1}^k \psi_{{i_j}, X_j}}\bra{\bigotimes_{j=1}^k \psi_{{i_j}, X_j}}\otimes \text{Id}_{L^2\prth{\Omega^{N-k}}}} \yesnumber{better mk}
        \end{align*}
        Let $\psi_{1:N} \in L^2(\Omega)$ be an orthonormal family, we claim that
        \begin{align*}
            \ket{\bigotimes_{i=1}^k \psi_i}\bra{\bigotimes_{i=1}^k \psi_i}\otimes \text{Id}_{L^2\prth{\Omega^{N-k}}} \le \dfrac{(N-k)!}{N!} \text{ on } \mathcal{L}^1\prth{L_-^2\prth{\Omega^N}} \yesnumber{better gamma k}
        \end{align*}
        Indeed, if we consider the Slater determinant
        \begin{align*}
            \chi_N \coloneq \dfrac{1}{\sqrt{N!}}\sum_{\sigma \in S_N} \epsilon(\sigma) \bigotimes_{j=1}^N \psi_{\sigma(j)}
        \end{align*}
        then
        \begin{align*}
            &\bk{\chi_N}{\prth{\ket{\bigotimes_{i=1}^k \psi_i}\bra{\bigotimes_{i=1}^k \psi_i}\otimes \text{Id}_{L^2\prth{\Omega^{N-k}}}} \chi_N} \\
            =& \dfrac{1}{N!} \sum_{\sigma, \tau \in S_N} \epsilon(\sigma \tau) \bk{\bigotimes_{i=1}^k \psi_i}{\bigotimes_{i=1}^k \psi_{\tau(i)}} \bk{\bigotimes_{i=1}^N \psi_{\sigma(i)}}{\prth{\bigotimes_{i=1}^k \psi_i} \otimes \bigotimes_{i=k+1}^N \psi_{\tau(i)}} \\
            =& \dfrac{1}{N!} \sum_{\sigma, \tau \in S_N} \epsilon(\sigma \tau) \prth{\prod_{i=1}^k \delta_{\sigma(i), i} \delta_{\tau(i), i}} \prod_{i=k+1}^N \delta_{\sigma(i), \tau(i)}
            = \dfrac{1}{N!} \sum_{\sigma \in S_N} \prod_{i=1}^k \delta_{\sigma(i), i} 
            = \dfrac{(N-k)!}{N!}
        \end{align*}
        If the Slater determinant does not contain the $\psi_{1:k}$ then the result of this computation is $0$, thus we obtain \cref{eq:better gamma k}. Then with \cref{eq:better mk} and \cref{cor:tr pi},
        \begin{align*}
            m_{\gamma_N^{(k)}}(X_{1:k}) 
            &\le \dfrac{(N-k)!}{N!} \sum_{i_{1:k} \in \mathbb{N}^k} \prth{\prod_{j=1}^k \lambda_{{i_j}, X_j}} \Tr{\gamma_N}
            = \dfrac{(N-k)!}{N!} \Tr{\gamma_N} \prod_{j=1}^k \Tr{\Pi_{X_j}} \\
            &= \dfrac{(N-k)!}{\prth{2\pi l_b^2}^k N!}\Tr{\gamma_N}\prth{1 + \mathcal{O}\prth{l_b}}
        \end{align*}
    \end{proof}

    \begin{proof}[prty:sym measure]
    Let $k > q \ge 1$ and $X_{1:q} \in (\mathbb{N}\times\Omega)^{q}$. Recalling the results of \cref{ssec:red densities}, we prove that the $N$-body Husimi functions are consistent marginals using \cref{eq:nR res id}:
        \begin{align*}
            \intr_{\prth{\mathbb{N}\times\Omega}^{k-q}} m_{\gamma_N}^{(k)}(X_{1:k}) d\eta(X_{q+1:k})
            &= \Tr{\intr_{\prth{\mathbb{N}\times\Omega}^{k-q}} \gamma_N^{(k)}\bigotimes_{i=1}^k\Pi_{X_i}d\eta(X_{q+1:k})} \\
            &= \Tr{\gamma_N^{(k)} \bigotimes_{i=1}^q \Pi_{X_i} \otimes \text{Id}_{\mathbb{N}\times\Omega}^{\otimes(k-q)}}
            = \Tr{\gamma_N^{(q)} \bigotimes_{i=1}^q \Pi_{X_i}} \\
            &= m_{\gamma_N}^{(q)}(X_{1:q}) \yesnumber{coherent Husimi}
        \end{align*}
        The symmetry of the Husimi measures follows from the symmetry of the reduced density matrices and \cref{eq:Husimi k bound} and \cref{eq:Husimi norm} follow from
        \begin{align*}
            \Tr{\gamma^{(k)}_N} =1
        \end{align*}
        and \cref{lem:equiv Husimi admissible}.
        
        For the last point we perform a straightforward computation:
        \begin{align*}
            \sum_{n_{1:k}\ge 0} m_{\gamma_N}^{(k)}\prth{n_{1:k};R_{1:k}}  &= \text{Tr} \sbra{\gamma_N^{(k)} \sum_{n_{1:k}\ge 0}\bigotimes_{i=1}^k\Pi_{n_i, R_i}}
            = \intr_{\Omega^k} \gamma_N^{(k)}(x_{1:k}, x_{1:k}) \prod_{i=1}^k g_\lambda(x_i - R_i)^2 dx_{1:k} \\
            &= (g^2_\lambda)^{\otimes k} * \rho_{\gamma_N}^{(k)} \prth{R_{1:k}}
        \end{align*}
    \end{proof}

    \begin{proof}[lem:corrected Husimi]
        First, by \cref{cor:tr pi}, $\exists \mathcal{E}:\mathbb{N}\times \Omega \to \mathbb{R}$ such that
        \begin{align*}
            &2\pi l_b^2 \Tr{\Pi_X} = 1 + l_b  \mathcal{E}(X) \\
            &\abs{\mathcal{E}(n , R)} \le C(n)
        \end{align*}
        So
        \begin{align*}
            \Tr{\gamma_m} = \intr_{\mathbb{N}\times\Omega} m(X)\prth{1 + l_b \mathcal{E}(X)}d\eta(X)
        \end{align*}
        If $\Tr{\gamma_m} = 1$ then $m$ has the desired properties. If $\Tr{\gamma_m} < 1$ we add some mass to $m$ where it is possible without breaking the Pauli principle. Let $n_1 \in \mathbb{N}$ and
        \begin{align*}
            0 \le \tau \le \dfrac{1}{2\pi l_b^2 N}
        \end{align*}
        we define
        \begin{align*}
            \widetilde{m}(\tau, n_1) \coloneq m + \text{min}\prth{\tau, \dfrac{1}{2\pi l_b^2 N} - m}\mathbb{1}_{n \le n_1}
        \end{align*}
        By construction
        \begin{align*}
            0 \le m \le \widetilde{m}(\tau, n_1) \le \dfrac{1}{2\pi l_b^2 N} \text{ and } \tau \mathbb{1}_{n \le n_1} \le \widetilde{m}(\tau, n_1) \yesnumber{m tild bounded 1}
        \end{align*}
        We choose $n_1 > n_0$ and remark that
        \begin{align*}
            \Tr{\gamma_{\widetilde{m}}}\prth{\dfrac{1}{2\pi l_B^2 N}, n_1} = \dfrac{1}{2\pi l_B^2 N} \intr_{\mathbb{N}\times\Omega} \mathbb{1}_{n \le n_1} \prth{1 + l_b \mathcal{E}(X)}d\eta(X)
            \ge \dfrac{L^2}{2\pi l_b^2 N} n_1 - l_b C(n_1)
        \end{align*}
        Since $\exists n_1 \in \mathbb{N}$ such that 
        \begin{align*}
            \dfrac{L^2}{2\pi l_b^2 N} n_1 > 1
        \end{align*}
        for large enough $N$, 
        \begin{align*}
            \Tr{\gamma_{\widetilde{m}}}\prth{\dfrac{1}{2\pi l_B^2 N}, n_1} > 1
        \end{align*}
        and
        \begin{align*}
            \Tr{\gamma_{\widetilde{m}}}(0, n_1) = \Tr{\gamma_m} < 1
        \end{align*}
        and $\Tr{\gamma_{\widetilde{m}}}$ is Lipschitz in $\tau$, so by the intermediate value theorem $\exists \tau \ge 0$ such that
        \begin{align*}
            \text{with } \widetilde{m} \coloneq \widetilde{m}(\tau, n_1), \Tr{\gamma_{\widetilde{m}}} &= \intr_{\mathbb{N}\times\Omega} \widetilde{m}(X)\prth{1 + l_b \mathcal{E}(X)}d\eta(X) = 1
        \end{align*}
        Thus we can estimate
        \begin{align*}
            \sum_{n \le n_1} \intr_\Omega \text{min}\prth{\tau, \dfrac{1}{2\pi l_b^2 N} - m(n, x)}dx 
            &= \intr_{\mathbb{N}\times \Omega}\prth{\widetilde{m} - m}d\eta 
            = 1 - l_b \intr_{\mathbb{N}\times\Omega} \widetilde{m}\mathcal{E}d\eta - \intr_{\mathbb{N}\times \Omega}m d\eta \\
            &= \mathcal{O}(l_b) + \mathcal{O}\prth{1 - \norm{m}_{L^1}}
        \end{align*}
        so
        \begin{align*}
            \tau = \dfrac{1}{L^2} \intr_\Omega \text{min}\prth{\tau, \dfrac{1}{2\pi l_b^2 N} - m(n_1, x)}dx \le \mathcal{O}(l_b) + \mathcal{O}\prth{1 - \norm{m}_{L^1}}\yesnumber{tau 1}
        \end{align*}
        Now if $\Tr{\gamma_m} > 1$ we remove some mass to $m$:
        \begin{align*}
            \widetilde{m}(\tau) \coloneq \text{max}\prth{0, m - \tau} = m - \text{min}(m, \tau)
        \end{align*}
        by construction
        \begin{align*}
            0\le \widetilde{m} \le m \le \dfrac{1}{2\pi l_b^2 N} \yesnumber{m tild bounded 2}
        \end{align*}
        We see that
        \begin{align*}
            \Tr{\gamma_{\widetilde{m}}}(0) = \Tr{\gamma_m} > 1 \text{ and } \Tr{\gamma_{\widetilde{m}}}\prth{\dfrac{1}{2\pi l_b^2 N}} = 0
        \end{align*}
        so $\exists \tau \ge 0$ such that 
        \begin{align*}
            \text{with }\widetilde{m} \coloneq \widetilde{m}(\tau), 
            \Tr{\gamma_{\widetilde{m}}} = \intr_{\mathbb{N}\times\Omega}\widetilde{m}(X)\prth{1+l_b\mathcal{E}(X)}d\eta(X) = 1
        \end{align*}
        and like before,
        \begin{align*}
            \intr_{\mathbb{N}\times\Omega}\text{min}(m, \tau)d\eta 
            &= \intr_{\mathbb{N}\times\Omega}\prth{m - \widetilde{m}}d\eta 
            = \norm{m}_{L^1} - 1 + l_b \intr_{\mathbb{N}\times\Omega} \widetilde{m}\mathcal{E}d\eta 
            = \mathcal{O}(l_b) + \mathcal{O}\prth{1 - \norm{m}_{L^1}} \\
            &= \intr_{m < \tau}md\eta + \intr_{\tau \le m}\tau d\eta = \norm{m}_{L^1} + \intr_{\tau \le m}(\tau - m) d\eta
        \end{align*}
        So
        \begin{align*}
           \norm{m}_{L^1} + \mathcal{O}(l_b) +  \mathcal{O}\prth{1 - \norm{m}_{L^1}} =  \intr_{\tau \le m}(m - \tau) d\eta \le \dfrac{1}{\pi l_b^2 N}\abs{\mathbb{1}_{\tau \le m}}
        \end{align*}
        and
        \begin{align*}
            \tau \le& \dfrac{1}{\abs{\mathbb{1}_{\tau \le m}}} \intr_{\mathbb{N}\times\Omega}\text{min}(m, \tau)d\eta 
            = \dfrac{1}{\abs{\mathbb{1}_{\tau \le m}}}\prth{\mathcal{O}(l_b) +  \mathcal{O}\prth{1 - \norm{m}_{L^1}}} \\
            \le& \dfrac{1}{\pi l_b^2 N} \cdot \dfrac{\mathcal{O}(l_b) +  \mathcal{O}\prth{1 - \norm{m}_{L^1}}}{\norm{m}_{L^1} + \mathcal{O}(l_b) +  \mathcal{O}\prth{1 - \norm{m}_{L^1}}} = \mathcal{O}(l_b) +  \mathcal{O}\prth{1 - \norm{m}_{L^1}} \yesnumber{tau 2}
        \end{align*}
        With inequalities \cref{eq:tau 1} and \cref{eq:tau 2} we know that
        \begin{align*}
            \norm{m - \widetilde{m}}_{L^\infty} = \mathcal{O}(l_b) + \mathcal{O}\prth{1 - \norm{m}_{L^1}} \yesnumber{m correction}
        \end{align*}
        Finally we prove the estimate on semi-classical energies \cref{eq:SC approx m tilde}:
        \begin{align*}
            \abs{\mathcal{E}_{sc, \hbar b}\sbra{\widetilde{m}} - \mathcal{E}_{sc, \hbar b}\sbra{m}} 
            \le& \sum_{n=0}^{n_1} E_n \intr_{\Omega} \abs{\widetilde{m}(n, \bullet) - m(n, \bullet)} + \sum_{n=0}^{n_1} \intr_{\Omega} \abs{V} \abs{\widetilde{m}(n, \bullet) - m(n, \bullet)} \\
            &+ \sum_{n, \widetilde{n}=0}^{n_1} \intr_{\Omega^2} \abs{w(x-y)} \abs{\widetilde{m}(n, x)\widetilde{m}(\widetilde{n}, y) - m(n, x)m(\widetilde{n}, y)}dxdy \\
            \le& L^2 \sum_{n=0}^{n_1} E_n \norm{m - \widetilde{m}}_{L^\infty} + \prth{n_1 + 1} \norm{V}_{L^1} \norm{m - \widetilde{m}}_{L^\infty} \\
            &+ L^2 \norm{w}_{L^1} \sum_{n, \widetilde{n}=0}^{n_1}  \norm{\widetilde{m}(n, \bullet)\widetilde{m}(\widetilde{n}, \bullet) - m(n, \bullet)m(\widetilde{n}, \bullet)}_{L^\infty}
        \end{align*}
        Moreover
        \begin{align*}
            \norm{\widetilde{m}(n, \bullet)\widetilde{m}(\widetilde{n}, \bullet) - m(n, \bullet)m(\widetilde{n}, \bullet)}_{L^\infty} 
            \le& \norm{\widetilde{m}(n, \bullet)}_{L^\infty} \norm{\widetilde{m}(\widetilde{n}, \bullet) - m(\widetilde{n}, \bullet)}_{L^\infty} \\
            &+ \norm{m(\widetilde{n}, \bullet)}_{L^\infty}\norm{\widetilde{m}(n, \bullet) - m(n, \bullet)}_{L^\infty} \\
            \le& \norm{\widetilde{m}}_{L^\infty} \norm{\widetilde{m} - m}_{L^\infty}  + \norm{m}_{L^\infty}\norm{\widetilde{m} - m}_{L^\infty} \\
        \end{align*}
        so with \cref{eq:m tild bounded 1} and \cref{eq:m tild bounded 2}
        \begin{align*}
            \abs{\mathcal{E}_{sc, \hbar b}\sbra{\widetilde{m}} - \mathcal{E}_{sc, \hbar b}\sbra{m}}  
            \le& \prth{L^2 \sum_{n=0}^{n_1} E_n  + \prth{n_1 + 1} \norm{V}_{L^1} + \dfrac{L^2}{\pi l_b^2 N} \norm{w}_{L^1}\prth{n_1 + 1}^2}  \\
            &\cdot \norm{m - \widetilde{m}}_{L^\infty}
        \end{align*}
        We conclude with \cref{eq:m correction}.
    \end{proof}

    \section*{Acknowledgements}
    \addcontentsline{toc}{section}{Acknowledgements}
    
        I would like to thank my PHD advisor Nicolas Rougerie for discussions, ideas and help that made this work possible. Funding from the European Research Council (ERC) for the project \href{https://cordis.europa.eu/project/id/758620}{Correlated frontiers of many-body quantum mathematics and condensed matter physics (CORFRONMAT No 758620)} is gratefully acknowledged.

    %Bibliography
    \addcontentsline{toc}{section}{Bibliography}
    \printbibliography

@article
{AS,
    Label= "AftalionSerfaty",
    author = "A.Aftalion S.Serfaty",
    title = "Lowest Landau level approach in superconductivity for the Abrikosov lattice close to HC2",
    journal = "Selecta Mathematica",
    year = "2007",
    DOI = "https://doi.org/10.1007/s00029-007-0043-7"
}

@article
{Almog,
    Label= "Almog",
    author = "Y.Almog",
    title = "Abrikosov Lattices in Finite Domains",
    journal = "Communications in Mathematical Physics",
    year = "2006",
    DOI = "https://doi.org/10.1007/s00220-005-1463-x"
}

@inbook
{Bach,
    Label= "Bach",
    author = "V.Bach",
    title = "Hartree–Fock Theory, Lieb’s Variational Principle and their Generalizations",
    year = "2022",
    publisher = "EMS press",
    DOI = "https://doi.org/10.4171/90-1/3"
}

@book
{Chandrasekharan,
    Label= "Chandrasekharan",
    author = "K.Chandrasekharan",
    title = "Elliptic Functions",
    publisher = "Springer",
    year = "1985",
    DOI = "https://doi.org/10.1007/978-3-642-52244-4"
}

@book
{CR,
    Label= "CheverryRaymond",
    author = "C.Cheverry N.Raymond",
    title = "A guide to spectral theory applications and exercises",
    publisher = "Birkhäuser Cham",
    year = "2021",
    DOI= "https://doi.org/10.1007/978-3-030-67462-5"
}

@article
{FLS,
    Label= "FournaisLewinSolovej",
    author = "S.Fournais M.Lewin J-P.Solovej",
    title = "The semi-classical limit of large fermionic systems",
    year = "2018",
    journal = "Calculus of Variations and Partial Differential Equations",
    DOI = "https://doi.org/10.1007/s00526-018-1374-2"
}

@article
{FM,
    Label= "FournaisMadsen",
    author = "S.Fournais P.Madsen",
    title = "Semi-classical limit of confined fermionic systems in homogeneous magnetic fields",
    year = "2020",
    journal = "Annales Henri Poincaré",
    DOI = "https://doi.org/10.1007/s00023-019-00880-6"
}

@book
{Jain,
    Label= "Jain",
    author = "J.K.Jain",
    title = "Composite fermions",
    journal = "Cambridge University Press",
    year = "2009",
    DOI = "https://doi.org/10.1017/CBO9780511607561",
    ISBN = "9780511607561"
}

@article
{Lieb,
    Label= "Lieb",
    author = "E.H.Lieb",
    title = "Variational Principle for Many-Fermion Systems",
    journal = "Physical Review Letters",
    year = "1981",
    DOI = "https://doi.org/10.1103/PhysRevLett.46.457"
}

@book
{LS08,
    Label= "LiebSeiringer",
    author = "E.H.Lieb and R.Seiringer",
    title = "The stability of matter in quantum mechanics",
    publisher = "Cambridge University Press",
    year = "2010",
    DOI = "https://doi.org/10.1017/CBO9780511819681"
}

@article
{LS94,
    Label= "LiebSolovej",
    author = "E.H.Lieb J-P.Solovej",
    title = "Quantum Dots",
    year = "1994",
    URL = "https://arxiv.org/pdf/cond-mat/9404099.pdf"
}

@article
{LS91,
    Label= "LiebSolovej",
    author = "E.H.Lieb J-P.Solovej",
    title = "Quantum Coherent Operators: A Generalisation of Coherent States",
    journal = "Letters in Mathematical Physics",
    year = "1991",
    DOI = "https://doi.org/10.1007/BF00405179"
}

@article
{LSY92,
    Label= "LiebSolovejYngvason",
    author = "E.H.Lieb J-P.Solovej J.Yngvason",
    title = "Heavy atoms in the strong magnetic field of a neutron star",
    journal = "Physical Review Letters",
    year = "1992",
    DOI = "https://doi.org/10.1103/PhysRevLett.69.749"
}

@article
{LSY94,
    Label= "LiebSolovejYngvason",
    author = "E.H.Lieb J-P.Solovej J.Yngvason",
    title = "Asymptotics of Heavy Atoms in High Magnetic Fields I. Lowest Landau Band Region",
    journal = "Communications on Pure and Applied Mathematics",
    year = "1994",
    DOI = "https://doi.org/10.1002/cpa.3160470406"
}

@article
{LSY94II,
    Label= "LiebSolovejYngvason",
    author = "E.H.Lieb J-P.Solovej J.Yngvason",
    title = "Asymptotics of Heavy Atoms in High Magnetic Fields II. Semi-classical Regions",
    journal = "Communications in Mathematical Physics",
    year = "1994",
    DOI = "https://doi.org/10.1007/BF02099414"
}

@article
{LSY95,
    Label= "LiebSolovejYngvason",
    author = "E.H.Lieb J-P.Solovej J.Yngvason",
    title = "Ground states of large quantum dots in magnetic fields",
    journal = "Physical review B",
    year = "1995",
    DOI = "https://doi.org/10.1007/978-3-662-03436-1_15"
}

@book
{Rougerie,
    Label= "Rougerie",
	title = "Théorèmes de de Finetti, limites de champ moyen et condensation de Bose-Einstein",
	author = "N.Rougerie",
	year = "2016",
	publisher = "Spartacus-Idh",
	url = "https://spartacus-idh.com/liseuse/012/"
}

@online
{Rougerie20,
    Label= "Rougerie",
    author = "N Rougerie",
    title= "Scaling limits of bosonic ground states from many-body to nonlinear Schrödinger",
    year = "2020",
    URL = "https://arxiv.org/pdf/2002.02678.pdf"
}

@article
{RY,
    Label= "RougerieYngvason",
    author = "N.Rougerie J.Yngvason",
    title = "Holomorphic quantum hall states in higher landau levels",
    journal = "Journal of Mathematical Physics",
    year = "2020",
    DOI = "https://doi.org/10.1063/5.0004111"
}

@book
{RS,
    Label= "ReedSimon",
    author = "M.Reed B.Simon",
    title = "Methods of modern mathematical physics II: Fourier analysis self adjointness",
    publisher = "A press INC",
    year = "1975",
    url = "https://www.elsevier.com/books/ii-fourier-analysis-self-adjointness/reed/978-0-08-092537-0"
}

@article
{Yngvason,
    Label= "Yngvason",
    author = "J.Yngvason",
    title = "Thomas-Fermi Theory for Matter in a Magnetic Field as a Limit of Quantum Mechanics",
    journal = "Letters in Mathematical Physics",
    year = "1991",
    DOI= "https://doi.org/10.1007/BF00405174"
}

@book
{Villani,
    Label= "Villani",
    author = "C.Villani",
    title = "Topics in Optimal Transportation",
    publisher = "American Mathematical Society",
    year = "2003",
    url= "https://www.math.ucla.edu/~wgangbo/Cedric-Villani.pdf"
}

@book
{Kechris,
    Label= "Kechris",
    author = "A.S.Kechris",
    title = "Classical Descriptive Set Theory",
    publisher = "Springer",
    year = "1995",
    DOI= "https://doi.org/10.1007/978-1-4612-4190-4"
}

@article{FLLS13,
    Label= "Frank Lewin Lieb Seiringer",
    author = "R.L. Frank M. Lewin, E.H. Lieb R. Seiringer",
    title = "A positive density analogue of the Lieb–Thirring inequality",
    journal = "Duke Mathematical Journal",
    year = "2013",
    doi = "10.1215/00127094-2019477"
}

\end{document}